\numberwithin{equation}{section}
\theoremstyle{plain}
\newtheorem{proposition}{Proposition}[section]
\newtheorem{corollary}[proposition]{Corollary}
\newtheorem{lemma}[proposition]{Lemma}
\newtheorem{theorem}[proposition]{Theorem}
\theoremstyle{definition}
\newtheorem{definition}[proposition]{Definition}
\newtheorem{example}[proposition]{Example}
\newtheorem{remark}[proposition]{Remark}
\tikzset{
  source/.style={circle,draw=black!100,fill=black!50,inner sep = 0,minimum size=2mm},
  sink/.style={circle,draw=black!100,fill=white,inner sep = 0,minimum size=2mm}
}
\DeclareMathOperator{\sgn}{sgn}
\newcommand{\ds}{\displaystyle}
\newcommand{\R}{\mathbf{R}}
\newcommand{\C}{\mathbf{C}}
\newcommand{\Oh}{\mathcal{O}}
\newcommand{\smallfrac}[2]{{\textstyle\frac{#1}{#2}}}
\newcommand{\abs}[1]{\left\lvert #1 \right\rvert}
\newcommand{\avg}[1]{\bigl\langle #1 \bigr\rangle}
\begin{document}

\title{Peakon-antipeakon interactions in the Degasperis-Procesi Equation}
\author{
  Jacek Szmigielski\thanks{Department of Mathematics and Statistics, University of Saskatchewan, 106 Wiggins Road, Saskatoon, Saskatchewan, S7N 5E6, Canada; szmigiel@math.usask.ca}
  \and  Lingjun Zhou\thanks{Department of Mathematics, Tongji University, Shanghai, P.R. China; zhoulj@tongji.edu.cn}}

\date{\today}
\maketitle
\begin{abstract}
Peakons are singular, soliton-like solutions to nonlinear wave equations
whose dynamics can be studied using ordinary differential equations (ODEs).
The Degasperis-Procesi equation (DP) is an important example of an integrable PDE
exhibiting wave breaking in the peakon sector thus affording an interpretation
of wave breaking as a mechanical collision of particles.
In this paper we set up a general formalism in which to study collisions of DP peakons and apply it, as an illustration, to a detailed study of
three colliding peakons.  It is shown that peakons can collide only in pairs, no triple collisions are allowed and at the
collision a shockpeakon is created.
We also show that the initial configuration of peakon-antipeakon pairs is nontrivially correlated with the spectral properties of an accompanying non-selfadjoint boundary value problem.
In particular if peakons or antipeakons are bunched up on one side relative
to a remaining antipeakon or peakon then the spectrum is real and simple.  Even though the spectrum is in general complex the existence of a global solution in  either time direction dynamics is shown to imply the reality of the spectrum of the
boundary value problem.
\end{abstract}

\section{Introduction}
\vspace{1 cm}
The prototypical example of PDEs admitting peaked solitons is the family
\begin{equation}
  \label{eq:b-family}
  u_t - u_{xxt} + (b+1) u u_x = b u_x u_{xx} + u u_{xxx},
\end{equation}
often written as
\begin{equation}
  \label{eq:b-family-m}
  m_t + m_x u + b m u_x = 0, \qquad m = u - u_{xx},
\end{equation}
which was introduced by Degasperis, Holm and Hone \cite{degasperis-holm-hone},
and shown to be Hamiltonian for all values of $b$ \cite{holm-hone}.
The most studied cases are the Camassa--Holm (CH) equation ($b=2$),
and the
Degasperis--Procesi (DP) equation \cite{degasperis-procesi,degasperis-holm-hone}
($b=3$).
For $b>0$ these are the only values of $b$ for which the equation is integrable,
according to a variety of integrability tests
\cite{degasperis-procesi,mikhailov-novikov-perturbative,hone-wang-prolongation-algebras,ivanov-integrability-test}.
The case $b=0$ is relevant for another reason; this case provides a regularization
of the inviscid Burgers equation that is Hamiltonian and has classical
solutions globally in time \cite{bhat-bzero}.  The $b$-family admits an appealing
geometric interpretation
as Euler-Arnold equations on the space of densities $m(x)dx^b$ for the group of
orientation-preserving diffeormorphisms $\mathrm{Diff}(S^1)$ \cite{lenells-misiolek-tiglay}.

In order to discuss peakon solutions one needs to develop the concept
of weak solutions.  Because of the role that the Lax pair formalism plays in the theory we will define weak solutions in such a way that
the PDE in question is the
compatibility condition of weak Lax pairs.  This prompts the $x$ member of the Lax
pair to be viewed as an ODE with distribution coefficients while
the $t$ equation of the Lax pair is viewed as a (isospectral) deformation of the
former.  We subsequently need to rewrite the PDE itself as a distribution equation.  To this end we observe that the formulation \eqref{eq:b-family-m} suffers from the problem that
the product $mu_x$ is ill-defined already in the case of continuous, piecewise smooth
(in $x$) $u(x,t)$, since the quantity
$m=u-u_{xx}$ is a measure with a non-empty
singular support at the points of non-smoothness.  To make matters worse that measure is in addition
multiplied by the function $u_x$ which has jump discontinuities
exactly at those points.  This problem can be resolved easily; one instead rewrites \eqref{eq:b-family} as
\begin{equation}
  \label{eq:b-family-alt}
  (1-\partial_x^2) u_t
  + (b+1-\partial_x^2) \, \partial_x \left( \smallfrac12 \, u^2 \right)
  + \partial_x \left( \smallfrac{3-b}{2} \, u_x^2 \right)
  = 0.
\end{equation}
The case $b=3$ is of particular interest to us.  Then the term $u_x^2$ is absent from equation \eqref{eq:b-family-alt} and in that particular case one requires only that
$u(\cdot,t) \in L^2_{\mathrm{loc}}(\R)$; this means that the
DP equation can admit solutions $u$ that are not continuous
\cite{coclite-karlsen-DPwellposedness,coclite-karlsen-DPuniqueness,lundmark-shockpeakons}.

\emph{Multipeakons} are weak solutions of the form
\begin{equation}
  \label{eq:peakon-ansatz}
  u(x,t) = \sum_{i=1}^n m_i(t) \, e^{-\abs{x-x_i(t)}},
\end{equation}
formed through superposition of $n$ \emph{peakons}
(peaked solitons of the shape $e^{-\abs{x}}$).
This ansatz satisfies the PDE \eqref{eq:b-family-alt} if and only if the
positions $(x_1,\dots,x_n)$ and momenta $(m_1,\dots,m_n)$ of the
peakons obey the following system of $2n$ ODEs:
\begin{equation}
  \label{eq:b-peakon-ode}
  \dot x_k = \sum_{i=1}^n m_i \, e^{-\abs{x_k-x_i}},
  \qquad
  \dot m_k = (b-1) \, m_k \sum_{i=1}^n m_i \, \sgn(x_k-x_i) \, e^{-\abs{x_k-x_i}}.
\end{equation}
Here, $\sgn x$ denotes the signum function, which is $+1$, $-1$ or $0$
depending on whether $x$ is positive, negative or zero.
In shorthand notation,
with $\avg{f(x)}$ denoting the average of the left and right limits,
\begin{equation}
  \label{eq:average-notation}
  \avg{f(x)} = \frac12 \bigl( f(x^-)+f(x^+) \bigr),
\end{equation}
the ODEs can be written as
\begin{equation}
  \label{eq:b-ode-short}
  \dot x_k = u(x_k),
  \qquad
  \dot m_k = -(b-1) \, m_k \, \avg{u_x(x_k)}.
\end{equation}
In the CH case ($b=2$) this is a canonical Hamiltonian system
generated by $h=\frac12 \sum_{j,k=1}^n m_j \, m_k \, e^{-\abs{x_j-x_k}}$,
for which $x_j$s and $m_j$s are canonical positions and momenta.
In the DP case ($b=3$) this is a non-canonical Hamiltonian system
with the Hamiltonian $H=\sum_{j=1} ^n m_j$ and a non-canonical Poisson structure
given in \cite{degasperis-holm-hone2}.

It is important to distinguish the case of {\sl pure peakons} (initial $m_j(0)>0$)
or {\sl pure anti-peakons} (initial $m_j(0)<0$) from
a general case of {\sl multipeakons} (no restriction on the signs of $m_j(0)$).
Pure peakons have peaks,
pure anti-peakons have troughs while multipeakons contain both peaks and troughs.

The relevance of multipeakon solutions is that they provide a concrete model for wave breaking \cite{bss-peakon-antipeakon, bss-moment}.  For more information on the wave breaking
phenomenon for this class of wave equations the reader is referred to
\cite{McKean-breakdown} and \cite{constantin-escher, liu1, liu2, liu-escher}.
In the CH
case the distinction between pure peakons or anti-peakons and multipeakons does not result in a serious departure from the
inverse spectral formulas for pure peakons.
Indeed, explicit formulas for the $n$-peakon solution of the CH equation
were derived by Beals, Sattinger and Szmigielski \cite{bss-stieltjes} and then extended to $n$-multipeakons in \cite{bss-moment, beals-sattinger-szmigielski-toda}
using inverse spectral methods and the theory of orthogonal polynomials.  The situation for the DP equation
is considerably different.  The analysis of pure peakon solutions for the DP
equation
was accomplished by Lundmark and Szmigielski
\cite{lundmark-szmigielski-DPshort,ls-cubicstring} using inverse spectral methods and M.G. Krein's theory of oscillatory kernels
\cite{gantmacher-krein}.  In short, in these papers, it was shown that in the DP case, when working with pure peakon or pure anti-peakon solutions,
the concept of {\sl total positivity} plays a fundamental role, for example, implying that the spectrum involved is positive and simple.  For this reason going beyond
the pure peakon sector  of the DP will not be as straightforward as in the CH case which remains self-adjoint in the whole multipeakon sector.  The DP spectral problem, by contrast, is
manifestly non-selfadjoint.  Yet,
in addition to a general interest in modelling the wave breaking mechanism, there is another
reason for studying multipeakon solutions of the DP equation: in \cite{lundmark-shockpeakons}
Lundmark introduced  a new type of solution, {\sl  a shockpeakon solution}, which he showed for the case  $n=2$ gives a unique entropy weak solution originating from the peakon-antipeakon solution. Therefore, multipeakon solutions
can also provide us with an additional  insight into the onset of shocks.

The paper is organized as follows.  In Section \ref{sec:Fundamentals}, we set up
the formalism for an arbitrary number $n$ of multipeakons and elaborate on
general forms of peakons with index $1$ and $n$.  We emphasize the role of the
boundary value problem and its adjoint, both associated with the $x$-member of the Lax operator.
In Section \ref{sec:3peakons}, we undertake a detailed study of three multipeakons.
We establish the analytic character of solutions in Lemma \ref{lem:analm1m3} and describe the main properties of colliding pairs, culminating in Theorem \ref{thm:shockcreation} describing the creation of a shockpeakon at the collision.  Section \ref{sec:3peakspectral}
is devoted to analysis of the spectrum of the boundary value problem, in
particular we prove a signature-type Lemma \ref{lem:signeigen} relating the signs
of masses $m_j(0)$ of colliding peakons to the real parts of eigenvalues. Finally,
in Section \ref{sec:3peakclass} we classify different asymptotic in $t$ behaviour
of three multipeakons in terms of the sign configurations of the initial masses $m_j(0)$.

\vspace{1 cm}
\section{Lax pair and the multipeakon spectral problem}\label{sec:Fundamentals}
It was shown in \cite{degasperis-holm-hone} that the DP equation admits
the Lax pair:
\begin{equation}\label{eq:Lax}
(\partial_x-\partial_{xxx})\Psi=zm\Psi, \qquad
\Psi_t=[z^{-1}(1-\partial^2_x)+u_x-u\partial_x]\Psi.
\end{equation}
In particular if $u$ is given by the multipeakon ansatz \eqref{eq:peakon-ansatz},
\begin{equation}\label{eq:m}
m=2\sum_{i=1}^n m_i \delta_{x_i},
\end{equation}
and equations \eqref{eq:b-ode-short} for $b=3$ follow from the (distributional) compatibility of equations \eqref{eq:Lax}.
The boundary conditions consistent with the asymptotic behaviour of $\Psi$
read:
\begin{equation}\label{eq:asymptotics}
\Psi\sim e^x,  \text{ as } x\rightarrow -\infty, \qquad  \Psi \text{ is bounded as   } x\rightarrow +\infty.
\end{equation}
To see how the implementation of these conditions leads to an isospectral
problem we will trace back the most important steps in analysis of Lax
pair for peakons.  For more details the reader is referred to \cite{ls-cubicstring}.  We start in the region $x<x_1$ lying outside of the support of the
discrete measure $m$.  There, the first equation in the Lax pair can easily be solved and the boundary condition implemented by
$\Psi(x)=e^x.$
When $x_k<x<x_{k+1}, $ we have
\begin{equation}\label{eq:AkBkCk}
\Psi(x)=A_k(z)e^x+B_k(z)+C_k(z)e^{-x}, \qquad 1\leq k\leq n.
\end{equation} The coefficients $A_k(z),B_k(z),C_k(z)$ are polynomials of degree $k$ in $z$ given by
{\small
\begin{equation} \label{eq:kABC} \left(\begin{array}{c}A_k(z)\\ B_k(z)\\ C_k(z) \end{array}\right)=\left(\begin{array}{c}1\\ 0\\ 0 \end{array}\right)+\sum_{p=1}^k\left[\sum_{I\in {{[1,k]}\choose p}}\left(\prod_{i\in I}m_i\right)\left(\prod_{j=1}^{p-1}(1-e^{x_{i_j}-x_{i_{j+1}}})^2\right)\left(\begin{array}{c}1\\ -2e^{x_{i_p}}\\e^{2x_{i_p}}  \end{array}\right) \right](-z)^p, \end{equation} }
  where $\binom{[1,n]}{p}$ is the set of all $p$-element subsets
  $I=\left\{ i_1 < \dots < i_p \right\}$ of $\left\{ 1,\dots,n \right\}$.

For $x>x_n$ we will drop the subscript $n$, thus
\begin{equation}\label{eq:ABC}
\Psi(x)=A(z)e^x+B(z)+C(z)e^{-x}.
\end{equation}

 In the case of interest for us, namely $n=3$, the coefficients $A,B,C$ can
 be written explicitely
 \begin{subequations}\label{eq:3ABC}
\begin{align}
&A(z)=1-\big[m_1+m_2+m_3\big]z\nonumber\\
&+\big[m_1m_2(1-e^{x_1-x_2})^2
+m_2m_3(1-e^{x_2-x_3})^2+m_1m_3(1-e^{x_1-x_3})^2\big]z^2\nonumber\\
&-\big[m_1m_2m_3(1-e^{x_1-x_2})^2(1-e^{x_2-x_3})^2\big] z^3\\
&B(z)=2\big[m_1e^{x_1}+m_2e^{x_2}+m_3e^{x_3}\big]z\nonumber\\
&-2\big[m_1m_2(1-e^{x_1-x_2})^2e^{x_2}
+m_2m_3(1-e^{x_2-x_3})^2e^{x_3}+m_1m_3(1-e^{x_1-x_3})^2e^{x_3}\big]z^2\nonumber\\
&+2\big[m_1m_2m_3(1-e^{x_1-x_2})^2(1-e^{x_2-x_3})^2 e^{x_3}\big]z^3,\\
&C(z)=-\big[m_1e^{2x_1}+m_2e^{2x_2}+m_3e^{2x_3}\big]z\nonumber\\
&\big[m_1m_2(1-e^{x_1-x_2})^2e^{2x_2}
+m_2m_3(1-e^{x_2-x_3})^2e^{2x_3}+m_1m_3(1-e^{x_1-x_3})^2e^{2x_3}\big]z^2\nonumber\\
&-\big[m_1m_2m_3(1-e^{x_1-x_2})^2(1-e^{x_2-x_3})^2 e^{2x_3}\big]z^3,
\end{align}
\end{subequations}

The $t$ evolution of $A,B,C$  can easily be inferred from the second equation of the
Lax pair \eqref{eq:Lax}.  One obtains:
\begin{equation}\label{eq:t-spectral}
\dot A=0,  \quad   \dot B=\frac{B}{z}-2A M_+,  \quad \dot C=-BM_+,
\text{ where }  M_+=\sum_{i=1}^n m_i e^{x_i}.   \end{equation}
We therefore see that the asymptotic conditions \eqref{eq:asymptotics}
can be implemented by requiring $A(z)=0$,
and that this condition is preserved under the time flow \eqref{eq:t-spectral},
implying that the peakon equations with $b=3$ describe an isospectral
deformation of the boundary value problem \eqref{eq:asymptotics}.
This boundary value problem can be best studied with the help of two rational functions
\begin{definition}{(Weyl functions)}
 $\omega(z)=-\frac{B(z)}{2zA(z)}, \; \zeta(z)=\frac{C(z)-B(z)}{2zA(z)}$.
\end{definition}
In this paper we will only use $\omega(z)$.  For the case of pure peakons, $m_i>0$,
it was proved in \cite{ls-cubicstring} that $\omega(z)$ is a Stieltjes transform of a measure, which subsequently played a major role in the solution of the inverse problem.  If, however, $m$ is a signed measure
then $\omega(z)$  has a more complicated structure because the spectrum
is, in general, not simple or even real.  Yet, nontrivial information about the dynamics of peakons can be extracted from
$\omega(z)$
without knowing its precise pole structure.
To this end let us establish a simple lemma which follows trivially from
equations \eqref{eq:t-spectral} and the definition of $\omega(z)$.
\begin{lemma}
\begin{equation}\label{eq:Wt}
\dot\omega(z)=\frac{\omega(z)}{z}+\frac{M_+}{z}.
\end{equation}
\end{lemma}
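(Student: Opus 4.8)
The plan is to differentiate the definition $\omega(z)=-\frac{B(z)}{2zA(z)}$ directly with respect to $t$, treating the spectral parameter $z$ as a constant since the deformation is isospectral. The key simplification comes from the first evolution equation in \eqref{eq:t-spectral}, namely $\dot A=0$: because $A$ carries no time dependence, the quotient rule collapses to the single term
\begin{equation*}
\dot\omega(z)=-\frac{\dot B(z)}{2zA(z)},
\end{equation*}
with the contribution involving $\dot A$ dropping out entirely.

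Next I would substitute the evolution of $B$ from \eqref{eq:t-spectral}, $\dot B=\frac{B}{z}-2AM_+$, to obtain
\begin{equation*}
\dot\omega(z)=-\frac{1}{2zA}\left(\frac{B}{z}-2AM_+\right)=-\frac{B}{2z^2A}+\frac{M_+}{z}.
\end{equation*}
The final step is to recognize the first term on the right as $\frac{1}{z}$ times the defining expression for $\omega$, since
\begin{equation*}
-\frac{B}{2z^2A}=\frac{1}{z}\left(-\frac{B}{2zA}\right)=\frac{\omega(z)}{z},
\end{equation*}
which immediately yields the claimed identity $\dot\omega(z)=\frac{\omega(z)}{z}+\frac{M_+}{z}$.

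There is no genuine obstacle here; as the paper itself remarks, the statement follows trivially, and the computation is entirely mechanical once $\dot A=0$ is invoked. The only point worth flagging is the conceptual justification for holding $z$ fixed under $\partial_t$, which is precisely the isospectrality of the flow established above by the observation that the condition $A(z)=0$ is preserved by \eqref{eq:t-spectral}. One should also note in passing that the identity is an identity of rational functions in $z$, valid away from the zeros of $A(z)$, so no issue of convergence or pole placement enters the derivation.
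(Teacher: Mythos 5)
Your proof is correct and is precisely the computation the paper has in mind when it says the lemma ``follows trivially from equations \eqref{eq:t-spectral} and the definition of $\omega(z)$'': differentiate $\omega=-\frac{B}{2zA}$ in $t$ at fixed $z$, use $\dot A=0$ to kill the second quotient-rule term, and substitute $\dot B=\frac{B}{z}-2AM_+$. No gap and no difference in approach; the paper simply omits the mechanical details that you have written out.
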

From explicit formulas $A(0)=1,B(0)=0$, thus implying that $0$ is a removable singular point of $\omega(z)$. Moreover, knowing the evolution of $\omega(z)$ we can readily establish the time evolution of the data involved in its partial fraction decomposition.
\begin{theorem}\label{thm:genresevolution}
Suppose the partial fraction decomposition of $\omega(z)$ is given:
\begin{equation*}
\omega(z)=\sum_j \sum_{k=1}^{d_j} \frac{b_j^{(k)}(t)}{(z-\lambda_j)^k},
\end{equation*}
where $d_j$ is the algebraic degeneracy of the eigenvalue $\lambda_j$.  Then
\begin{equation}\label{eq:b-polynomial}
b^{(k)}_j(t)=p^{(k)}_j(t)e^{\frac{t}{\lambda_j}},
\end{equation}
where $p^{(k)}_j(t)$ is a polynomial in $t$ of degree $d_j-k$ or lower, and
\begin{equation}\label{eq:Sumofbj}
\sum_j \dot b_j^{(1)}(t)=M_+.
\end{equation}
\end{theorem}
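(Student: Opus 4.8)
The plan is to substitute the partial fraction decomposition directly into the evolution equation \eqref{eq:Wt} and match singular parts at each pole. First I would write $\omega(z) = \sum_j \omega_j(z)$ where $\omega_j(z) = \sum_{k=1}^{d_j} b_j^{(k)}(t)/(z-\lambda_j)^k$ is the principal part at $\lambda_j$. Since the $\lambda_j$ are the (time-independent) eigenvalues of the isospectral problem, differentiating in $t$ commutes with the pole structure, so $\dot\omega(z) = \sum_j \sum_k \dot b_j^{(k)}(t)/(z-\lambda_j)^k$. The right-hand side $\omega(z)/z + M_+/z$ must then be expanded in partial fractions as well, and the key is to isolate the contribution to the principal part at a fixed $\lambda_j \neq 0$.

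\medskip

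The main computation is the expansion of $\omega_j(z)/z$ about $z=\lambda_j$. Writing $\frac1z = \frac{1}{\lambda_j + (z-\lambda_j)} = \sum_{\ell \ge 0} (-1)^\ell (z-\lambda_j)^\ell / \lambda_j^{\ell+1}$ and multiplying by $\sum_k b_j^{(k)}/(z-\lambda_j)^k$, I would collect the coefficient of $(z-\lambda_j)^{-k}$. The term $M_+/z$ and the contributions from the other principal parts $\omega_{j'}/z$ with $j' \neq j$ are analytic at $\lambda_j$, so they do not affect the principal part there. Matching coefficients of $(z-\lambda_j)^{-k}$ then yields a triangular linear recursion of the form
\begin{equation*}
\dot b_j^{(k)} = \frac{1}{\lambda_j} b_j^{(k)} + (\text{linear combination of } b_j^{(k+1)}, \dots, b_j^{(d_j)}),
\end{equation*}
where the highest index $k=d_j$ gives $\dot b_j^{(d_j)} = b_j^{(d_j)}/\lambda_j$, forcing $b_j^{(d_j)} = (\text{const}) \, e^{t/\lambda_j}$. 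I expect the coupling coefficients to be $(-1)^{\ell}/\lambda_j^{\ell+1}$ times binomial-type factors coming from the geometric expansion of $1/z$.

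\medskip

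To obtain the polynomial-times-exponential form \eqref{eq:b-polynomial} I would factor out the common exponential by setting $b_j^{(k)}(t) = p_j^{(k)}(t) e^{t/\lambda_j}$ and substituting into the recursion; the $b_j^{(k)}/\lambda_j$ term cancels the derivative of the exponential prefactor, leaving $\dot p_j^{(k)}$ equal to a linear combination of the $p_j^{(k')}$ with $k' > k$. Solving this triangular system downward from $k=d_j$ (where $p_j^{(d_j)}$ is constant, degree $0 = d_j - d_j$) and integrating once at each step raises the polynomial degree by at most one, giving $\deg p_j^{(k)} \le d_j - k$ by a finite induction on decreasing $k$. The hard part is bookkeeping the combinatorics of the geometric-series coefficients to confirm the degree bound is exactly $d_j - k$ rather than something larger, but this is a routine triangular induction once the recursion is written out.

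\medskip

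Finally, for the sum rule \eqref{eq:Sumofbj} I would compare the leading asymptotic behaviour of both sides of \eqref{eq:Wt} as $z \to \infty$. Since each $\omega_j(z) \sim b_j^{(1)}(t)/z$ at large $z$, the full $\omega(z)$ decays like $\bigl(\sum_j b_j^{(1)}\bigr)/z$, so $\omega(z)/z = O(1/z^2)$ is subleading, and the $1/z$ coefficient of the right-hand side of \eqref{eq:Wt} is just $M_+$. Meanwhile $\dot\omega(z) \sim \bigl(\sum_j \dot b_j^{(1)}\bigr)/z$, and matching the coefficients of $1/z$ gives $\sum_j \dot b_j^{(1)} = M_+$ directly. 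I should double-check that the removable singularity at $z=0$ (guaranteed by $A(0)=1$, $B(0)=0$) does not contribute a spurious pole to the partial fraction sum, so that the decomposition genuinely runs only over the nonzero eigenvalues $\lambda_j$ and the $z\to\infty$ matching is clean.
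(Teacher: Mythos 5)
Your proposal is correct and follows essentially the same route as the paper: matching principal parts of \eqref{eq:Wt} at each $\lambda_j$ yields exactly the triangular system $\dot b_j^{(k)}=\sum_{s=k}^{d_j}\frac{(-1)^{s-k}}{\lambda_j^{s-k+1}}b_j^{(s)}$ (the paper obtains it via the Laurent coefficient formula, you via the geometric expansion of $1/z$ --- the same computation), followed by the same downward induction giving $b_j^{(k)}=p_j^{(k)}(t)e^{t/\lambda_j}$ with $\deg p_j^{(k)}\le d_j-k$. Your derivation of \eqref{eq:Sumofbj} by matching the $1/z$ coefficients at $z\to\infty$ is just a repackaging of the paper's residue-theorem argument (the paper sums the residues of $\omega(z)/z$ at $\infty$, $0$ and the $\lambda_j$), and your worry about the degree bound being ``exactly'' $d_j-k$ is moot since the statement only asserts an upper bound.
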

\begin{proof}
Combining the partial fraction decomposition with \eqref{eq:Wt} one gets
\begin{equation} \label{eq:dotWparts}
\frac{\omega(z)}{z}=-\frac{M_+}{z}+\sum_j \sum_{k=1}^{d_j} \frac{\dot b_j^{(k)}(t)}{(z-\lambda_j)^k}.  \end{equation}
By Residue Theorem, we have
$
0=\mathrm{Res}\left(\frac{\omega(z)}{z},\infty\right)+\mathrm{Res}\left(\frac{\omega(z)}{z},0\right)+\sum_j\mathrm{Res} \left(\frac{\omega(z)}{z},\lambda_j\right) $
where
\begin{equation*}
\mathrm{Res}\left(\frac{\omega(z)}{z},\infty\right)=0,\;\mathrm{Res}\left(\frac{\omega(z)}{z},0\right)=-M_+,\;
\mathrm{Res}\left(\frac{\omega(z)}{z},\lambda_j\right)=\dot b_j^{(1)}(t), \end{equation*} which proves (\ref{eq:Sumofbj}).

By the formulas for the coefficients in the Laurant series of equation
\eqref{eq:dotWparts} we obtain
\begin{equation*} \dot b_j^{(k)}(t)=\left.\sum_{s=k}^{d_j}\frac1{(s-k)!}\frac{\mathrm{d}^{s-k}}{\mathrm{d}z^{s-k}}\left(\frac{b_j^{(s)}(t)}{z}\right)\right|_{z=\lambda_j}
=\sum_{s=k}^{d_j}\frac{(-1)^{s-k}}{\lambda_j^{s-k+1}}b_j^{(s)}(t),  \quad 1\leq k\leq d_j.\end{equation*} In particular, we have
$\dot b_j^{(d_j)}(t)=\frac{b_j^{(d_j)}(t)}{\lambda_j}$, hence $b_j^{(d_j)}(t)=b_j^{(d_j)}(0)e^{\frac t{\lambda_j}}$. Proceeding by induction we obtain
\begin{equation*}
\dot b_j^{(k)}(t)=\sum_{s=k}^{d_j}\frac{(-1)^{s-k}}{\lambda_j^{s-k+1}}b_j^{(s)}
=\frac{b_j^{(k)}(t)}{\lambda_j}+e^{\frac t{\lambda_j}}\sum_{s=k+1}^{d_j}\frac{(-1)^{s-k}}{\lambda_j^{s-k+1}}p_j^{(s)}(t)
\stackrel{\mathrm{def}}{=}\frac{b_j^{(k)}(t)}{\lambda_j}+e^{\frac t{\lambda_j}}\tilde p^{(k+1)}(t),
\end{equation*} therefore
$b_j^{(k)}(t)=e^{\frac t{\lambda_j}}\left(b_j^{(k)}(0)+\int_0^t\tilde p^{(k+1)}(\tau)\mathrm{d}\tau\right)\stackrel{\mathrm{def}}{=}e^{\frac{t}{\lambda_j}}p^{(k)}_j(t)$. Finally, since $\tilde p^{(k+1)}(t)$ is a polynomial of degree $d_j-k-1$ or lower,
$p^{(k)}_j(t)$ is a polynomial of degree $d_j-k$ or lower, which leads to \eqref{eq:b-polynomial}.
\end{proof}

\begin{lemma} \label{cor:xn}
Let $x_n$ be the position of the $n$-th mass.
Then
\begin{equation}\label{eq:xn}
e^{x_n}=\sum_j b_j^{(1)}.
\end{equation}
\end{lemma}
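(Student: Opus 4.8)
The plan is to identify $\sum_j b_j^{(1)}$ as the coefficient of $z^{-1}$ in the Laurent expansion of $\omega(z)$ about $z=\infty$, and then to evaluate that coefficient directly from the closed form $\omega(z)=-B(z)/(2zA(z))$. This mirrors the residue bookkeeping already used in the proof of Theorem \ref{thm:genresevolution}.

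First I would record that $\omega$ is a proper rational function that vanishes at infinity. From \eqref{eq:kABC} both $A$ and $B$ have degree $n$ in $z$, so the denominator $2zA(z)$ has degree $n+1$, one more than the numerator; hence $\omega(z)\to 0$ as $z\to\infty$ and its partial fraction decomposition has no polynomial part, consistent with the form assumed in Theorem \ref{thm:genresevolution}. Expanding each summand about infinity through $\frac{1}{(z-\lambda_j)^k}=z^{-k}+\mathcal{O}(z^{-k-1})$, only the simple-pole terms ($k=1$) contribute to the $z^{-1}$ coefficient, so that
\begin{equation*}
\lim_{z\to\infty} z\,\omega(z)=\sum_j b_j^{(1)}.
\end{equation*}

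Next I would evaluate this limit from the explicit polynomials. Since $z\,\omega(z)=-B(z)/(2A(z))$, the value depends only on the leading coefficients of $B$ and $A$. Reading off the $p=n$ term of \eqref{eq:kABC}, for which the sole subset is $I=\{1,\dots,n\}$ so that $x_{i_p}=x_n$, the coefficient of $z^n$ in the vector $(A,B,C)^{T}$ is
\begin{equation*}
(-1)^n\Bigl(\prod_{i=1}^n m_i\Bigr)\Bigl(\prod_{j=1}^{n-1}(1-e^{x_j-x_{j+1}})^2\Bigr)\,\bigl(1,\,-2e^{x_n},\,e^{2x_n}\bigr)^{T}.
\end{equation*}
Thus the leading coefficients of $A$ and $B$ stand in the ratio $1:(-2e^{x_n})$, giving $B(z)/A(z)\to -2e^{x_n}$ and therefore $\lim_{z\to\infty} z\,\omega(z)=-\smallfrac{1}{2}(-2e^{x_n})=e^{x_n}$. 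Combining the two displays yields \eqref{eq:xn}. Equivalently, one may apply the Residue Theorem to $\omega$ directly: its only finite poles are the zeros $\lambda_j$ of $A$, the point $z=0$ is removable because $A(0)=1$ and $B(0)=0$, and so $\sum_j b_j^{(1)}=-\mathrm{Res}(\omega,\infty)=e^{x_n}$.

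The computation is essentially routine, and I do not anticipate a genuine obstacle; the only point requiring a moment's care is checking that the leading coefficients of $A$ and $B$ do not vanish, so that $\deg A=\deg B=n$ and the expansion at infinity behaves as claimed. This holds for any admissible configuration with nonzero masses and distinct, ordered positions $x_1<\dots<x_n$, since then each factor $(1-e^{x_j-x_{j+1}})^2$ is strictly positive; it is precisely the surviving factor $-2e^{x_n}$ in the ratio $B/A$ that singles out the position $x_n$ of the rightmost mass rather than any other.
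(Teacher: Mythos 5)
Your proposal is correct and follows essentially the same route as the paper's proof: both identify $\sum_j b_j^{(1)}$ with $\lim_{z\to\infty} z\,\omega(z)$ (via the residue at infinity / the $z^{-1}$ Laurent coefficient) and then evaluate that limit from the ratio of the leading coefficients of $B$ and $A$ in \eqref{eq:kABC}, which is $-2e^{x_n}$. Your extra check that the leading coefficients are nonzero (nonzero masses, ordered positions) is a small point of added care, not a different argument.
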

\begin{proof}
By Residue Theorem
$0=\mathrm{Res}\left(\omega(z),\infty\right)+\sum_j\mathrm{Res} \left(\omega(z),\lambda_j\right)$.
Thus \\$\sum_j b_j^{(1)}=-\mathrm{Res}\left(\omega(z),\infty\right)$.
With the help of explicit formulas \eqref{eq:kABC} and the definition of $\omega(z)$
we obtain
\[\mathrm{Res}\left(\omega(z),\infty\right)=-\lim_{z\to\infty}z\omega(z)=\lim_{z\to\infty}\frac{B(z)}{2A(z)}=-e^{x_n},\] which proves the conclusion.
\end{proof}

\begin{corollary}\label{cor:noescaperight}
The $n$th mass cannot escape to $+\infty$ in finite  real time.
\end{corollary}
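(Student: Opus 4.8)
The plan is to read off the fate of $x_n$ directly from the spectral data via Lemma~\ref{cor:xn}. The starting point is the identity $e^{x_n}=\sum_j b_j^{(1)}$ from \eqref{eq:xn}, which expresses the exponential of the position of the rightmost mass purely in terms of the leading residues of $\omega(z)$. Since $x_n$ escaping to $+\infty$ is equivalent to $e^{x_n}\to+\infty$, it suffices to show that the right-hand side $\sum_j b_j^{(1)}(t)$ stays finite on every bounded time interval.

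First I would invoke Theorem~\ref{thm:genresevolution}, which gives the explicit form $b_j^{(1)}(t)=p_j^{(1)}(t)\,e^{t/\lambda_j}$ with $p_j^{(1)}$ a polynomial in $t$. Here it is important that $\lambda_j\neq 0$ for every $j$: the $\lambda_j$ are the poles of $\omega(z)$, and we have already observed that $A(0)=1$, $B(0)=0$ make $0$ a removable singularity, so no pole sits at the origin and $t/\lambda_j$ is finite and well defined. Consequently each summand $b_j^{(1)}(t)$ is an entire function of $t$, and the finite sum $\sum_j b_j^{(1)}(t)$ is entire as well; in particular it is bounded on every compact subset of the real line.

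The argument is then completed by contradiction: if $x_n(t)\to+\infty$ as $t\to T$ for some finite real $T$, then $e^{x_n(t)}\to+\infty$, which is impossible because $e^{x_n(t)}=\sum_j b_j^{(1)}(t)$ remains finite (indeed real-analytic) at $t=T$. Hence no finite escape time exists.

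The point that requires care --- and which I regard as the main obstacle --- is the legitimacy of using the global-in-time formula \eqref{eq:b-polynomial} right up to the alleged escape time $T$. This hinges on the deformation \eqref{eq:Lax} being genuinely isospectral: the eigenvalues $\lambda_j$ and their algebraic multiplicities $d_j$ are constants of motion, so the partial-fraction structure of $\omega(z)$ does not degenerate along the flow, and the residue data $b_j^{(k)}(t)$ obey the linear, constant-coefficient system derived in the proof of Theorem~\ref{thm:genresevolution}, whose solutions exist and are analytic for all $t$. Granting this, the relation $e^{x_n}=\sum_j b_j^{(1)}$ transports the finiteness of the spectral side to the geometric side, and the corollary follows.
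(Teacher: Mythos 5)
Your proof is correct and follows the same route as the paper: both use Lemma~\ref{cor:xn} together with the explicit form $b_j^{(1)}(t)=p_j^{(1)}(t)e^{t/\lambda_j}$ from Theorem~\ref{thm:genresevolution} to conclude that $e^{x_n}$ grows at most exponentially (equivalently, agrees with an entire function of $t$ on the existence interval), hence stays bounded on any finite time interval. The paper states this in one line; your additional remarks on $\lambda_j\neq 0$ and on isospectrality are consistent with, and implicit in, the paper's setup.
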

\begin{proof}
Indeed, from Theorem \ref{thm:genresevolution} and the
lemma above we see that $e^{x_n}$ has at most an exponential
growth, hence it is bounded for finite real time.
\end{proof}
To deal with the behaviour of $x_1$ we will use a slightly modified spectral
problem which, in principle, amounts to ``sweeping" the masses
in the opposite direction.  To this end we consider the adjoint Lax pair:
\begin{equation}\label{eq:adLax}
(\partial_x-\partial_{xxx})\tilde\Psi=-zm\tilde \Psi, \qquad
\tilde\Psi_t=[-z^{-1}(1-\partial^2_x)+u_x-u\partial_x]\tilde \Psi.
\end{equation}
\begin{remark} The only difference between equations \eqref{eq:Lax} and
\eqref{eq:adLax} is the sign of $z$ which has no effect on the compatibility
conditions; hence the adjoint Lax pair gives the same compatibility condition --- the
DP equation.
\end{remark}
We choose a different set of asymptotic conditions, namely
\begin{equation}\label{eq:adasymptotics}
\tilde\Psi\sim e^{-x},  \text{ as } x\rightarrow +\infty, \qquad  \tilde\Psi \text{ is bounded as   } x\rightarrow -\infty.
\end{equation}
For $x<x_1$
\begin{equation}\label{eq:tildeABC}
\tilde \Psi(x)=\tilde A(z)e^{-x}+\tilde B(z)+\tilde C(z)e^{x}.
\end{equation}
Hence the adjoint spectral problem is given by $\tilde A(z)=0$.
Likewise, one readily checks that the time flow given by the second equation in
\eqref{eq:adLax} yields:
\begin{equation}\label{eq:t-adspectral}
\dot {\tilde A}=0, \quad \dot {\tilde B}=-\frac{\tilde B}{z}+2\tilde AM_-, \quad \dot {\tilde C}=\tilde BM_-,\text{ where } M_-=\sum_i m_i e^{-x_i}.
\end{equation}

We conclude that the adjoint boundary value problem \eqref{eq:adasymptotics}
is also isospectral under the DP flow.  In fact, the spectral problems
\eqref{eq:asymptotics} and \eqref{eq:adasymptotics} have identical spectra.
To demonstrate that we establish first an elementary lemma.
\begin{lemma}\label{lem:tildePsi} If $\Psi(x)$ is the solution to the x-equation in the boundary value problem \eqref{eq:Lax} with $m(x)=\sum_{i=1}^nm_i\delta_{x_i}$, then $\Psi(-x)$ is the solution to the x-equation in \eqref{eq:adLax} with $\tilde m(x)=\sum_{i=1}^nm_i\delta_{-x_i}$ and
boundary conditions \eqref{eq:adasymptotics}.
\end{lemma}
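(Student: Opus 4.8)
The plan is to verify directly that the reflected function $\Phi(x):=\Psi(-x)$ satisfies both the $x$-equation in \eqref{eq:adLax} with the reflected measure $\tilde m$ and the boundary conditions \eqref{eq:adasymptotics}. The entire argument rests on one elementary observation: the operator $L:=\partial_x-\partial_{xxx}$ contains only \emph{odd}-order derivatives, so it anticommutes with the reflection $R\colon f(x)\mapsto f(-x)$. Indeed, from $\frac{d}{dx}\Psi(-x)=-\Psi'(-x)$ and $\frac{d^3}{dx^3}\Psi(-x)=-\Psi'''(-x)$ one reads off $R\,\partial_x\,R=-\partial_x$ and $R\,\partial_{xxx}\,R=-\partial_{xxx}$, hence $R\,L\,R=-L$, equivalently $LR=-RL$.

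First I would apply this identity to the given solution. Since $\Psi$ solves $L\Psi=zm\Psi$ in the distributional sense,
\[
L\Phi=LR\Psi=-RL\Psi=-z\,R(m\Psi).
\]
It then remains to identify the reflected product $R(m\Psi)$ with $\tilde m\,\Phi$. Here I would use that $\Psi$ is continuous (each local representation \eqref{eq:AkBkCk} is a continuous combination of $e^{x},1,e^{-x}$), so the singular product collapses to $m\Psi=\sum_i m_i\,\Psi(x_i)\,\delta_{x_i}$, together with the reflection rule for point masses $\delta_{x_i}(-x)=\delta_{-x_i}(x)$. Writing $\Psi(x_i)=\Phi(-x_i)$ gives
\[
R(m\Psi)=\sum_i m_i\,\Psi(x_i)\,\delta_{-x_i}=\sum_i m_i\,\Phi(-x_i)\,\delta_{-x_i}=\Bigl(\sum_i m_i\,\delta_{-x_i}\Bigr)\Phi=\tilde m\,\Phi,
\]
so that $L\Phi=-z\,\tilde m\,\Phi$, which is precisely the $x$-equation in \eqref{eq:adLax} with $\tilde m=\sum_i m_i\delta_{-x_i}$.

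Next I would translate the asymptotic conditions. As $x\to+\infty$ we have $-x\to-\infty$, so \eqref{eq:asymptotics} yields $\Phi(x)=\Psi(-x)\sim e^{-x}$; as $x\to-\infty$ we have $-x\to+\infty$, so $\Phi(x)=\Psi(-x)$ is bounded. These are exactly the conditions \eqref{eq:adasymptotics}, which completes the identification of $\Phi$ as the solution of the adjoint boundary value problem.

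The only genuinely delicate point is the distributional bookkeeping in the middle step: one must make sense of, and correctly reflect, the singular product $m\Psi$, and verify that the formal manipulation $LR\Psi=-RL\Psi$ is legitimate across the points $x_i$ where $\Psi$ fails to be smooth. I expect this to be handled cleanly by pairing the weak equation $L\Psi=zm\Psi$ against a smooth compactly supported test function and changing variables $y=-x$: since $L$ is formally skew-adjoint ($L^{*}=-L$) and $R$ is self-adjoint under this pairing, the combination reproduces exactly the factor $-z$ and sends each $\delta_{x_i}$ to $\delta_{-x_i}$, with no boundary terms because the test functions vanish near infinity.
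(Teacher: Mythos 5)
Your proposal is correct and follows essentially the same route as the paper: reflect the solution, use that the odd-order operator $\partial_x-\partial_{xxx}$ anticommutes with $x\mapsto -x$ to produce the sign change $zm\mapsto -z\tilde m$ via $\delta_{x_i}(-x)=\delta_{-x_i}(x)$, and translate the boundary conditions \eqref{eq:asymptotics} into \eqref{eq:adasymptotics}. The only difference is presentational: you spell out the distributional bookkeeping (continuity of $\Psi$, the collapse $m\Psi=\sum_i m_i\Psi(x_i)\delta_{x_i}$, and the weak-formulation justification) that the paper's proof treats as immediate.
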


\begin{proof} Since $\Psi(x)$ is the solution to (\ref{eq:Lax}), the boundary conditions \[\Psi(-x)\rightarrow e^{-x},  \text{ as } x\rightarrow +\infty, \qquad  \Psi(-x) \text{ is bounded as   } x\rightarrow -\infty\] hold. Moreover, we have \[(\partial_x-\partial_{xxx})\Psi(-x)=-(\Psi_x(-x)-\Psi_{xxx}(-x))=-zm(-x)\Psi(-x).\] Notice that $\delta_{x_i}(-x)=\delta_{-x_i}(x)$, hence $m(-x)=\tilde m(x)$ and the conclusion holds.\end{proof}

Denote $\underline{m}=(m_1,\ldots,m_n),\underline{x}=(x_1,\ldots,x_n)$ for short, and set $\underline{m}^\tau,\underline{x}^\tau$ to be the vector with the reversed order of its
entries, that is $\underline{m}^\tau=(m_n, \ldots, m_1)$ etc.  Employing the same convention as in equation \eqref{eq:AkBkCk}, but this time for $\tilde \Psi$, we obtain the following analogue of
equation \eqref{eq:kABC}.

\begin{theorem} \label{thm:symmtildeABC}
Let $\tilde A_k(z)=\tilde A_k(z;\underline{m}, \underline{x}), \, 1\leq k\leq n$.  Then
\[\left(\begin{array}{c}\tilde A_k(z)\\ \tilde B_k(z)\\ \tilde C_k(z) \end{array}\right)=\left(\begin{array}{c}1\\ 0\\ 0 \end{array}\right)+\sum_{p=1}^k\left[\sum_{I\in {{[1,k]}\choose p}}\left(\prod_{i\in I}m_i\right)\left(\prod_{j=1}^{p-1}(1-e^{x_{i_j}-x_{i_{j+1}}})^2\right)\left(\begin{array}{c}1\\ -2e^{-x_{i_1}}\\e^{-2x_{i_1}}  \end{array}\right) \right](-z)^p,\]  where $\binom{[1,n]}{p}$ is the set of all $p$-element subsets
  $I=\left\{ i_1 < \dots < i_p \right\}$ of $\left\{ 1,\dots,n \right\}$.   In particular, when
  $k=n$, $A(z)=\tilde A(z)$.
\end{theorem}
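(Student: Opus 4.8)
The plan is to obtain $(\tilde A_k,\tilde B_k,\tilde C_k)$ from the forward formula \eqref{eq:kABC} by invoking the reflection symmetry of Lemma \ref{lem:tildePsi}, rather than repeating the transfer-matrix computation from scratch. Following the convention of \eqref{eq:AkBkCk}, the coefficients $\tilde A_k,\tilde B_k,\tilde C_k$ are those generated by the first $k$ masses $x_1,\dots,x_k$, exactly as $A_k,B_k,C_k$ in \eqref{eq:kABC} are. I would therefore apply Lemma \ref{lem:tildePsi} to the subsystem $(m_1,x_1),\dots,(m_k,x_k)$. Since that lemma turns a forward solution with masses at $\{x_i\}$ into an adjoint solution with masses at $\{-x_i\}$ via $x\mapsto -x$, the adjoint coefficients of this subsystem equal the forward coefficients of its reflection, namely the system with increasingly ordered positions $-x_k<\dots<-x_1$ and masses $m_k,\dots,m_1$ (the order reversal denoted by $\tau$ in the text).

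Next I would simply write \eqref{eq:kABC} for this reflected subsystem and substitute $y=-x$. Under that substitution the forward ansatz $A e^{y}+B+C e^{-y}$ becomes $A e^{-x}+B+C e^{x}$, so the coefficient of $e^{-x}$, the constant term, and the coefficient of $e^{x}$ of $\tilde\Psi$ are read off directly as the forward $A,B,C$ of the reflected problem; this is the interchange of the roles of $e^{x}$ and $e^{-x}$ already visible in \eqref{eq:tildeABC}. All that then remains is to rewrite the reflected data back in terms of the original $m_i$ and $x_i$.

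The heart of the matter is the bookkeeping attached to the order-reversing relabeling $j\mapsto k+1-j$ induced by the reflection. A $p$-subset $J=\{j_1<\dots<j_p\}$ of the reflected index set corresponds to a $p$-subset $I=\{i_1<\dots<i_p\}$ of $\{1,\dots,k\}$ with the order reversed, so the \emph{largest} element of $J$ matches the \emph{smallest} element $i_1$ of $I$. Hence the marker vector in \eqref{eq:kABC}, which is tied to the largest index of the chosen subset, is now tied to $i_1$ and to the reflected position $-x_{i_1}$, so it becomes $(1,-2e^{-x_{i_1}},e^{-2x_{i_1}})$, precisely the vector in the asserted formula. One must also check that the gap product is preserved: the reflected consecutive differences differ from the original $x_{i_j}-x_{i_{j+1}}$ only by sign and by reversal of their order, and because each factor enters squared one has $\prod(1-e^{\,\cdot\,})^2$ unchanged. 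Granting these two identities, the reflected forward formula collapses term by term onto the claimed expression for $(\tilde A_k,\tilde B_k,\tilde C_k)$.

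Finally, the equality $A(z)=\tilde A(z)$ at $k=n$ comes for free: the first component of the marker vector is $1$ in both \eqref{eq:kABC} and the adjoint formula, while every position-dependent exponential lives only in the second and third components. Thus the coefficient of each power $(-z)^p$ in $A$ and in $\tilde A$ is the same sum $\sum_{I}\big(\prod_{i\in I}m_i\big)\prod_j(1-e^{x_{i_j}-x_{i_{j+1}}})^2$, forcing $A=\tilde A$ as polynomials in $z$. I expect the combinatorial crux of the previous paragraph---pinning down the order-reversing correspondence of subsets together with the squared-gap invariance---to be the only real obstacle, since everything else reduces to the substitution $x\mapsto -x$ and a direct appeal to \eqref{eq:kABC}.
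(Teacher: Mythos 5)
Your overall route is exactly the paper's: apply Lemma \ref{lem:tildePsi} (at the level of the $k$-mass subsystem, which is in fact slightly more careful than the paper, whose proof states the reflection identity only for the full system), match the coefficients of $e^{-x}$, $1$, $e^{x}$ to get $\tilde A_k(z;\underline{m},\underline{x})=A_k(z;\underline{m}^\tau,-\underline{x}^\tau)$ etc., and then push the order-reversing bijection on $p$-subsets through \eqref{eq:kABC}; your treatment of the marker vector (largest element of the reflected subset corresponds to the smallest element $i_1$ of $I$, producing $(1,-2e^{-x_{i_1}},e^{-2x_{i_1}})$) is precisely the observation the paper makes. However, your justification of the gap-product invariance --- the step you yourself call the crux --- is wrong as stated. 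You say the reflected consecutive differences equal the original ones up to a sign and a reversal of order, ``and because each factor enters squared'' the product is unchanged. Squaring cannot absorb a sign in the exponent: $(1-e^{-d})^2=e^{-2d}(1-e^{d})^2\neq(1-e^{d})^2$, so if the differences really were negated, your argument would fail and the resulting formula would pick up spurious exponential factors.

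The correct reason is that no sign flip survives at all. Writing the reflected subsystem as $y_a=-x_{k+1-a}$ and letting $J=\{a_1<\dots<a_p\}$ correspond to $I=\{i_1<\dots<i_p\}$ via $i=k+1-a$, the $j$-th reflected gap is $y_{a_j}-y_{a_{j+1}}=x_{k+1-a_{j+1}}-x_{k+1-a_j}=x_{i_{p-j}}-x_{i_{p+1-j}}$: the minus sign coming from $x\mapsto-x$ is exactly cancelled by the fact that reflection also swaps which member of each consecutive pair sits to the left. Hence each reflected factor $(1-e^{y_{a_j}-y_{a_{j+1}}})^2$ is literally equal to one of the original factors $(1-e^{x_{i_l}-x_{i_{l+1}}})^2$ (with $l=p-j$), and the product is unchanged simply because it is the same multiset of factors in reversed order; the squares are irrelevant to this step. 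This point deserves care precisely because the paper glosses it (``after a simple change of index''). With this repair the rest of your argument, including the $k=n$ conclusion $A=\tilde A$ read off from the first components, goes through and coincides with the paper's proof.
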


\begin{proof} By lemma \ref{lem:tildePsi} $\tilde\Psi(x)$ in the asymptotic region
$x\to -\infty$ can be expressed as \[\begin{aligned}&\tilde\Psi(x;\underline{m},\underline{x})=\tilde A(z;\underline{m},\underline{x})e^{-x}+\tilde B(z;\underline{m},\underline{x})+\tilde C(z;\underline{m},\underline{x})e^{x}\\=&\Psi(-x;\underline{m}^\tau,-\underline{x}^\tau)= A(z;\underline{m}^\tau,-\underline{x}^\tau)e^{-x}+B(z;\underline{m}^\tau,-\underline{x}^\tau)+ C(z;\underline{m}^\tau,-\underline{x}^\tau)e^{x} \end{aligned},\] which leads to $\tilde A(z;\underline{m},\underline{x})=A(z;\underline{m}^\tau,-\underline{x}^\tau),\tilde B(z;\underline{m},\underline{x})=B(z;\underline{m}^\tau,-\underline{x}^\tau)$, and $\tilde C(z;\underline{m},\underline{x})=C(z;\underline{m}^\tau,-\underline{x}^\tau).$ The conclusion then directly follows from the formulas \eqref{eq:kABC}  along with
an elementary observation that the permutation $\tau$ is a bijection on
the ordered $p$-tuples, which for any fixed $p$-tuple maps the last element $e^{x_{i_p}}$ in the
original sum into the first element $e^{-x_{n+1-i_p}}$ of the new $p$-tuple.
After a simple change of index the main claim is proven.  As to $A_n(z)$, which corresponds to the
first line in the formula, we observe that this polynomial is invariant under the transformation $\underline{m}\mapsto\underline{m}^\tau,\underline{x}\mapsto-\underline{x}^\tau$.  \end{proof}

We can thus define the adjoint Weyl function
$\widetilde \omega(z)=-\frac{\tilde B(z)}{2z\tilde A(z)}$,
and use equations \eqref{eq:t-adspectral} to determine the time flow of $\widetilde\omega$.
An easy computation gives:
\begin{lemma}
\begin{equation}
\dot{\widetilde\omega}(z)=-\frac{\widetilde \omega(z)}{z}-\frac{M_-}{z}.
\end{equation}
\end{lemma}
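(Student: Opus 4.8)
The plan is to differentiate the definition $\widetilde\omega(z)=-\frac{\tilde B(z)}{2z\tilde A(z)}$ directly with respect to $t$, in exact parallel with the computation that produced the earlier lemma \eqref{eq:Wt}. The decisive simplification is the first relation in \eqref{eq:t-adspectral}, namely $\dot{\tilde A}=0$: because the adjoint boundary value problem is isospectral, $\tilde A(z)$ carries no time dependence, and the denominator $2z\tilde A(z)$ may be treated as a constant under $\partial_t$. Consequently all the time dependence of $\widetilde\omega$ resides in the numerator $\tilde B(z)$, and the quotient rule degenerates to $\dot{\widetilde\omega}=-\frac{\dot{\tilde B}}{2z\tilde A}$.

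Concretely, I would substitute the evolution $\dot{\tilde B}=-\frac{\tilde B}{z}+2\tilde A M_-$ from \eqref{eq:t-adspectral} into this expression. The resulting formula splits into two terms. The contribution coming from $2\tilde A M_-$ cancels against the factor $\tilde A$ in the denominator and produces $-\frac{M_-}{z}$, while the contribution coming from $-\frac{\tilde B}{z}$ reproduces, up to the prefactor $\frac{1}{z}$, the defining expression $-\frac{\tilde B}{2z\tilde A}=\widetilde\omega(z)$ itself. Recognizing this term as $-\frac{\widetilde\omega(z)}{z}$ then assembles the two pieces into the asserted identity $\dot{\widetilde\omega}(z)=-\frac{\widetilde\omega(z)}{z}-\frac{M_-}{z}$.

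I expect essentially no obstacle here; the computation is a one-line quotient rule made trivial by $\dot{\tilde A}=0$. The only points requiring care are purely a matter of bookkeeping. First, the signs differ from the unadjoint case through the replacement $z\mapsto -z$ in \eqref{eq:adLax} and the corresponding sign flips in \eqref{eq:t-adspectral}, so I would verify that they propagate consistently to yield the two minus signs in the claim. Second, I would confirm that $\dot{\tilde C}$ plays no role whatsoever, which is immediate since $\widetilde\omega$ is built only from $\tilde A$ and $\tilde B$. Finally, well-definedness near $z=0$ follows from the normalization $\tilde A(0)=1$, $\tilde B(0)=0$ supplied by Theorem \ref{thm:symmtildeABC}, exactly mirroring the removable-singularity observation made for $\omega$ via $A(0)=1$, $B(0)=0$.
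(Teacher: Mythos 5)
Your proposal is correct and is exactly the argument the paper intends: the paper states this lemma with no written proof beyond ``an easy computation,'' meaning precisely the differentiation of $\widetilde\omega(z)=-\frac{\tilde B(z)}{2z\tilde A(z)}$ using $\dot{\tilde A}=0$ and $\dot{\tilde B}=-\frac{\tilde B}{z}+2\tilde A M_-$ from \eqref{eq:t-adspectral}, which you carry out with the signs handled correctly. Nothing is missing.
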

Consequently, we obtain an analogue of Theorem \ref{thm:genresevolution}.
\begin{theorem}\label{thm:tildegenresevolution}
Suppose the partial fraction decomposition of $\widetilde \omega(z)$ is given:
\begin{equation*}
\widetilde \omega(z)=\sum_j \sum_{k=1}^{d_j} \frac{\tilde b_j^{(k)}}{(z-\lambda_j)^k}
\end{equation*}
where $d_j$ is the algebraic degeneracy of the eigenvalue $\lambda_j$.
Then
\begin{equation}
\tilde b^{(k)}_j=\tilde p^{(k)}_j(t)e^{-\frac{t}{\lambda_j}}
\end{equation}
where $\tilde p^{(k)}_j(t)$ is a polynomial in $t$ of degree $d_j-k$ or lower, and
\begin{equation}
\sum_j \dot {\tilde b}_j^{(1)}=-M_-.
\end{equation}
\end{theorem}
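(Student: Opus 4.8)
The plan is to transcribe the proof of Theorem~\ref{thm:genresevolution}, carrying through the two sign changes that distinguish the evolution of $\widetilde\omega$ from that of $\omega$: here $\dot{\widetilde\omega}=-\widetilde\omega/z-M_-/z$ in place of $\dot\omega=\omega/z+M_+/z$. Before starting I would record that, because the adjoint boundary value problem \eqref{eq:adasymptotics} is isospectral, the poles of $\widetilde\omega$ are independent of $t$; moreover Theorem~\ref{thm:symmtildeABC} gives $\tilde A(z)=A(z)$, so the denominator of $\widetilde\omega$ is the same polynomial that appears in $\omega$. Hence $\widetilde\omega$ has exactly the same poles $\lambda_j$, with the same algebraic degeneracies $d_j$, as $\omega$, and the notation of the statement is justified. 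The explicit coefficients in Theorem~\ref{thm:symmtildeABC} give $\tilde A(0)=1$ and $\tilde B(0)=0$, so $0$ is a removable singularity of $\widetilde\omega$, and in particular $\lambda_j\neq 0$ for every $j$.

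Differentiating the partial fraction decomposition term by term (legitimate since the $\lambda_j$ are frozen) and inserting it into the evolution equation for $\widetilde\omega$, I would rewrite the latter as
\[
  \frac{\widetilde\omega(z)}{z}
  = -\frac{M_-}{z}
    - \sum_j \sum_{k=1}^{d_j} \frac{\dot{\tilde b}_j^{(k)}}{(z-\lambda_j)^k}.
\]
Applying the Residue Theorem to $\widetilde\omega(z)/z$ then yields the sum rule. The residue at $\infty$ vanishes because $\tilde A$ and $\tilde B$ have the same degree $n$, forcing $\widetilde\omega(z)/z=\Oh(z^{-2})$; the residue at $0$ equals $-M_-$, read directly off the $-M_-/z$ term since the remaining sum is regular there; and the residue at $\lambda_j$ equals $-\dot{\tilde b}_j^{(1)}$. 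Setting the total to zero gives $\sum_j \dot{\tilde b}_j^{(1)} = -M_-$.

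For the polynomial-times-exponential structure I would match principal parts at each $\lambda_j$. Expanding $1/z$ about $z=\lambda_j$ as $\sum_{m\ge 0}(-1)^m\lambda_j^{-(m+1)}(z-\lambda_j)^m$ and reading off the coefficient of $(z-\lambda_j)^{-k}$ on both sides produces the triangular linear system
\[
  \dot{\tilde b}_j^{(k)}
  = -\sum_{s=k}^{d_j} \frac{(-1)^{s-k}}{\lambda_j^{\,s-k+1}}\, \tilde b_j^{(s)},
  \qquad 1\le k\le d_j.
\]
The top equation $(k=d_j)$ reads $\dot{\tilde b}_j^{(d_j)} = -\tilde b_j^{(d_j)}/\lambda_j$, so $\tilde b_j^{(d_j)}(t)=\tilde b_j^{(d_j)}(0)\,e^{-t/\lambda_j}$. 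Descending induction on $k$ separates the diagonal term $-\tilde b_j^{(k)}/\lambda_j$ from an inhomogeneity that, by the inductive hypothesis, is $e^{-t/\lambda_j}$ times a polynomial of degree $\le d_j-k-1$; solving the resulting first-order linear ODE and integrating raises the degree by at most one, giving $\tilde b_j^{(k)}(t)=e^{-t/\lambda_j}\tilde p_j^{(k)}(t)$ with $\deg \tilde p_j^{(k)}\le d_j-k$.

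I do not anticipate a genuine obstacle, since every step mirrors the computation already performed for $\omega$; the only points requiring real care are the faithful tracking of the sign flips (which turn $e^{t/\lambda_j}$ into $e^{-t/\lambda_j}$ and $M_+$ into $-M_-$) and the preliminary verification that $0$ is removable and that $\widetilde\omega$ decays like $z^{-1}$ at infinity, so that $0$ and $\infty$ contribute to the residue count exactly as claimed --- both consequences of the explicit formulas in Theorem~\ref{thm:symmtildeABC}.
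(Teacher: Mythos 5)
Your proposal is correct and matches the paper's intent exactly: the paper gives no separate proof, stating only that the result follows as an analogue of Theorem~\ref{thm:genresevolution}, and your transcription with the two sign flips (yielding $e^{-t/\lambda_j}$ and $-M_-$) is precisely that argument. Your added checks --- that $\tilde A(z)=A(z)$ fixes the poles, that $\tilde A(0)=1$, $\tilde B(0)=0$ make $0$ removable, and that $\widetilde\omega(z)/z=\Oh(z^{-2})$ kills the residue at infinity --- are the same facts the paper relies on implicitly.
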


With the help of Theorem \ref{thm:symmtildeABC} it is now
straightforward to establish a counterpart of Lemma \ref{cor:xn}.

\begin{lemma}\label{cor:x1}
Let $x_1$ be the position of the first mass.
Then
\begin{equation}\label{eq:x1}
e^{-x_1}=\sum_j \tilde b_j^{(1)}(t).
\end{equation}
\end{lemma}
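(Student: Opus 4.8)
The plan is to avoid redoing the residue computation of Lemma \ref{cor:xn} from scratch and instead to exploit the symmetry relating the adjoint spectral data to the original data, exactly as recorded in the proof of Theorem \ref{thm:symmtildeABC}. Recall from that proof the identities $\tilde A(z;\underline{m},\underline{x})=A(z;\underline{m}^\tau,-\underline{x}^\tau)$ and $\tilde B(z;\underline{m},\underline{x})=B(z;\underline{m}^\tau,-\underline{x}^\tau)$. Dividing these, the adjoint Weyl function satisfies
\begin{equation*}
\widetilde\omega(z;\underline{m},\underline{x})=-\frac{\tilde B(z;\underline{m},\underline{x})}{2z\tilde A(z;\underline{m},\underline{x})}=-\frac{B(z;\underline{m}^\tau,-\underline{x}^\tau)}{2zA(z;\underline{m}^\tau,-\underline{x}^\tau)}=\omega(z;\underline{m}^\tau,-\underline{x}^\tau),
\end{equation*}
so $\widetilde\omega$ for the configuration $(\underline{m},\underline{x})$ is literally the original Weyl function $\omega$ for the transformed configuration $(\underline{m}^\tau,-\underline{x}^\tau)$. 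By the closing remark of Theorem \ref{thm:symmtildeABC} the denominators coincide, $\tilde A(z;\underline{m},\underline{x})=A(z;\underline{m}^\tau,-\underline{x}^\tau)=A(z;\underline{m},\underline{x})$, so both partial fraction decompositions are taken over the same poles $\lambda_j$ and the residues match term by term; in particular $\tilde b_j^{(1)}(\underline{m},\underline{x})=b_j^{(1)}(\underline{m}^\tau,-\underline{x}^\tau)$.

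First I would then apply Lemma \ref{cor:xn} to the transformed configuration $(\underline{m}^\tau,-\underline{x}^\tau)$. That lemma equates $e^{x_n}$ with $\sum_j b_j^{(1)}$, where $x_n$ denotes the position of the last (rightmost) mass. The decisive bookkeeping step is to identify this position after the transformation: the position vector becomes $-\underline{x}^\tau=(-x_n,\dots,-x_1)$, whose $n$-th entry is $-x_1$. Hence Lemma \ref{cor:xn} yields $e^{-x_1}=\sum_j b_j^{(1)}(\underline{m}^\tau,-\underline{x}^\tau)=\sum_j\tilde b_j^{(1)}(\underline{m},\underline{x})$, which is precisely \eqref{eq:x1}.

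For completeness one can instead argue directly, mirroring the proof of Lemma \ref{cor:xn}: the Residue Theorem applied to $\widetilde\omega$ gives $\sum_j\tilde b_j^{(1)}=-\mathrm{Res}(\widetilde\omega,\infty)$, and $\mathrm{Res}(\widetilde\omega,\infty)=\lim_{z\to\infty}\tfrac{\tilde B(z)}{2\tilde A(z)}$. By Theorem \ref{thm:symmtildeABC} the top-degree ($z^n$) coefficients of $\tilde B$ and $\tilde A$ arise from the unique full index set $I=\{1,\dots,n\}$, for which $i_1=1$, and differ only through the second versus first entry of the vector $\left(1,-2e^{-x_{i_1}},e^{-2x_{i_1}}\right)$; the common factor $\prod_i m_i\prod_j(1-e^{x_{i_j}-x_{i_{j+1}}})^2$ cancels, leaving $\mathrm{Res}(\widetilde\omega,\infty)=-e^{-x_1}$ and hence $\sum_j\tilde b_j^{(1)}=e^{-x_1}$.

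I do not anticipate a genuine obstacle: the entire content is the symmetry identity from Theorem \ref{thm:symmtildeABC} combined with index-reversal bookkeeping. The single point demanding care is exactly that bookkeeping --- verifying that under $\underline{x}\mapsto-\underline{x}^\tau$ the role of the rightmost mass is taken over by $-x_1$, so that Lemma \ref{cor:xn} delivers $e^{-x_1}$ and not some other $e^{\pm x_k}$.
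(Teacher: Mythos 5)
Your proposal is correct and follows exactly the route the paper intends: the paper gives no explicit proof, stating only that the result follows ``with the help of Theorem \ref{thm:symmtildeABC}'' as a counterpart of Lemma \ref{cor:xn}, which is precisely your argument (the symmetry $\tilde A(z;\underline{m},\underline{x})=A(z;\underline{m}^\tau,-\underline{x}^\tau)$, $\tilde B(z;\underline{m},\underline{x})=B(z;\underline{m}^\tau,-\underline{x}^\tau)$ combined with the residue-at-infinity computation, where the leading coefficient of $\tilde B$ carries $-2e^{-x_{i_1}}$ with $i_1=1$). Both of your variants --- transporting Lemma \ref{cor:xn} to the reversed configuration $(\underline{m}^\tau,-\underline{x}^\tau)$, whose rightmost position is $-x_1$, or redoing the residue computation directly for $\widetilde\omega$ --- are sound fillings of the gap the paper leaves to the reader.
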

Likewise, an analogue of Corollary \ref{cor:noescaperight} is immediate.

\begin{corollary}\label{cor:noescapeleft}
The first mass cannot escape to $-\infty$ in finite  real time.
\end{corollary}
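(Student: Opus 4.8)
The plan is to mirror the proof of Corollary \ref{cor:noescaperight}, exchanging the direct spectral data for the adjoint data. The quantity controlling the left-most mass is $e^{-x_1}$, and escape of $x_1$ to $-\infty$ as $t$ approaches some finite value $t^*$ is equivalent to $e^{-x_1}$ becoming unbounded there. So it suffices to show that $e^{-x_1}(t)$ stays bounded on every bounded real time interval.

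First I would invoke Lemma \ref{cor:x1} to write $e^{-x_1}=\sum_j \tilde b_j^{(1)}(t)$, reducing the claim to a statement about the first-order residues of the adjoint Weyl function $\widetilde\omega(z)$. Next, Theorem \ref{thm:tildegenresevolution} supplies the explicit time dependence $\tilde b_j^{(1)}(t)=\tilde p_j^{(1)}(t)\,e^{-t/\lambda_j}$, a polynomial in $t$ times a pure exponential. Each such term has at most exponential growth in $t$: even when $\lambda_j$ is complex, one has $\abs{e^{-t/\lambda_j}}=e^{-t\,\mathrm{Re}(1/\lambda_j)}$, which is finite for every finite $t$. Here $\lambda_j\ne 0$, since $\tilde A(0)=A(0)=1$ by Theorem \ref{thm:symmtildeABC}, so $0$ is not an eigenvalue and the expression $e^{-t/\lambda_j}$ is well defined.

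Because the partial-fraction sum is finite, $e^{-x_1}$ is a finite sum of products of polynomials and exponentials, hence continuous and therefore bounded on every compact subinterval of $\R$. I would close by observing that boundedness of $e^{-x_1}$ on $[0,t^*]$ forces $x_1$ to remain bounded below there, ruling out $x_1\to -\infty$ in finite time. I do not anticipate any genuine obstacle; the only point requiring a word of care is that the eigenvalues may be complex, but the modulus estimate above disposes of that at once, exactly as in the direct case treated in Corollary \ref{cor:noescaperight}.
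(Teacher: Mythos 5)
Your proposal is correct and follows essentially the same route as the paper: the paper also deduces this corollary from Lemma \ref{cor:x1} together with Theorem \ref{thm:tildegenresevolution}, concluding that $e^{-x_1}$ has at most exponential growth and is therefore bounded in finite real time, exactly mirroring the proof of Corollary \ref{cor:noescaperight}. Your extra remarks on complex eigenvalues and on $\lambda_j\neq 0$ are sound refinements of details the paper leaves implicit.
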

\begin{example} Case $n=3$.
In this case we are only dealing with simple and quadratic roots, since the triple roots cannot occur as will be proved in Section 4. The formulas for $e^{x_3(t)}$ and $e^{-x_1(t)}$ read:
\begin{equation*}\begin{aligned}
&e^{x_3(t)}=\left\{\begin{aligned}&b_1^{(1)}(0)e^{\frac t{\lambda_1}}+b_2^{(1)}(0)e^{\frac t{\lambda_2}}+b_3^{(1)}(0)e^{\frac t{\lambda_3}},&\text{simple roots,} \\&b_1^{(1)}(0)e^{\frac t{\lambda_1}}+(b_2^{(1)}(0)-\frac{b_2^{(2)}(0)t}{\lambda_2^2})e^{\frac t{\lambda_2}},&\text{quadratic roots.}  \end{aligned}\right. \\
&e^{-x_1(t)}=\left\{\begin{aligned}&\tilde b_1^{(1)}(0)e^{-\frac t{\lambda_1}}+\tilde b_2^{(1)}(0)e^{-\frac t{\lambda_2}}+\tilde b_3^{(1)}(0)e^{-\frac t{\lambda_3}},&\text{simple roots,} \\&\tilde b_1^{(1)}(0)e^{-\frac t{\lambda_1}}+(\tilde b_2^{(1)}(0)+\frac{\tilde b_2^{(2)}(0)t}{\lambda_2^2})e^{-\frac t{\lambda_2}},&\text{quadratic roots.}  \end{aligned}\right. \end{aligned}
\end{equation*}

\end{example}

The spectral problem and its adjoint are clearly related and we turn now to establishing
a relation between them.  To this end we study the coefficients occurring in the eigenfunctions of the spectral problem \eqref{eq:ABC} and \eqref{eq:tildeABC}.
\begin{lemma}\label{lem:FundIdentity}
\begin{align*}
&2A(z)C(-z)+2A(-z)C(z)-B(z)B(-z)=0, \\
& 2\tilde A(z)\tilde C(-z)+2\tilde A(-z)\tilde C(z)-\tilde B(z)\tilde B(-z)=0.
\end{align*}
\end{lemma}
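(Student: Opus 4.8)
The plan is to read this identity as the statement that a suitable bilinear concomitant of two solutions of the $x$-equation is constant in $x$, exploiting that the third-order operator $L:=\partial_x-\partial_x^3$ in \eqref{eq:Lax} is formally skew-adjoint, $L^*=-L$. A direct integration by parts (Lagrange identity) gives, for smooth $u,v$,
\[
vLu+uLv=\frac{d}{dx}\,P[u,v],\qquad P[u,v]:=uv+u'v'-uv''-u''v .
\]
First I would take $u=\Psi(x;z)$ and $v=\Psi(x;-z)$, the solutions of the $x$-equation in \eqref{eq:Lax} for spectral parameters $z$ and $-z$, both subject to the left boundary condition $\Psi\sim e^x$ in \eqref{eq:asymptotics}. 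Since $Lu=zmu$ and $Lv=-zmv$, the smooth part of $vLu+uLv$ vanishes identically, so $P[u,v]$ is constant on each interval complementary to $\mathrm{supp}\,m=\{x_1,\dots,x_n\}$.

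The step I expect to be the main obstacle is controlling $P[u,v]$ across the points $x_i$, where $m=2\sum_i m_i\delta_{x_i}$ forces $\Psi''$ to jump. Here I would use the standard regularity for the cubic-string problem: $\Psi(\cdot;z)$ and $\Psi'(\cdot;z)$ are continuous, while balancing the measure in $u'-u'''=zmu$ gives the jump conditions $[u'']_{x_i}=-2zm_iu(x_i)$ and $[v'']_{x_i}=+2zm_iv(x_i)$, the relative sign coming precisely from the $z\mapsto-z$ pairing. Among the four terms of $P$ only $-uv''$ and $-u''v$ are discontinuous, so the jump of $P$ at $x_i$ is
\[
[P]_{x_i}=-u(x_i)[v'']_{x_i}-[u'']_{x_i}v(x_i)=-2zm_iu(x_i)v(x_i)+2zm_iu(x_i)v(x_i)=0 .
\]
Hence $P[u,v]$ is genuinely continuous and therefore a global constant in $x$.

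It then remains to evaluate this constant in two regions. As $x\to-\infty$ both $u$ and $v$ equal $e^x$ (the boundary value is independent of $z$), and $P[e^x,e^x]=e^{2x}+e^{2x}-e^{2x}-e^{2x}=0$; thus $P\equiv 0$. For $x>x_n$ I substitute the far-field form \eqref{eq:ABC}, namely $u=A(z)e^x+B(z)+C(z)e^{-x}$ and $v=A(-z)e^x+B(-z)+C(-z)e^{-x}$, into $P$. Collecting coefficients, all $e^{\pm2x}$, $e^{\pm x}$ contributions cancel identically and the surviving $x$-independent term is exactly $B(z)B(-z)-2A(z)C(-z)-2A(-z)C(z)$. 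Equating this constant to $0$ gives the first identity.

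For the second identity I would run the identical argument for the adjoint problem \eqref{eq:adLax}: with $u=\tilde\Psi(x;z)$ and $v=\tilde\Psi(x;-z)$ one has $Lu=-zmu$, $Lv=zmv$, so $P[u,v]$ is again a constant that vanishes as $x\to+\infty$ (where $u=v=e^{-x}$ by \eqref{eq:adasymptotics}); substituting the far-field form $\tilde A(z)e^{-x}+\tilde B(z)+\tilde C(z)e^{x}$ from \eqref{eq:tildeABC} produces the constant term $\tilde B(z)\tilde B(-z)-2\tilde A(z)\tilde C(-z)-2\tilde A(-z)\tilde C(z)$, which must then be zero. Alternatively, the second identity follows from the first by the reflection $x\mapsto-x$ of Lemma \ref{lem:tildePsi} together with the substitution $(\underline m,\underline x)\mapsto(\underline m^\tau,-\underline x^\tau)$ provided by Theorem \ref{thm:symmtildeABC}. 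A purely computational check directly from the explicit polynomial formulas \eqref{eq:kABC} is also available, but far less transparent than the concomitant argument.
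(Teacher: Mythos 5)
Your proposal is correct and follows essentially the same route as the paper: your concomitant $P[u,v]=uv+u'v'-uv''-u''v$ is exactly the paper's quantity $\psi(x;z)\psi(x;\lambda)-B(\psi(x;z),\psi(x;\lambda))$ with $B(f,g)=f''g-f'g'+fg''$, and both arguments set $\lambda=-z$ to kill the measure term, deduce constancy in $x$, and evaluate at $x\to-\infty$ and $x>x_n$. The only difference is that you verify continuity of $P$ across the point masses via explicit jump conditions, whereas the paper treats the Lagrange identity distributionally; this is a welcome extra check, not a departure.
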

\begin{proof}
It suffices to write $(D-D^3)\psi(x;z) =zm \psi(x;z), (D-D^3)\psi(x;\lambda) =\lambda m \psi(x;\lambda)$ and obtain from it the identity:
\begin{equation*}
D\big( \psi(x;z)\psi(x;\lambda)-B(\psi(x;z),\psi(x;\lambda))\big)=(\lambda+z)m\psi(x;z)\psi(x;\lambda),
\end{equation*}
where $B(f,g)=f''g-f'g'+fg''$.  Finally, if one sets $z+\lambda=0$ and evaluates
the above expression at $x\rightarrow -\infty$ and $x\rightarrow \infty$ one obtains the first
claim. The proof of the second identity is analogous.
\end{proof}
We will briefly study the symmetry responsible for the connection between
the boundary value problem \eqref{eq:asymptotics} and its adjoint \eqref{eq:adasymptotics}.  To this end we recall the {\sl transition} matrix $S(z)$
introduced in \cite{ls-cubicstring}
\begin{equation*}
S(z)=S_n(z)S_{n-1}(z)\dotsb S_1(z),  \text{ where }
\begin{bmatrix}A_k\\B_k\\C_k\end{bmatrix}=S_k(z)\begin{bmatrix}A_{k-1}\\B_{k-1}\\C_{k-1}\end{bmatrix},
\end{equation*}
where $A_0=1, B_0=C_0=0$.  An explicit form of $S_k(z)$ is easy to compute:
\begin{equation}\label{eq:Sk}
S_k(z)=I-zm_k\begin{bmatrix}e^{-x_k}\\-2\\e^{x_k}\end{bmatrix}
\begin{bmatrix}e^{x_k}&1&e^{-x_k}\end{bmatrix}.
\end{equation}
Define now
\begin{definition}
$J=\begin{bmatrix} 0&0&1\\0&-2&0\\1&0&0 \end{bmatrix}$.
\end{definition}
We can define the loop group of continuous maps $G: \R \rightarrow \text{SL}(3,\R)$; clearly $S(z) \in G$.  Moreover, if we introduce involution:
$\tau: G\rightarrow G, g(z)\rightarrow J(g^{-1}(-z))^T J^{-1}$, then
$S_k(z)\in G_{\tau}=\{g=\tau(g)\}$, a subgroup fixed by $\tau$.  Hence
\begin{lemma}
$S(z) \in G_{\tau}$.
\end{lemma}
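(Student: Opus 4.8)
The plan is to reduce the claim to the single algebraic fact that $G_\tau$ is a genuine subgroup of $G$, after which the factorization $S(z)=S_n(z)S_{n-1}(z)\dotsb S_1(z)$ together with the already-noted membership $S_k(z)\in G_\tau$ finishes the argument at once. So the whole content lies in showing that the fixed-point set of the involution $\tau$ is closed under the pointwise matrix product and contains the identity.

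First I would verify that $\tau$ is a group homomorphism for the pointwise product on $G$. Writing $\tau(g)(z)=J\,(g^{-1}(-z))^T J^{-1}$ and using $(gh)^{-1}=h^{-1}g^{-1}$ together with $(AB)^T=B^TA^T$, the two order reversals cancel: one has $((gh)^{-1}(-z))^T=(g^{-1}(-z))^T(h^{-1}(-z))^T$. Inserting $J^{-1}J=I$ between the two factors then gives $\tau(gh)(z)=\tau(g)(z)\,\tau(h)(z)$. This is the one point that deserves care, and it is the step I expect to be the only real obstacle: a single reversal would make $\tau$ an anti-homomorphism and the fixed set would not obviously be a subgroup, so it is precisely the combination of inverse and transpose that restores the order and makes $\tau$ a straight homomorphism.

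Next I would record the two remaining group-theoretic facts. The identity is fixed, since $\tau(I)(z)=J\,I\,J^{-1}=I$, so $I\in G_\tau$. And $\tau$ is an involution: using that $J$ is symmetric, $J^T=J$, one computes $\tau(g)^{-1}(z)=J\,g(-z)^T J^{-1}$ and hence $\tau(\tau(g))(z)=g(z)$, so $\tau^2=\mathrm{id}$. Consequently $\tau$ is an involutive automorphism of $G$, and its fixed-point set $G_\tau=\{g:\tau(g)=g\}$ is a subgroup: if $\tau(g)=g$ and $\tau(h)=h$ then $\tau(gh)=\tau(g)\tau(h)=gh$, whence $gh\in G_\tau$.

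Finally I would invoke the factorization. Each factor satisfies $S_k(z)\in G_\tau$, as already observed from the explicit form \eqref{eq:Sk}, and $G_\tau$ is closed under products, so the ordered product $S(z)=S_n(z)\dotsb S_1(z)$ again lies in $G_\tau$. Apart from the homomorphism check highlighted above, every step is routine.
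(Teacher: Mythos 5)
Your proof is correct and follows exactly the paper's route: the paper also deduces the lemma from the factorization $S(z)=S_n(z)\dotsb S_1(z)$, the membership $S_k(z)\in G_\tau$, and the assertion that $G_\tau$ is a subgroup. Your only addition is to verify that asserted subgroup property explicitly (the cancellation of the two order reversals making $\tau$ a homomorphism), a detail the paper states without proof.
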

Let us denote the canonical basis $e_1=\begin{bmatrix}1\\0\\0 \end{bmatrix},
e_2=\begin{bmatrix}0\\1\\0 \end{bmatrix}, e_3=\begin{bmatrix}0\\0\\1 \end{bmatrix}
$ by the shorthand notation $|1\rangle|, |2\rangle, |3\rangle$.  Then equations
\eqref{eq:ABC}, \eqref{eq:tildeABC} can be written
\begin{equation}
\begin{bmatrix}A(z)\\B(z)\\C(z)\end{bmatrix}=S(z)|1\rangle
\qquad \begin{bmatrix}\tilde C(z)\\\tilde B(z)\\\tilde A(z)\end{bmatrix}=\widetilde S(z)|3\rangle,
\end{equation}
where
\begin{equation}\label{eq:tildeS}
\widetilde S(z)=S_1^{-1}(-z)\dotsb S_n^{-1}(-z)=S(-z)^{-1}.
\end{equation}
Moreover, since $S_k(z)\in G_{\tau}$,
\begin{equation}\label{eq:tildeSg}
\widetilde S(z)=JS^T(z)J^{-1}.
\end{equation}
This is a fundamental relation which allows one to relate the spectral data
for the boundary problem and its adjoint.
\begin{theorem}\label{thm:symmetry}
\mbox{}
\begin{enumerate}
\item $\tilde A(z)=S(z)_{11}, \quad \tilde B(z)=-2S(z)_{12},\quad  \tilde C(z)=S(z)_{13}$.
\item Suppose $\lambda_i$ is a root of $A(z)=0$ then
\begin{equation}\label{eq:cbtilde}
B(-\lambda_i)=C(\lambda_i)\tilde B(\lambda_i).
\end{equation}
\item Suppose $B(-\lambda_i)\neq 0$ then
\begin{equation}\label{eq:bbtilde}
2A(-\lambda_i)=B(\lambda_i)\tilde B(\lambda_i).
\end{equation}
\end{enumerate}
\end{theorem}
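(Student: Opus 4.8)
The plan is to read part (1) straight off the involution, and then to reduce parts (2) and (3) to a single algebraic identity for the entries of $\widetilde S$, obtained from its adjugate rather than from any Wronskian computation.

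For part (1) I would start from the representation $\begin{bmatrix}\tilde C\\ \tilde B\\ \tilde A\end{bmatrix}=\widetilde S(z)|3\rangle$ together with \eqref{eq:tildeSg}, i.e. $\widetilde S(z)=JS^T(z)J^{-1}$. Since $J^{-1}|3\rangle=|1\rangle$, this gives $\widetilde S(z)|3\rangle=JS^T(z)|1\rangle$, and $S^T(z)|1\rangle$ is simply the first row of $S(z)$ written as a column, namely $(S_{11},S_{12},S_{13})^T$. Applying $J$ and matching entries yields $\tilde C=S_{13}$, $\tilde B=-2S_{12}$, $\tilde A=S_{11}$, which is part (1). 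The same computation applied to $\widetilde S(z)|1\rangle=JS^T(z)|3\rangle$ records $\widetilde S_{31}=S_{31}=C$, while the third-column representation together with $\tilde A=A$ (Theorem \ref{thm:symmtildeABC}) gives $\widetilde S_{33}=\tilde A=A$ and $\widetilde S_{23}=\tilde B$; these three entries are exactly what I will need below.

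For part (2) the natural temptation is to imitate the Wronskian argument behind Lemma \ref{lem:FundIdentity}, pairing $\Psi(\cdot;z)$ with the adjoint solution $\widetilde\Psi$. I expect this to be the main trap: the bilinear expression $\psi_1\psi_2-B(\psi_1,\psi_2)$ is $x$-independent only when the two spectral parameters sum to zero, and the admissible opposite-parameter pairings reproduce only the relation $A=\tilde A$ and the two identities already contained in Lemma \ref{lem:FundIdentity}; they never produce the mixed relation \eqref{eq:cbtilde}. So I would abandon the analytic route and argue algebraically. Because $S(z)\in\mathrm{SL}(3,\R)$ we have $\det\widetilde S(z)=1$, hence from \eqref{eq:tildeS} that $S(-z)=\widetilde S(z)^{-1}=\adj\widetilde S(z)$. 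Reading off the $(2,1)$ entry, $B(-z)=\bigl[\adj\widetilde S(z)\bigr]_{21}=-\bigl(\widetilde S_{21}(z)\widetilde S_{33}(z)-\widetilde S_{23}(z)\widetilde S_{31}(z)\bigr)$, and substituting $\widetilde S_{33}=A$, $\widetilde S_{23}=\tilde B$, $\widetilde S_{31}=C$ gives the general identity $B(-z)=-\widetilde S_{21}(z)A(z)+\tilde B(z)C(z)$. The decisive point is that the unknown middle-block entry $\widetilde S_{21}=-2S_{32}$, which is merely some polynomial in $z$, enters only multiplied by $A(z)$; since $A(\lambda_i)=0$ it drops out entirely, leaving $B(-\lambda_i)=C(\lambda_i)\tilde B(\lambda_i)$, which is \eqref{eq:cbtilde}.

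Finally, part (3) follows by combining part (2) with Lemma \ref{lem:FundIdentity}. Evaluating the first identity $2A(z)C(-z)+2A(-z)C(z)-B(z)B(-z)=0$ at $z=\lambda_i$ and using $A(\lambda_i)=0$ gives $2A(-\lambda_i)C(\lambda_i)=B(\lambda_i)B(-\lambda_i)$. Under the hypothesis $B(-\lambda_i)\neq0$, part (2) forces $C(\lambda_i)\neq0$, so I may insert $B(-\lambda_i)=C(\lambda_i)\tilde B(\lambda_i)$ and cancel the nonzero factor $C(\lambda_i)$ to obtain $2A(-\lambda_i)=B(\lambda_i)\tilde B(\lambda_i)$, i.e. \eqref{eq:bbtilde}. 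The non-vanishing hypothesis is used precisely to justify this cancellation, and the only genuine obstacle in the whole argument is the realization in part (2) that the uncontrolled entries of the transition matrix disappear exactly at an eigenvalue.
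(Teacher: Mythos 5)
Your proposal is correct and follows essentially the same route as the paper: part (1) is the same computation with $J S^T(z) J^{-1}$, part (3) is the same combination of Lemma \ref{lem:FundIdentity} with \eqref{eq:cbtilde} followed by a cancellation, and your part (2) — writing $S(-z)=\adj\widetilde S(z)$ and noting that the uncontrolled entry $\widetilde S_{21}=-2S_{32}$ multiplies $A(z)$ and hence drops out at an eigenvalue — is term-for-term the paper's own argument, which phrases the identical cofactor identity via the involution $\tau$ (i.e. $B(z)=2\bigl(S(-z)_{11}S(-z)_{32}-S(-z)_{12}S(-z)_{31}\bigr)$, with the $S_{11}=A$ term vanishing at $z=-\lambda_i$). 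The only cosmetic differences are whether the $2\times2$ minor is read off from $\adj\widetilde S(z)$ or from $S(-z)$, and whether one cancels $C(\lambda_i)$ or $B(-\lambda_i)$ in part (3).
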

\begin{proof}
By definition $\tilde A(z)=\langle 3| JS^T(z) J^{-1}| 3 \rangle =\langle 1|S^T(z)|1 \rangle=S(z)_{11}$.  Likewise, $\tilde B(z)=\langle 2|JS^T(z)J^{-1}|3\rangle=-2\langle 2|S^T(z)|1\rangle=-2 S(z)_{12}$ and $\tilde C(z)=\langle 1|JS^T(z)J^{-1}|3\rangle=S(z)_{13} $.  The second item is proved by making
use of the involution $\tau$.  On one hand $B(z)=\langle 2|S(z)|1\rangle$, on the other, since $S(z)\in G_{\tau}$, $B(z)=-2\langle 3|S^{-1}(z)|2\rangle=2\begin{vmatrix}S(-z)_{11}&S(-z)_{12}\\S(-z)_{31}&S(-z)_{32} \end{vmatrix}$.  Finally, since $S(z)_{11}=A(z)$, evaluating the determinant at
the (minus) root $\lambda _i$ of $A(z)$ we obtain $B(-\lambda_i)=-2 S(\lambda_i)_{12}S(\lambda_i)_{31}=\tilde B(\lambda_i)C(\lambda_i)$, in view of the
statement from item $(1)$.  Finally, to prove item $(3)$, we set $z=\lambda_i$ in the statement of Lemma \ref{lem:FundIdentity} to get $2A(-\lambda_i)C(\lambda_i)=B(\lambda_i) B(-\lambda_i)$.  Upon
multiplying equation \eqref{eq:cbtilde} by $2A(-\lambda_i)$  and eliminating
the term involving $2A(-\lambda_i)C(\lambda_i)$
we obtain
 $2A(-\lambda_i)B(-\lambda_i)=B(-\lambda_i)B(\lambda_i)\tilde B_i(\lambda_i)$,
 resulting in equation \eqref{eq:bbtilde}.
\end{proof}
Consider now the Weyl function $\ds{\omega}(z)$ and its adjoint $\ds{\widetilde\omega(z)}$
in the case of simple spectrum, i.e. \\
$\ds{\omega(z)=-\frac{B(z)}{2zA(z)}=\sum_{i=1}^n \frac{b_i}{z-\lambda_i}}$ and
$\ds{\widetilde\omega(z)=-\frac{\tilde B(z)}{2z\tilde A(z)}=\sum_{i=1}^n\frac{\tilde b_i}{z-\lambda_i}}$.
\begin{theorem}\label{thm:b-tildeb} Suppose the spectral problem $A(z)=0$ has only simple roots $\lambda_i$
and there are no anti-resonances $(\lambda_i+\lambda_j\neq 0)$.  Then
\mbox{}
\begin{equation}\label{eq:b-tildeb}
b_i\tilde b_i=\prod_{j\neq i} \frac{1+\frac{\lambda_i}{\lambda_j}}{(1-\frac{\lambda_i}{\lambda_j})^2}.
\end{equation}
\end{theorem}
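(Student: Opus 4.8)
The plan is to compute the two residues $b_i$ and $\tilde b_i$ explicitly and then collapse their product using the symmetry identities already established. Since the roots of $A(z)=0$ are simple and $0$ is not among them (because $A(0)=1$), the residue of $\omega$ at $z=\lambda_i$ is $b_i=-B(\lambda_i)/\bigl(2\lambda_i A'(\lambda_i)\bigr)$, and likewise $\tilde b_i=-\tilde B(\lambda_i)/\bigl(2\lambda_i\tilde A'(\lambda_i)\bigr)$. The first reduction comes from Theorem~\ref{thm:symmtildeABC}: since $A=\tilde A$, the derivatives agree, $\tilde A'(\lambda_i)=A'(\lambda_i)$, so that
\[
b_i\tilde b_i=\frac{B(\lambda_i)\,\tilde B(\lambda_i)}{4\lambda_i^2\,A'(\lambda_i)^2}.
\]

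The heart of the argument is to eliminate $B$ and $\tilde B$ in favour of $A$ alone. For this I would invoke item (3) of Theorem~\ref{thm:symmetry}, relation \eqref{eq:bbtilde}, which states $B(\lambda_i)\tilde B(\lambda_i)=2A(-\lambda_i)$. Substituting gives the compact intermediate form
\[
b_i\tilde b_i=\frac{A(-\lambda_i)}{2\lambda_i^2\,A'(\lambda_i)^2},
\]
in which every quantity is now an evaluation of the single polynomial $A$, either at $\lambda_i$ or at its reflection $-\lambda_i$.

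The remaining step is a direct computation from the factored form of $A$. Since $A$ has degree $n$ with $A(0)=1$ and simple zeros $\lambda_1,\dots,\lambda_n$, I would write $A(z)=\prod_{j=1}^n\bigl(1-z/\lambda_j\bigr)$. Evaluating at $-\lambda_i$ isolates the factor $j=i$ as $2$, giving $A(-\lambda_i)=2\prod_{j\neq i}\bigl(1+\lambda_i/\lambda_j\bigr)$; differentiating and setting $z=\lambda_i$ kills all but one term of the product rule, giving $A'(\lambda_i)=-\lambda_i^{-1}\prod_{j\neq i}\bigl(1-\lambda_i/\lambda_j\bigr)$. Inserting both and cancelling the factor $2$ and the powers of $\lambda_i$ yields exactly $\prod_{j\neq i}\frac{1+\lambda_i/\lambda_j}{(1-\lambda_i/\lambda_j)^2}$, which is \eqref{eq:b-tildeb}.

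The one delicate point, and the step I expect to require the most care, is justifying the substitution \eqref{eq:bbtilde}, since Theorem~\ref{thm:symmetry}(3) carries the proviso $B(-\lambda_i)\neq 0$. One can in fact weaken this: evaluating Lemma~\ref{lem:FundIdentity} at $z=\lambda_i$ and using $A(\lambda_i)=0$ gives $2A(-\lambda_i)C(\lambda_i)=B(\lambda_i)B(-\lambda_i)$, and combining this with \eqref{eq:cbtilde} produces $C(\lambda_i)\bigl(2A(-\lambda_i)-B(\lambda_i)\tilde B(\lambda_i)\bigr)=0$; hence the needed identity holds as soon as $C(\lambda_i)\neq 0$. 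The genuine content of the proof is therefore to exclude $C(\lambda_i)=0$, and this is precisely where the two hypotheses enter: simplicity of the spectrum makes the $\lambda_i$ authentic poles of $\omega$ and $\widetilde\omega$, so $B(\lambda_i)\neq 0$ and $\tilde B(\lambda_i)\neq 0$, while the no-antiresonance condition $\lambda_i+\lambda_j\neq 0$ forces $A(-\lambda_i)=2\prod_{j\neq i}(1+\lambda_i/\lambda_j)\neq 0$, keeping the claimed right-hand side finite and nonzero. I expect verifying the non-degeneracy $C(\lambda_i)\neq 0$ under these hypotheses to be the real obstacle, the rest being the algebra carried out above.
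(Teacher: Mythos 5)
Your algebraic core coincides exactly with the paper's own proof: the residue formulas $b_i=-B(\lambda_i)/\bigl(2\lambda_i A'(\lambda_i)\bigr)$, $\tilde b_i=-\tilde B(\lambda_i)/\bigl(2\lambda_i A'(\lambda_i)\bigr)$ (using $\tilde A=A$ from Theorem \ref{thm:symmtildeABC}), the substitution $B(\lambda_i)\tilde B(\lambda_i)=2A(-\lambda_i)$ from \eqref{eq:bbtilde}, and the evaluation of $A(-\lambda_i)$ and $A'(\lambda_i)$ from $A(z)=\prod_j(1-z/\lambda_j)$; that computation is correct. You also spot something the paper passes over in silence: Theorem \ref{thm:symmetry}(3) establishes \eqref{eq:bbtilde} only under the proviso $B(-\lambda_i)\neq 0$, and the paper's proof invokes it without checking this.

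The genuine gap is that the repair you propose cannot be carried out, precisely at the step you defer. Your reduction of the proviso to $C(\lambda_i)\neq 0$ (via Lemma \ref{lem:FundIdentity} and \eqref{eq:cbtilde}) is valid, but $C(\lambda_i)\neq 0$ is \emph{not} a consequence of simplicity plus no anti-resonance; it is false in general. Take $n=2$, $m_1=2$, $m_2=-\frac34$, $x_1=0$, $x_2=\ln 2$: then $A(z)=1-\frac54 z-\frac38 z^2$ has simple roots $\lambda_1=\frac23$, $\lambda_2=-4$ with $\lambda_1+\lambda_2\neq 0$, yet $C(z)=z-\frac32 z^2$ vanishes at $\lambda_1$ (equivalently $B(-\lambda_1)=0$, where $B(z)=z+\frac32 z^2$). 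The conclusion \eqref{eq:b-tildeb} nevertheless holds there --- both sides equal $\frac{30}{49}$ at $\lambda_1$ --- so the theorem is not at fault, only the proposed route to it. Relatedly, your assertion that simplicity makes the $\lambda_i$ ``authentic poles'', so that $B(\lambda_i)\neq 0$ and $\tilde B(\lambda_i)\neq 0$, is circular: the non-vanishing of the residues is part of what the theorem asserts, not a hypothesis. What actually rescues \eqref{eq:bbtilde} when $C(\lambda_i)=0$ is the adjoint twin of your identity: the second identity of Lemma \ref{lem:FundIdentity} combined with the adjoint analogue of \eqref{eq:cbtilde}, namely $\tilde B(-\lambda_i)=\tilde C(\lambda_i)B(\lambda_i)$, gives $\tilde C(\lambda_i)\bigl(2A(-\lambda_i)-B(\lambda_i)\tilde B(\lambda_i)\bigr)=0$, and in the example above $\tilde C(\lambda_1)=-\frac{11}{8}\neq 0$. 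A complete argument along your lines must therefore prove that $C(\lambda_i)$ and $\tilde C(\lambda_i)$ cannot vanish simultaneously (or else extend \eqref{eq:bbtilde} from the generic case by a continuity/density argument in the parameters $(m_j,x_j)$); the pointwise claim $C(\lambda_i)\neq 0$ is a dead end, and neither your proposal nor the paper supplies the missing step.
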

\begin{proof}
Under the assumption of simple spectrum:
\begin{equation*}
b_i\tilde b_i=\frac{B(\lambda_i)\tilde B(\lambda_i)}{4 \lambda_i^2  (A'(\lambda_i))^2}
\end{equation*}
which simplifies, after using equation \eqref{eq:bbtilde}, to
\begin{equation*}
b_i\tilde b_i=\frac{A(-\lambda_i)}{2 \lambda_i^2  (A'(\lambda_i))^2}=\frac{\prod_{j=1}^n (1+\frac{\lambda_i}{\lambda_j})}{2\prod_{j\neq i}(1-\frac{\lambda_i}{\lambda_j})^2},
\end{equation*}
which implies the claim if one observes that the term with $j=i$ appearing in the numerator contributes the factor of $2$ canceling the one from the numerator.
\end{proof}

\begin{remark}
This beautiful identity generalizes the one known from the
ordinary string problem \cite{bss-string} which in our notation reads:
\begin{equation*}
b_i\tilde b_i=\prod_{j\neq i}\Big(1-\frac{\lambda_i}{\lambda_{j}}\Big)^{-2}.
\end{equation*}
\end{remark}
\begin{remark}
The presence of anti-resonances $(\lambda_i+\lambda_j=0)$ is characteristic  of the DP equation as can be seen, for example, from explicit solutions.
\end{remark}
\section{Three multipeakons}\label{sec:3peakons}
In this section we apply the methods developed in Section \ref{sec:Fundamentals} to
study three multipeakons, with emphasis on the behaviour of solutions at the time of blow-up.
As before we use the multipeakon ansatz \eqref{eq:peakon-ansatz}
\begin{equation}\label{eq:3ansatz}
u(x,t)=\sum_{i=1}^3m_i(t)\, e^{-|x-x_i(t)|}
\end{equation}
where $x_1(0)<x_2(0)<x_3(0)$,  and we no longer assume that $m_i(t)$ are all
positive.  In spite of that we will refer to $m_j$s as {\sl masses} to emphasize
their roles in the spectral problem.  We will need a bit of terminology regarding
the phenomenon of breaking. Since we will be analyzing a system of ODEs
obtained from a restriction of equation \eqref{eq:b-ode-short} we will say that
at some time $t_0$ a {\sl collision } occured if for some $i\ne j$, $x_i(t_0)=x_j(t_0)$.
In the case of the CH equation the presence of a collision is
tantamount to a wave breaking (\cite{bss-moment}) but the solution can be
continued with the preservation of the Sobolev $H^1(\R)$ norm beyond the collision time.
This is not the case for the DP equation as was anticipated by Lundmark in \cite{lundmark-shockpeakons} for the case of the peakon-antipeakon pair.  We confirm his assertion that
the shockpeakons are created by proving that  $m=u-u_{xx}$
tends to the shockpeakon data in the distribution topology at the collision time (see Theorem \ref{thm:shockcreation}).

We start by setting $b=3$ and $n=3$ in the multipeakon equation
\eqref{eq:b-ode-short}, which leads to the following ODEs in the sector $X=\{{\bf x} \in \R^3\ |x_1<x_2<x_3\}$:
\begin{subequations}\label{eq:3peakons}
\begin{align}
&\dot{x}_1=m_1+m_2e^{x_1-x_2}+m_3e^{x_1-x_3}, \\ &\dot{x}_2=m_1e^{x_1-x_2}+m_2+m_3e^{x_2-x_3}, \\ &\dot{x}_3=m_1e^{x_1-x_3}+m_2e^{x_2-x_3}+m_3, \\ &\dot{m}_1=2m_1(-m_2e^{x_1-x_2}-m_3e^{x_1-x_3}), \\ &\dot{m}_2=2m_2(m_1e^{x_1-x_2}-m_3e^{x_2-x_3}), \\ &\dot{m}_3=2m_3(m_1e^{x_1-x_3}+m_2e^{x_2-x_3}).  \end{align} \end{subequations}
This system of equations has the following obvious symmetry.
\begin{lemma}\label{lem:symmetry}
Suppose $\{x_1(t),x_2(t),x_3(t),m_1(t),m_2(t),m_3(t)\}$ is a solution
of equations \eqref{eq:3peakons} at time $t$ with the initial condition $\{x_1(0),x_2(0),x_3(0),m_1(0),m_2(0),m_3(0)\}$.
Then $\{x_1(t),x_2(t),x_3(t),-m_1(t),-m_2(t),-m_3(t)\}$ is the solution
at time $-t$ with the initial condition $\{x_1(0),x_2(0),x_3(0),-m_1(0),-m_2(0),-m_3(0)\}$.
\end{lemma}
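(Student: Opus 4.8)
The plan is to realize this symmetry as the composition of time reversal $t\mapsto -t$ with the sign flip $m_i\mapsto -m_i$, and to verify directly that this composition sends solutions of \eqref{eq:3peakons} to solutions. Given a solution $(x_1(t),x_2(t),x_3(t),m_1(t),m_2(t),m_3(t))$, I would introduce a new time variable $s$ and set
\[\hat x_i(s):=x_i(-s),\qquad \hat m_i(s):=-m_i(-s),\qquad i=1,2,3,\]
and show that $(\hat x_1,\hat x_2,\hat x_3,\hat m_1,\hat m_2,\hat m_3)$ again satisfies \eqref{eq:3peakons}. Once that is established, the two bookkeeping assertions of the lemma follow immediately: at $s=0$ one has $\hat x_i(0)=x_i(0)$ and $\hat m_i(0)=-m_i(0)$, which is the stated initial condition, while evaluating at $s=-t$ gives $\hat x_i(-t)=x_i(t)$ and $\hat m_i(-t)=-m_i(t)$, which is precisely the tuple appearing in the statement.

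The mechanism behind the verification is a homogeneity (degree) count in the masses. By the chain rule,
\[\frac{d\hat x_i}{ds}(s)=-\dot x_i(-s),\qquad \frac{d\hat m_i}{ds}(s)=\dot m_i(-s).\]
The key observation is that the right-hand sides of the position equations in \eqref{eq:3peakons} are homogeneous of degree $1$ in $(m_1,m_2,m_3)$, whereas the right-hand sides of the mass equations are homogeneous of degree $2$ (each term is a product of exactly two masses); the exponential factors depend only on the differences $x_i-x_j=\hat x_i-\hat x_j$ and are therefore left untouched. Consequently the substitution $m_i\mapsto\hat m_i=-m_i$ multiplies the position right-hand sides by $-1$ and leaves the mass right-hand sides invariant, which matches exactly the signs produced by time reversal in the two derivatives above.

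To make this explicit it suffices to check one representative equation of each type, the remaining four being identical after relabeling. For the first position equation,
\[\frac{d\hat x_1}{ds}=-\dot x_1(-s)=-\bigl(m_1+m_2 e^{x_1-x_2}+m_3 e^{x_1-x_3}\bigr)=\hat m_1+\hat m_2 e^{\hat x_1-\hat x_2}+\hat m_3 e^{\hat x_1-\hat x_3},\]
where all of $x_i,m_i$ on the middle side are evaluated at $-s$; the final equality is the degree-$1$ sign flip. For the first mass equation,
\[\frac{d\hat m_1}{ds}=\dot m_1(-s)=2 m_1\bigl(-m_2 e^{x_1-x_2}-m_3 e^{x_1-x_3}\bigr)=2\hat m_1\bigl(-\hat m_2 e^{\hat x_1-\hat x_2}-\hat m_3 e^{\hat x_1-\hat x_3}\bigr),\]
again with the middle side evaluated at $-s$, the last equality holding because this right-hand side is homogeneous of degree $2$ in the masses and hence invariant under $m_i\mapsto\hat m_i=-m_i$.

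There is no genuine conceptual obstacle here; the proof is a direct computation, and the only point requiring care is that the two distinct homogeneity degrees ($1$ for positions, $2$ for masses) must interact correctly with the single minus sign supplied by $t\mapsto -t$, which the displays above confirm. I would also note in passing that this symmetry is not special to $n=3$: it reflects the invariance of the full peakon system \eqref{eq:b-peakon-ode} under $(x_i,m_i,t)\mapsto(x_i,-m_i,-t)$, stemming from the same degree-$1$/degree-$2$ structure of the position and mass equations.
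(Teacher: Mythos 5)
Your proof is correct, and it is exactly the computation the paper leaves implicit: the paper states this lemma as an ``obvious symmetry'' with no proof, its content being summarized in the following remark that $t\mapsto -t$, $m_i\mapsto -m_i$ preserves equations \eqref{eq:3peakons}. Your chain-rule verification, organized around the degree-$1$ (positions) versus degree-$2$ (masses) homogeneity of the right-hand sides, together with the implicit appeal to uniqueness of solutions to justify calling $(\hat x_i,\hat m_i)$ \emph{the} solution, is the natural and complete way to fill in that omitted argument.
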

\begin{remark} In short, the lemma above means that $t\mapsto -t, m_i\mapsto -m_i$
is a symmetry of equations \eqref{eq:3peakons} which preserves the sector $X$.
\end{remark}
Another very useful property of equations \eqref{eq:3peakons} is the
existence of three constants of motion.  Indeed, we recall that the polynomial $A(z)$
introduced in \eqref{eq:3ABC} is time invariant.
Writing
\begin{equation}\label{eq:charpoly}
A(z)=1-M_1z+M_2z^2 -M_3 z^3
\end{equation}
we obtain the following lemma.
\begin{lemma}\label{lem:constants}
$M_1, M_2, M_3$, given by:
\[\begin{aligned}&M_1=m_1+m_2+m_3,\\ &M_2=m_1m_2(1-e^{x_1-x_2})^2+m_2m_3(1-e^{x_2-x_3})^2+m_3m_1(1-e^{x_1-x_3})^2,\\ &M_3=m_1m_2m_3(1-e^{x_1-x_2})^2(1-e^{x_2-x_3})^2,  \end{aligned}\]
are constants of motion of the system of equations \eqref{eq:3peakons}.
\end{lemma}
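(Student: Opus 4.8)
The plan is to avoid differentiating the three expressions by hand and instead read off the result from the isospectrality established in Section~\ref{sec:Fundamentals}. The key preliminary observation is that $M_1,M_2,M_3$ are nothing but the coefficients of the polynomial $A(z)$. Comparing the explicit $n=3$ formula \eqref{eq:3ABC} for $A(z)$ with the normal form \eqref{eq:charpoly}, $A(z)=1-M_1z+M_2z^2-M_3z^3$, one reads off $M_1=m_1+m_2+m_3$ together with the bracketed coefficients of $z^2$ and $z^3$, and these coincide term for term with the expressions in the statement. Thus proving the lemma reduces to showing that each of these coefficients is $t$-independent along solutions of \eqref{eq:3peakons}.

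First I would invoke \eqref{eq:t-spectral}, which gives $\dot A=0$; this identity was obtained by inserting the multipeakon ansatz into the second member of the Lax pair \eqref{eq:Lax} and using the peakon equations, so it holds precisely along solutions of \eqref{eq:3peakons}. Since $A(z)$ is $t$-invariant for every fixed value of the spectral parameter $z$, and since $A$ is a cubic in $z$ whose coefficients are the functions $M_1(t),M_2(t),M_3(t)$, differentiation in $t$ produces the polynomial identity $-\dot M_1 z+\dot M_2 z^2-\dot M_3 z^3=0$ valid for all $z$. Equating coefficients of $z,z^2,z^3$ forces $\dot M_1=\dot M_2=\dot M_3=0$, which is the claim.

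Because the lemma is an immediate corollary of isospectrality there is no genuine analytic obstacle; the only points needing care are bookkeeping. One must check that the match between \eqref{eq:3ABC} and \eqref{eq:charpoly} is correct term by term, in particular that $M_3$ carries only the consecutive factors $(1-e^{x_1-x_2})^2(1-e^{x_2-x_3})^2$ and no factor for the pair $(1,3)$, as dictated by the product $\prod_{j=1}^{p-1}(1-e^{x_{i_j}-x_{i_{j+1}}})^2$ in \eqref{eq:kABC}. As an independent verification that does not rely on the Lax pair, I would confirm $\dot M_1=0$ directly: by \eqref{eq:3peakons} the sum $\dot m_1+\dot m_2+\dot m_3$ telescopes, since the cross terms $2m_1m_2e^{x_1-x_2}$, $2m_1m_3e^{x_1-x_3}$, $2m_2m_3e^{x_2-x_3}$ each appear with both signs and cancel. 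The same brute-force route handles $M_2$ and $M_3$, but the cancellations are markedly more tedious, and that algebra is the one place where real (if routine) work would be needed --- precisely the labour the spectral argument lets us bypass.
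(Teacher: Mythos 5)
Your proposal is correct and is essentially the paper's own argument: the paper establishes Lemma \ref{lem:constants} exactly by recalling that $A(z)$ is time invariant (the relation $\dot A=0$ in \eqref{eq:t-spectral}) and identifying $M_1,M_2,M_3$ as the coefficients of the explicit cubic \eqref{eq:3ABC} written in the normal form \eqref{eq:charpoly}, so that invariance of the polynomial for all $z$ forces each coefficient to be constant. Your supplementary telescoping verification of $\dot M_1=0$ is a harmless extra check not present in the paper.
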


These constants will be one of our basic tools for studying collisions.
We observe that, geometrically speaking, a collision occurs if the solution approaches the boundary
of $X$ in finite time.  This is the only singular behaviour of the system
\eqref{eq:3peakons} happening in the coordinate space since Corollaries \ref{cor:noescaperight} and \ref{cor:noescapeleft} exclude an escape scenario in finite real time. However, the shape of the constants of motion shows that at a collision
at least two masses diverge, which will be proved in Corollary \ref{cor:notriplemassdiv}.


We begin now our study of the dynamics of three multipeakons in a vicinity of the collision by
first concentrating on the particles with labels $1$ and $3$.
Lemma \ref{cor:xn}, in particular equation \eqref{eq:xn},
gives us explicit form of $x_3$:
\begin{equation}\label{eq:x3}
e^{x_3(t)}=\sum_j b_j^{(1)}(t).
\end{equation}
Thus we obtain:
\begin{lemma}\label{lem:analx3}
Let $T_1$, $T_2$, be the largest negative, respectively the smallest
positive root of $\sum_j b_j^{(1)}(t)=0$ (if $T_1$ or $T_2$ does not exist we set
$T_1=-\infty, T_2=+\infty$ respectively).  Then
$e^{x_3(t)}$ is real analytic for $T_1<t<T_2$. Moreover,
if either $T_1$ or $T_2$ are finite then there must be a collision at some prior time $T_1<t_c<T_2$.
\end{lemma}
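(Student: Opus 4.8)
The plan is to read $e^{x_3(t)}$ off the representation \eqref{eq:x3}, namely $e^{x_3(t)}=\sum_j b_j^{(1)}(t)$, and to exploit the fact that the spectral data evolve by the simple linear flow \eqref{eq:t-spectral}. Set $F(t)=\sum_j b_j^{(1)}(t)$. By Theorem \ref{thm:genresevolution} each $b_j^{(1)}(t)=p_j^{(1)}(t)\,e^{t/\lambda_j}$ is a polynomial times an exponential, so $F$ is a finite sum of exponential--polynomials and hence an \emph{entire} function of $t$. Since $F$ agrees with the manifestly real quantity $e^{x_3(t)}$ on the nonempty open interval where the peakon solution exists, and $F$ is entire, $F$ is real on all of $\R$ (equivalently, the non-real eigenvalues and their residue data occur in conjugate pairs). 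Thus $F$ is real-analytic on $\R$ and $F(0)=e^{x_3(0)}>0$.

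The first assertion is then immediate: the set $\{t\in\R:F(t)>0\}$ is open, and its connected component containing $0$ is precisely $(T_1,T_2)$, where $T_1$ and $T_2$ are the nearest zeros of $F$ on each side of $0$, i.e. the largest negative and smallest positive roots of $F(t)=0$ with the stated $\pm\infty$ convention. On $(T_1,T_2)$ we have $F>0$, so $\log F$ and hence $e^{x_3(t)}=F(t)$ are real-analytic there. I would stress the conceptual point that $F$ continues as an entire function no matter what the physical configuration does, so analyticity holds on the whole of $(T_1,T_2)$ even if the three-peakon solution ceases to exist somewhere inside.

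For the ``moreover'' part I would treat $T_2<\infty$, the case $T_1>-\infty$ being identical after applying the time-reversal symmetry of Lemma \ref{lem:symmetry} (or, equivalently, running the adjoint argument for $x_1$ built from Lemma \ref{cor:x1} and Corollary \ref{cor:noescapeleft}). Let $(t_-,t_+)\ni 0$ be the maximal interval of existence of the solution of \eqref{eq:3peakons} in the open sector $X$. On this interval $e^{x_3(t)}=F(t)>0$, so $F$ has no zero there; since $F(T_2)=0$ this forces $t_+\le T_2<\infty$, so the solution breaks down at the finite time $t_+$. By Corollaries \ref{cor:noescaperight} and \ref{cor:noescapeleft} the ordered positions $x_1<x_2<x_3$ are confined to a fixed compact interval $[L,U]$ as $t\to t_+$, so they cannot run off to infinity; the breakdown must therefore be the positions approaching $\partial X$, that is, a collision at $t_c:=t_+$ --- \emph{provided} I can exclude a blow-up of the masses while the positions remain in a compact subset of the open sector.

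Excluding that blow-up is the main obstacle, and it is here that the constants of motion of Lemma \ref{lem:constants} are essential, since the mass equations in \eqref{eq:3peakons} are quadratic (not linear) in $(m_1,m_2,m_3)$ and could a priori blow up in finite time. The idea is that on a compact subset of $X$ the weights $(1-e^{x_i-x_j})^2$ are bounded above and bounded below away from $0$, so the level set $\{M_1=c_1,\,M_2=c_2,\,M_3=c_3\}$, intersected with such a compact set of positions, is bounded in $m$: dividing the three relations $M_1,M_2,M_3$ by $\abs{m},\abs{m}^2,\abs{m}^3$ respectively and passing to a limit along a hypothetical blow-up direction forces all three normalized masses to vanish, a contradiction. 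Hence the masses stay bounded while the positions stay in a compact subset of $X$, the orbit remains in a compact subset of $X\times\R^3$, and a finite-time breakdown can only be a collision. Finally, since the positions stay in $[L,U]$ we have $e^{x_3(t)}\ge e^{L}>0$, so letting $t\to t_+$ and using continuity of $F$ gives $F(t_+)\ge e^{L}>0$; as $F(T_2)=0$ this rules out $t_+=T_2$, leaving $t_c=t_+<T_2$. Together with $t_c>0>T_1$ this places the collision in $(T_1,T_2)$, as claimed.
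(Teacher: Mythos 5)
Your proof is correct, and on the crucial ``moreover'' clause it takes a genuinely more explicit route than the paper. For the analyticity statement the two arguments coincide in substance: $F(t)=\sum_j b_j^{(1)}(t)$ is a finite sum of exponential-polynomials, hence entire, real on $\R$ (being real on an interval), and positive on the component $(T_1,T_2)$ of $\{F>0\}$ containing $0$; the paper leaves this implicit, and its written proof addresses only the last claim. There the paper argues in one sentence: if, say, $T_2<\infty$ and no collision had occurred, then $e^{x_3}=F$ would vanish at $T_2$, i.e.\ $x_3\to-\infty$ in finite time, hence (by the ordering $x_1<x_3$) $x_1\to-\infty$, contradicting Corollary \ref{cor:noescapeleft}. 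This tacitly equates ``the solution ceases to exist'' with ``a collision occurs''; the scenario of finite-time mass blow-up while the positions stay separated is not excluded at this point of the paper (that comes only later, via Lemma \ref{lem:mzero} and its corollaries, which are stated ``before a collision'' and proved after this lemma). You close precisely this gap: you introduce the maximal existence interval, deduce $t_+\le T_2$ from $F>0$ on it together with $F(T_2)=0$, confine the positions using Corollaries \ref{cor:noescaperight} and \ref{cor:noescapeleft}, and then rule out mass blow-up over a compact set of separated positions by the normalization argument with the constants of motion --- the limiting system $\hat m_1+\hat m_2+\hat m_3=0$, $\sum_{i<j}\hat m_i\hat m_j w_{ij}=0$, $\hat m_1\hat m_2\hat m_3=0$ with all weights $w_{ij}>0$ has only the zero solution, contradicting $\max_i\abs{\hat m_i}=1$. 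That compactness argument is not in the paper and serves as a self-contained substitute for its later machinery; your final estimate $F(t_+)\ge e^{L}>0$ also delivers the strict inequality $t_c<T_2$, which the paper asserts without comment. The one imprecision you share with the paper: identifying the breakdown time $t_+$ with a collision in the pointwise sense $x_i(t_c)=x_{i+1}(t_c)$ strictly requires the positions to have limits there, which is only justified by the subsequent analysis (Lemma \ref{lem:analm1m3}); your proof is no less rigorous than the paper's on this point.
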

\begin{remark} For positive $t$ ``prior" has the usual meaning (positive orientation).
 For negative $t$
the orientation is from $0$ to $-\infty$.
\end{remark}
\begin{proof}
The last statement follows from Corollary \ref{cor:noescapeleft}, since finite
$T_1$ or $T_2$ means that $x_3$ escaped to $-\infty$ which cannot happen in
finite time unless there is a collision at an earlier time.
\end{proof}
\begin{remark}
Since $b_j^{(1)}(t)$ is an exponential function of $t$, the right hand side of \eqref{eq:x3} is well define for any real $t$. However the left hand side of \eqref{eq:x3} only make sense when $t$ lies in the existence interval of the ODE system \eqref{eq:3peakons}.
\end{remark}
Furthermore, combining equations
for $\dot x_3$ and $\dot m_3$ (see equations \eqref{eq:3peakons}) yields:
\begin{equation}\label{eq:m3x3}
\dot m_3=2m_3(\dot x_3-m_3).
\end{equation}
We remark that this is a Bernoulli type equation which
can be easily solved once $x_3(t)$ is known.
\begin{lemma}\label{lem:m3}
Suppose $x_3(t)$ is known.  Then
\begin{equation}\label{eq:m3}
\frac{1}{m_3(t)}=e^{-2x_3(t)}\Big[\frac{e^{2x_3(0)}}{m_3(0)}+2\int_0^t e^{2x_3(\tau)}\mathrm{d}\tau\Big]
\end{equation}
\end{lemma}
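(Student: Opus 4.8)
The plan is to exploit the fact that equation \eqref{eq:m3x3} is a Bernoulli equation in $m_3$, with $x_3(t)$ regarded as a known function, and to linearize it by the standard reciprocal substitution. First I would set $v(t)=1/m_3(t)$. For this to be legitimate I must know that $m_3$ never vanishes on the interval of interest; this follows from uniqueness for the ODE \eqref{eq:m3x3}, since $m_3\equiv 0$ is a solution and $m_3(0)\neq 0$ is assumed, so no trajectory with nonzero initial mass can reach zero in finite time.

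Next I would differentiate to get $\dot v=-\dot m_3/m_3^2$. Dividing \eqref{eq:m3x3} by $m_3^2$ gives $\dot m_3/m_3^2=2\dot x_3/m_3-2=2\dot x_3\,v-2$, hence
\[
\dot v+2\dot x_3\,v=2.
\]
This is a linear first-order inhomogeneous ODE for $v$, with the explicit integrating factor $e^{2x_3(t)}$, since $\frac{\mathrm{d}}{\mathrm{d}t}\,2x_3(t)=2\dot x_3(t)$. Multiplying through yields $\frac{\mathrm{d}}{\mathrm{d}t}\bigl(e^{2x_3(t)}v\bigr)=2e^{2x_3(t)}$.

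The final step is to integrate this identity from $0$ to $t$, obtaining
\[
e^{2x_3(t)}v(t)-e^{2x_3(0)}v(0)=2\int_0^t e^{2x_3(\tau)}\,\mathrm{d}\tau,
\]
and then to substitute back $v=1/m_3$ together with $v(0)=1/m_3(0)$ and multiply by $e^{-2x_3(t)}$ to recover \eqref{eq:m3}.

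There is essentially no analytic obstacle here: the computation is entirely elementary once the Bernoulli structure of \eqref{eq:m3x3} is recognized, and it reduces to the classical integrating-factor solution of a linear equation. The only point requiring genuine care—and the one I would state explicitly—is the non-vanishing of $m_3$ needed to justify the substitution $v=1/m_3$; this is what makes the reciprocal change of variable well defined for all $t$ in the existence interval, after which everything follows mechanically.
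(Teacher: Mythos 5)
Your proof is correct and follows exactly the route the paper intends: the paper gives no written proof, merely remarking that \eqref{eq:m3x3} is a Bernoulli equation easily solved once $x_3(t)$ is known, and your substitution $v=1/m_3$ with integrating factor $e^{2x_3(t)}$ is the standard realization of that remark. Your explicit attention to the non-vanishing of $m_3$ (via uniqueness against the trivial solution $m_3\equiv 0$) is a sound bonus, consistent with the paper's later Lemma \ref{lem:mzero}.
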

An analogous argument works for $x_1$. Indeed,
by equation \eqref{eq:x1}, we
have
\begin{equation*}
e^{-x_1(t)}=\sum_j \tilde b_j^{(1)}.
\end{equation*}
This prompts an analogous statement to Lemma \ref{lem:analx3}
\begin{lemma}\label{lem:analx1}
Let $\tilde T_1$, $\tilde T_2$, be the largest negative, respectively the smallest
positive, root of $\sum_j \tilde b^{(1)}(t)=0$.    Then
$e^{-x_1(t)}$ is real analytic for $\tilde T_1<t<\tilde T_2$. Moreover,
if either $\tilde T_1$ or $\tilde T_2$ are finite then there must be a collision at some prior time $\tilde T_1<t_c<\tilde T_2$.
\end{lemma}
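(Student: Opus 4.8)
The plan is to mirror the proof of Lemma~\ref{lem:analx3} almost verbatim, replacing every object of the direct spectral problem by its adjoint counterpart. First I would invoke Lemma~\ref{cor:x1} to write $e^{-x_1(t)}=\sum_j \tilde b_j^{(1)}(t)$, and then appeal to Theorem~\ref{thm:tildegenresevolution}, which gives $\tilde b_j^{(1)}(t)=\tilde p_j^{(1)}(t)\,e^{-t/\lambda_j}$ with $\tilde p_j^{(1)}$ a polynomial in $t$. Since each summand is a polynomial times an exponential, the right-hand side is an entire function of $t$, hence real analytic on all of $\R$. When some eigenvalues $\lambda_j$ are complex they occur in conjugate pairs (the coefficients of $\tilde A(z)$, and hence of the partial fraction decomposition of $\widetilde\omega$, being real), so the paired terms combine to a real-valued sum for real $t$.

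Next I would pin down the interval. At $t=0$ the right-hand side equals $e^{-x_1(0)}>0$, so between the nearest roots $\tilde T_1<0<\tilde T_2$ of $\sum_j\tilde b_j^{(1)}(t)=0$ (with the convention $\tilde T_1=-\infty$ or $\tilde T_2=+\infty$ if no such root exists) the sum stays strictly positive; there $e^{-x_1(t)}$ coincides with this positive real-analytic function and is therefore itself real analytic, which gives the first assertion. As in the remark following Lemma~\ref{lem:analx3}, the right-hand side is defined for all real $t$, whereas the identity $e^{-x_1}=\sum_j \tilde b_j^{(1)}$ only carries dynamical information on the existence interval of \eqref{eq:3peakons}.

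For the collision statement, suppose $\tilde T_2$ is finite and, toward a contradiction, that no collision occurs on $(0,\tilde T_2)$. Then the solution exists on that interval with the ordering $x_1<x_2<x_3$ preserved, and as $t\to\tilde T_2^-$ we have $\sum_j\tilde b_j^{(1)}(t)\to 0$, i.e. $e^{-x_1(t)}\to 0$ and hence $x_1\to +\infty$. Since $x_3\geq x_1$, this forces $x_3\to+\infty$ in finite time, contradicting Corollary~\ref{cor:noescaperight}; therefore a collision must occur at some $\tilde T_1<t_c<\tilde T_2$, and the case of finite $\tilde T_1$ is identical with the orientation convention of the remark after Lemma~\ref{lem:analx3}. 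The argument being entirely parallel to that of Lemma~\ref{lem:analx3}, there is no genuine new difficulty; the one point deserving attention is the sign flip, since a root of the series now signals escape of $x_1$ to $+\infty$ rather than of $x_3$ to $-\infty$, so the relevant no-escape result is Corollary~\ref{cor:noescaperight} in place of Corollary~\ref{cor:noescapeleft}. Alternatively, one could deduce the whole lemma from Lemma~\ref{lem:analx3} via the reflection $x\mapsto -x$, $\underline{m}\mapsto\underline{m}^\tau$ of Lemma~\ref{lem:tildePsi} and Theorem~\ref{thm:symmtildeABC}, which interchanges the two boundary value problems and thereby the roles of $x_3$ and $x_1$.
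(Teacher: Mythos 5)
Your proposal is correct and takes essentially the same approach as the paper: the paper gives no separate proof for this lemma, presenting it as the direct analogue of Lemma \ref{lem:analx3}, whose proof you mirror faithfully. You also handle the one genuinely asymmetric point correctly --- a vanishing of $\sum_j \tilde b_j^{(1)}(t)$ now means $x_1\to+\infty$, so the ordering $x_1<x_3$ forces $x_3\to+\infty$, contradicting Corollary \ref{cor:noescaperight} (rather than Corollary \ref{cor:noescapeleft}) unless a collision occurred at a prior time.
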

We see that we can now narrow down the time of a collision.
Let us denote by $A=(T_1,T_2)\cap (\tilde T_1,\tilde T_2).$
We summarize analytic properties of $e^{x_3}$ and $e^{-x_1}$.
\begin{lemma} \label{lem:analx1x3}
The functions $e^{x_3(t)}$ and $e^{-x_1(t) }$ are real analytic on $A$.
A collision can only occur  at a time $t_c$ if $t_c \in A$ .  In particular, both functions are analytic
at the time of collision.
\end{lemma}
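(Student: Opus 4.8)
The plan is to derive everything from the two preceding analyticity lemmas together with the no-escape corollaries, so that the statement becomes essentially a bookkeeping exercise about the zeros of the two entire functions $\sum_j b_j^{(1)}(t)$ and $\sum_j \tilde b_j^{(1)}(t)$. First I would dispose of the analyticity claim. Since by definition $A=(T_1,T_2)\cap(\tilde T_1,\tilde T_2)\subseteq(T_1,T_2)$, Lemma \ref{lem:analx3} immediately gives that $e^{x_3(t)}$ is real analytic on $A$; since likewise $A\subseteq(\tilde T_1,\tilde T_2)$, Lemma \ref{lem:analx1} gives the same for $e^{-x_1(t)}$. This settles the first sentence, and it also settles the final sentence the moment we have shown that every collision time lies in $A$.

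The substance is therefore the claim that a collision can occur only at $t_c\in A$. I would argue by pinning down the sign of the two entire functions on the existence interval. Let $t_c$ be a collision time; without loss of generality take $t_c>0$, the case $t_c<0$ being handled identically with $T_1,\tilde T_1$ in place of $T_2,\tilde T_2$. On the interval $[0,t_c)$ the solution of \eqref{eq:3peakons} lives in the open sector $X$, so $x_1(t)<x_2(t)<x_3(t)$ are finite; moreover, by Corollary \ref{cor:noescaperight} the rightmost position cannot run off to $+\infty$ and by Corollary \ref{cor:noescapeleft} the leftmost position cannot run off to $-\infty$ in finite time, so all three positions stay in a fixed compact interval as $t\uparrow t_c$. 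Consequently $e^{x_3(t)}$ and $e^{-x_1(t)}$ remain bounded away from $0$ and $\infty$ on $[0,t_c)$. Now $\sum_j b_j^{(1)}(t)$ and $\sum_j \tilde b_j^{(1)}(t)$ are entire in $t$ (each summand is of the form $p_j^{(k)}(t)e^{t/\lambda_j}$ by Theorem \ref{thm:genresevolution}, with no $\lambda_j=0$ since $0$ is a removable point of $\omega$), and they coincide with $e^{x_3(t)}$, respectively $e^{-x_1(t)}$, on $[0,t_c)$. By continuity their values at $t_c$ equal the limits of these strictly positive, bounded functions, hence $\sum_j b_j^{(1)}(t_c)>0$ and $\sum_j \tilde b_j^{(1)}(t_c)>0$.

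It follows that neither entire function has a zero anywhere on $[0,t_c]$. Since $T_2$ and $\tilde T_2$ are, by definition, the smallest positive zeros of these functions (or $+\infty$ if none exist), we conclude $t_c<T_2$ and $t_c<\tilde T_2$; combined with $t_c>0>T_1$ and $t_c>0>\tilde T_1$ this yields $t_c\in(T_1,T_2)\cap(\tilde T_1,\tilde T_2)=A$, as required. The only delicate point in the whole argument, and the step I expect to need the most care, is the assertion that the positions stay finite at the collision and that a collision is indeed the \emph{only} way the solution can leave $X$ in finite time: this is exactly what Corollaries \ref{cor:noescaperight} and \ref{cor:noescapeleft} provide, and without them one could not exclude a finite-time escape that would spoil the sign conclusion $\sum_j b_j^{(1)}(t_c)>0$. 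Once finiteness of the extreme positions at $t_c$ is granted, the remaining zero-counting is routine.
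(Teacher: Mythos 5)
Your proof is correct and takes essentially the same route as the paper: the paper states this lemma without a separate proof, treating it as an immediate consequence of Lemmas \ref{lem:analx3} and \ref{lem:analx1} together with Corollaries \ref{cor:noescaperight} and \ref{cor:noescapeleft}, which is exactly the material you assemble. Your explicit step showing that the extreme positions stay bounded up to the collision, so that the entire functions $\sum_j b_j^{(1)}(t)$ and $\sum_j \tilde b_j^{(1)}(t)$ remain strictly positive on $[0,t_c]$ and hence $t_c<T_2$, $t_c<\tilde T_2$, is precisely the bookkeeping the paper leaves implicit.
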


Once again, if we know $x_1$ then we can determine $m_1$.

Indeed, equations \eqref{eq:3peakons} imply another Bernoulli equation:
\begin{equation}\label{eq:m1x1}
\dot m_1=-2m_1(\dot x_1-m_1),
\end{equation}
whose solution reads
\begin{lemma}\label{lem:m1}
\begin{equation}\label{eq:m1}
\frac{1}{m_1(t)}=e^{2x_1(t)}\Big[\frac{e^{-2x_1(0)}}{m_1(0)}-2\int_0^t e^{-2x_1(\tau)}\mathrm{d}\tau\Big].
\end{equation}
\end{lemma}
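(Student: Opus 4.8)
The final statement is Lemma~\ref{lem:m1}, which gives the explicit solution formula for $1/m_1(t)$ in terms of $x_1(t)$.

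The key observation is equation~\eqref{eq:m1x1}, namely $\dot m_1 = -2m_1(\dot x_1 - m_1)$, which the text has already identified as a Bernoulli-type equation; this is the exact analogue of equation~\eqref{eq:m3x3} solved in Lemma~\ref{lem:m3}. The plan is therefore to mimic the derivation of Lemma~\ref{lem:m3} verbatim, with the sign changes dictated by \eqref{eq:m1x1}.

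\begin{proof}
Starting from the Bernoulli equation \eqref{eq:m1x1}, introduce $y(t)=1/m_1(t)$, so that $\dot y = -\dot m_1/m_1^2$. Dividing \eqref{eq:m1x1} by $-m_1^2$ gives the linear equation
\begin{equation*}
\dot y = 2\dot x_1\, y - 2.
\end{equation*}
The homogeneous part has integrating factor $e^{-2x_1(t)}$, since $\frac{\mathrm{d}}{\mathrm{d}t}\bigl(-2x_1(t)\bigr)=-2\dot x_1$. Multiplying through yields
\begin{equation*}
\frac{\mathrm{d}}{\mathrm{d}t}\Bigl(e^{-2x_1(t)}\,y(t)\Bigr)=-2\,e^{-2x_1(t)}.
\end{equation*}
Integrating from $0$ to $t$ gives
\begin{equation*}
e^{-2x_1(t)}\,y(t)-e^{-2x_1(0)}\,y(0)=-2\int_0^t e^{-2x_1(\tau)}\,\mathrm{d}\tau,
\end{equation*}
and solving for $y(t)=1/m_1(t)$ produces exactly \eqref{eq:m1}.
\end{proof}

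There is essentially no obstacle here: the content is the reduction to a first-order linear ODE for $1/m_1$ together with the correct integrating factor $e^{-2x_1}$, and the formula \eqref{eq:m1} is simply its solution by variation of parameters. The only point requiring a small amount of care is bookkeeping the sign in \eqref{eq:m1x1}, which reverses relative to the $x_3$ case and accounts for the $e^{2x_1(t)}$ prefactor (rather than $e^{-2x_3(t)}$) and the minus sign in front of the integral in \eqref{eq:m1}.
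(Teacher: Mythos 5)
Your proof is correct and follows exactly the route the paper intends: the paper simply labels \eqref{eq:m1x1} a Bernoulli equation and states its solution, and your substitution $y=1/m_1$ with integrating factor $e^{-2x_1(t)}$ is the standard computation being left implicit. The sign bookkeeping relative to the $m_3$ case is handled correctly, so nothing is missing.
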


We can now summarize analytic properties of $m_1,m_2$ and $m_3$.

\begin{lemma}\label{lem:analm1m3}
\mbox{}
\begin{enumerate}
\item[(1)] $\frac{1}{m_1(t)}$  and $\frac{1}{m_3(t)}$ are real analytic on $A$.
\item[(2)] $m_2(t)$ and $e^{x_2(t)}$ are real meromorphic functions on $A$.
\item[(3)] A collision occurs iff
there exists $t_c \in A$ such that either $\frac{1}{m_1(t_c)}=0$
or $\frac{1}{m_3(t_c)}=0$.
\item[(4)] Suppose $\frac{1}{m_1(t_c)}=0$.  Then in a neighborhood of $t_c$
\begin{equation*}
\frac{1}{m_1(t)}=-2 (t-t_c) +\Oh((t-t_c)^2).
\end{equation*}
\item[(5)] Suppose $\frac{1}{m_3(t_c)}=0$.  Then in a neighborhood of $t_c$
\begin{equation*}
\frac{1}{m_3(t)}=2 (t-t_c) +\Oh((t-t_c)^2).
\end{equation*}
\end{enumerate}
\end{lemma}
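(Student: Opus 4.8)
The plan is to build each assertion directly on the analytic foundation already established in Lemma \ref{lem:analx1x3}, which tells us that both $e^{x_3(t)}$ and $e^{-x_1(t)}$ are real analytic on $A$ and nonvanishing there (being exponentials). First I would prove (1): since $x_3(t)$ is real analytic on $A$, the integrand $e^{2x_3(\tau)}$ in the formula of Lemma \ref{lem:m3} is analytic, so its integral is analytic, and multiplication by the analytic function $e^{-2x_3(t)}$ shows $1/m_3(t)$ is real analytic on $A$; the identical argument applied to Lemma \ref{lem:m1} handles $1/m_1(t)$. For (3), I would observe that on the interior of $A$ the only way the solution of \eqref{eq:3peakons} can approach the boundary of the sector $X$ is a collision, and I would use the constants of motion from Lemma \ref{lem:constants}, together with Corollaries \ref{cor:noescaperight} and \ref{cor:noescapeleft}, to argue that a collision forces one of $m_1$ or $m_3$ to diverge. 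Since $1/m_1$ and $1/m_3$ are analytic on $A$ by part (1), divergence of $m_1$ or $m_3$ is equivalent to the vanishing of the respective reciprocal at some $t_c \in A$; conversely, if a reciprocal vanishes at $t_c \in A$ the corresponding mass blows up, and the structure of $M_1, M_2, M_3$ forces a collision there.

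For (2), the idea is to recover $m_2$ and $e^{x_2}$ algebraically from the conserved quantities once $x_1, x_3, m_1, m_3$ are known. Solving the $M_1$ relation gives $m_2 = M_1 - m_1 - m_3$, which is a ratio of analytic functions (using that $1/m_1, 1/m_3$ are analytic), hence meromorphic on $A$; substituting into the $M_2$ or $M_3$ relation and solving for $e^{x_2}$ expresses it as an algebraic combination of analytic functions, again meromorphic on $A$. Here one must be slightly careful that solving for $e^{x_2}$ does not introduce branch points rather than poles, but since the defining relations are polynomial in $e^{x_2}$ with analytic coefficients and the solution is pinned down by the ordering $x_1 < x_2 < x_3$, the function stays single-valued and meromorphic.

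Parts (4) and (5) are local statements obtained by Taylor-expanding the integral formulas at $t_c$. For (5), set $f(t) = 1/m_3(t)$; from Lemma \ref{lem:m3} we have $f(t) = e^{-2x_3(t)}\bigl[f(0)e^{2x_3(0)} + 2\int_0^t e^{2x_3(\tau)}\,d\tau\bigr]$, so differentiating and using $f(t_c)=0$ kills the boundary term and leaves $\dot f(t_c) = 2 e^{-2x_3(t_c)} e^{2x_3(t_c)} = 2$, giving $1/m_3(t) = 2(t-t_c) + \Oh((t-t_c)^2)$; part (4) follows identically from Lemma \ref{lem:m1} with the opposite sign coming from the sign of the exponent and the minus sign in \eqref{eq:m1}. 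I expect the main obstacle to be part (3): rigorously tying the purely analytic event ``$1/m_1$ or $1/m_3$ vanishes'' to the geometric event ``the trajectory reaches $\partial X$'' requires showing that a collision must make one of the \emph{outer} masses blow up rather than only the inner mass $m_2$, which is exactly the content deferred to Corollary \ref{cor:notriplemassdiv}; I would invoke the conserved $M_3 = m_1 m_2 m_3 (1-e^{x_1-x_2})^2(1-e^{x_2-x_3})^2$, noting that at an outer collision ($x_1 \to x_2$ or $x_2 \to x_3$) a squared factor vanishes and conservation forces a compensating mass divergence, pinning it to $m_1$ or $m_3$ via the analyticity already secured in (1).
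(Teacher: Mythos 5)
Your parts (1), (4) and (5) are in substance the paper's own argument: (1) is exactly the combination of Lemma \ref{lem:analx1x3} with the integral formulas of Lemmas \ref{lem:m1} and \ref{lem:m3}, and your differentiation of those formulas at $t_c$ is just the solved form of the Bernoulli equations \eqref{eq:m3x3}, \eqref{eq:m1x1}, which the paper differentiates directly to get the same values $\mp 2$. Part (3) also follows the paper's route: conservation of $M_3$ forces $m_1m_2m_3$ to diverge at a collision, conservation of $M_1$ rules out a single diverging mass, hence $m_1$ or $m_3$ diverges, and analyticity of the reciprocals converts this into the vanishing statement. One small slip in (1): $e^{x_3}$ is nonvanishing on $A$ not ``because it is an exponential'' --- as an analytic function of $t$ it is the sum $\sum_j b_j^{(1)}(t)$, which certainly can vanish --- but because $A\subset(T_1,T_2)\cap(\tilde T_1,\tilde T_2)$ is by definition cut off at the nearest zeros of $\sum_j b_j^{(1)}$ and $\sum_j \tilde b_j^{(1)}$.

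The genuine gap is in part (2), in the treatment of $e^{x_2}$. The relations $M_2$ and $M_3$ of Lemma \ref{lem:constants} are polynomial of degree $\geq 2$ in $y=e^{x_2}$ (clearing denominators in the $M_3$ relation gives $M_3\, y^2 e^{2x_3}=m_1m_2m_3\,(y-e^{x_1})^2(e^{x_3}-y)^2$, a quartic), so ``solving'' them defines $y$ only as an algebraic, a priori multivalued, function of the analytic coefficients, with possible square-root branch points wherever two roots of the polynomial collide. But two roots collide exactly at a collision time: as $x_2\to x_3$ the branch $y=e^{x_2}$ merges with a second root of the quartic near $e^{x_3}$, so the discriminant vanishes precisely at $t_c$ --- and $t_c$ is the one point where meromorphy of $e^{x_2}$ on $A$ has any content (it is what Corollary \ref{cor:notriplecol} and Theorem \ref{thm:shockcreation} later rely on). Your proposed repair, that the branch is ``pinned down by the ordering $x_1<x_2<x_3$,'' does not close this: the ordering holds only on the existence interval of the ODE, degenerates at the collision itself, and says nothing about continuation past $t_c$; choosing a branch pointwise does not make the function single-valued and analytic through a point where branches meet. (One can in fact patch your route, but only by checking that the relevant discriminant vanishes to \emph{even} order at $t_c$, which requires the local expansions of parts (4), (5) and of $m_2$ --- an analysis your proposal does not contain.) The paper sidesteps all of this by using a relation that is \emph{linear} in $e^{x_2}$: by Lemma \ref{cor:xn} and equation \eqref{eq:Sumofbj},
\begin{equation*}
\frac{\mathrm{d}}{\mathrm{d}t}e^{x_3}=M_+=m_1e^{x_1}+m_2e^{x_2}+m_3e^{x_3},
\qquad\text{so}\qquad
e^{x_2}=\frac{\frac{\mathrm{d}}{\mathrm{d}t}e^{x_3}-m_1e^{x_1}-m_3e^{x_3}}{m_2},
\end{equation*}
which is a rational combination of functions already known to be analytic or meromorphic on $A$ (namely $e^{x_3}$, $e^{x_1}=1/e^{-x_1}$, and $m_1,m_2,m_3$), hence manifestly meromorphic. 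Replacing your quadratic/quartic elimination by this linear one completes your proof.
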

\begin{proof}
The analytic properties of $\frac1{m_1}$ and $\frac1{m_3}$ are directly derived from the analytic properties of $e^{x_1(t)}$ and $e^{x_3(t)}$ and Lemmas \ref{lem:m1}, \ref{lem:m3} respectively.

To prove (2) we note that using $M_1$ we can write $m_2=M_1-(m_1+m_3)$.
Likewise $e^{x_2}$ can be easily computed from $M_+=\sum _j m_j e^{x_j}=\frac{\mathrm{d}}{\mathrm{d}t}{e^{x_3}}$ by algebraic operations on analytic functions.

To see (3), since $M_3$ is a constant of motion, we observe that a collision occurs when at least two of the masses diverge,
hence either $m_1$ or $m_3$ have to diverge at a collision.

To prove (4) and (5) we only need to calculate the derivatives of $\frac{1}{m_1(t)}$ and $\frac{1}{m_3(t)}$ at $t_c$. Since $x_1,x_3$ are analytic at $t_c$, direct computation from (\ref{eq:m3x3}) and (\ref{eq:m1x1}) shows that
\begin{equation*} \begin{aligned} &\left.\frac{\mathrm{d}}{\mathrm{d}t}\frac{1}{m_3(t)}\right|_{t=t_c}=
\left.\left(2-\frac{2\dot x_3}{m_3}\right)\right|_{t=t_c}=2,\\ &\left.\frac{\mathrm{d}}{\mathrm{d}t}\frac{1}{m_1(t)}\right|_{t=t_c}=
\left.\left(\frac{2\dot x_1}{m_1}-2\right)\right|_{t=t_c}=-2.\end{aligned} \end{equation*} Therefore (3) and (4) hold.

\end{proof}

Now we can obtain the behaviour of masses before a collision.
As a general comment, we observe that for any
initial data in $X$ and arbitrary $m_1,m_2,m_3$ the solution is unique.
From this point onwards we assume $m_i(0)\neq 0$.
Thus $M_3$ is nonzero.

\begin{lemma}  \label{lem:mzero} None of the masses $m_i$ can become zero before a collision. \end{lemma}

\begin{proof}
In order to derive a contradiction, we suppose that one of the masses becomes zero at $t_0$.  Since all three constants of motion given by Lemma \ref{lem:constants} are symmetric with respect to permutations of masses, we can assume, without a loss of generality, that $m_1(t_0)=0$.  Since $M_3\neq 0$, $m_2m_3$ diverges at $t_0$. Then $\frac{M_2}{m_2m_3}$ converges to zero, while the corresponding right hand
side converges to $(1-e^{x_2-x_3})^2\neq 0$, thus a contradiction.
\end{proof}

\begin{corollary} The masses $m_i$ cannot change their signs before a collision.  \end{corollary}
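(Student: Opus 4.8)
The plan is to deduce the corollary directly from Lemma \ref{lem:mzero} together with the continuity of the masses, the only additional tool being the Intermediate Value Theorem. First I would fix a solution of \eqref{eq:3peakons} with initial data in $X$ and all $m_i(0)\neq 0$, and restrict attention to the pre-collision interval, i.e.\ the connected time interval containing $t=0$ on which the solution exists in the open sector $X$ before a collision occurs. On this interval $\frac{1}{m_1}$ and $\frac{1}{m_3}$ are real-analytic and, by Lemma \ref{lem:analm1m3}(3), nonvanishing (their zeros are precisely the collisions); hence $m_1$ and $m_3$ are finite and real-analytic there. Writing $m_2=M_1-(m_1+m_3)$ as in the proof of Lemma \ref{lem:analm1m3}(2), the remaining mass $m_2$ is likewise finite and real-analytic, so all three masses are continuous real-valued functions of $t$ on the pre-collision interval.

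Next I would invoke Lemma \ref{lem:mzero}, which guarantees that none of the $m_i(t)$ vanishes before a collision. A continuous real-valued function that is nowhere zero on an interval cannot take values of both signs, since otherwise the Intermediate Value Theorem would force an interior zero. Therefore $\sgn m_i(t)=\sgn m_i(0)$ for every $i$ throughout the pre-collision interval, which is exactly the assertion that the masses cannot change sign before a collision.

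I do not anticipate a genuine obstacle here: the entire weight of the statement is borne by Lemma \ref{lem:mzero}, and the present corollary is simply the sign-preservation consequence of that non-vanishing result. The only subtlety worth recording is the reduction to a bona fide continuity argument, namely confirming that the masses remain finite up to (but not including) the collision time, so that they are honestly continuous rather than merely meromorphic. This was dispatched in the first step by observing that the poles of $m_2$ — and the blow-up of $m_1$ or $m_3$ — can occur only at a collision, hence never on the open pre-collision interval.
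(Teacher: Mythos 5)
Your proof is correct and matches the paper's (implicit) reasoning: the paper states this corollary without proof as an immediate consequence of Lemma \ref{lem:mzero}, the intended argument being exactly the continuity-plus-Intermediate-Value-Theorem one you give. Your extra care in checking that the masses stay finite (hence genuinely continuous) on the pre-collision interval is a reasonable refinement, not a departure.
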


\begin{corollary} None of the masses $m_i$ will diverge to $\pm \infty$ before a collision.
\end{corollary}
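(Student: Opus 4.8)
The plan is to reduce the boundedness of the masses to the collision criterion of Lemma \ref{lem:analm1m3}(3), which identifies a collision precisely with a zero of $\frac{1}{m_1}$ or $\frac{1}{m_3}$ in $A$. The key elementary observation is that an outer mass diverges exactly when its reciprocal vanishes: $m_1(t)\to\pm\infty$ if and only if $\frac{1}{m_1(t)}\to 0$, and similarly for $m_3$. Since the collision-free interval issuing from $t=0$ is an open subinterval of $A$ (every collision lies in $A$ by Lemma \ref{lem:analx1x3}), and since $\frac{1}{m_1}$ and $\frac{1}{m_3}$ are real analytic on $A$ by Lemma \ref{lem:analm1m3}(1), these reciprocals are finite throughout that interval.

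First I would show that neither reciprocal can vanish strictly before the first collision. By the equivalence in Lemma \ref{lem:analm1m3}(3), a time $t_c\in A$ at which $\frac{1}{m_1(t_c)}=0$ or $\frac{1}{m_3(t_c)}=0$ is exactly a collision time; hence at every time strictly preceding the first collision both reciprocals are nonzero, and (by Lemma \ref{lem:mzero}) so are the masses. Being finite and nonzero there, $m_1$ and $m_3$ themselves remain finite before the collision. To dispose of the middle mass I would invoke the conserved quantity $M_1=m_1+m_2+m_3$ of Lemma \ref{lem:constants}: writing $m_2=M_1-m_1-m_3$ and using that $m_1$ and $m_3$ are bounded, we conclude that $m_2$ is bounded as well. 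Thus no $m_i$ can diverge to $\pm\infty$ before a collision.

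The only point demanding care, and the main obstacle such as it is, is the localization that makes the analyticity of the reciprocal masses available up to (but not including) the collision time, namely that the collision-free interval is contained in $A$. This is supplied by Lemma \ref{lem:analx1x3}, which confines every collision to a time in $A$ and guarantees analyticity of $e^{x_3}$ and $e^{-x_1}$ there; through Lemmas \ref{lem:m1} and \ref{lem:m3} this transfers to $\frac{1}{m_1}$ and $\frac{1}{m_3}$. As a robustness check one may re-derive the same conclusion directly from $M_3=m_1m_2m_3(1-e^{x_1-x_2})^2(1-e^{x_2-x_3})^2$: away from a collision the two squared exponential factors converge to finite nonzero limits, so $m_1m_2m_3$ stays bounded and bounded away from zero, whence the divergence of any single mass would force another to approach zero, contradicting Lemma \ref{lem:mzero}.
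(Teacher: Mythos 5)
Your proposal is correct, but your primary argument takes a genuinely different route from the paper's; in fact, the paper's own proof is precisely what you relegate to a ``robustness check'' at the end. The paper argues: since $m_i(0)\neq 0$ gives $M_3\neq 0$, and since no mass can vanish before a collision (Lemma \ref{lem:mzero}), the conserved identity $\frac{M_3}{m_1m_2m_3}=(1-e^{x_1-x_2})^2(1-e^{x_2-x_3})^2$ of Lemma \ref{lem:constants}, whose right-hand side is continuous and bounded away from zero as long as the positions stay separated, keeps $m_1m_2m_3$ finite; divergence of any one mass would then force another mass to approach zero --- a contradiction. This is elementary, treats the three masses symmetrically, and needs none of the spectral machinery. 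Your primary route instead goes through the analyticity of $1/m_1$ and $1/m_3$ on $A$ and the collision characterization of Lemma \ref{lem:analm1m3}(3), plus conservation of $M_1$ to control $m_2$; what this buys is a sharper statement (the reciprocals of the outer masses extend analytically and without zeros up to the first collision), at the cost of resting on the heavier apparatus of Section 2, and with the two outer masses playing a distinguished role.

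One caution about the primary route. The implication you actually need from Lemma \ref{lem:analm1m3}(3) --- that a zero of $1/m_1$ or $1/m_3$ at a time $t_c\in A$ \emph{is} a collision time --- is, for the outer masses, essentially a restatement of the corollary itself, since divergence of $m_1$ or $m_3$ at a pre-collision time is exactly the vanishing of the corresponding continuous reciprocal there. The paper's proof of item (3) really substantiates only the opposite implication (at a collision at least two masses must diverge, hence $m_1$ or $m_3$ does); the converse direction tacitly needs Lemma \ref{lem:mzero} and conservation arguments of the very kind used to prove the present corollary. So while your argument is formally licensed by the lemma as stated, it skirts circularity within the paper's logical development; the $M_3$ argument you append at the end is the self-contained proof, and it coincides with the paper's.
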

\begin{proof}
Since none of $m_i$ can become zero by Lemma \ref{lem:mzero}, $M_3$ is nonzero, $0<\frac{M_3}{m_1m_2m_3}<\infty$ and $\frac{M_3}{m_1m_2m_3}=(1-e^{x_1-x_2})^2(1-e^{x_2-x_3})^2$ is continuous before a collision, hence the claim
follows.
\end{proof}
Combined with the analytic property, there are two corollaries worth mentioning.
\begin{corollary}[Absence of triple collisions]\label{cor:notriplecol}
There are no triple collisions, that is, there is no time at which $x_1=x_2=x_3$.
\end{corollary}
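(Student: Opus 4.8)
The plan is to rule out a triple collision by a pole–zero counting argument applied to the constant of motion $M_3$ from Lemma \ref{lem:constants}, exploiting the analytic control over positions and masses established in Lemmas \ref{lem:analx1x3} and \ref{lem:analm1m3}. Suppose, for contradiction, that a triple collision occurs at some time $t_c$, i.e.\ $x_1(t_c)=x_2(t_c)=x_3(t_c)=:x_c$. Since a triple collision is in particular a collision, Lemma \ref{lem:analx1x3} forces $t_c\in A$, and on $A$ the functions $e^{x_1(t)}$ and $e^{x_3(t)}$ are real analytic. By Lemma \ref{lem:analm1m3}(2) the function $e^{x_2(t)}$ is meromorphic on $A$; since $x_1<x_2<x_3$ squeezes $x_2(t)\to x_c$, it stays bounded near $t_c$ and therefore has a removable singularity, so $e^{x_2(t)}$ is in fact analytic at $t_c$ with $e^{x_2(t_c)}=e^{x_c}\neq 0$.

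Next I would read off the orders of vanishing of the exponential factors entering $M_3$. Because $e^{x_1}/e^{x_2}$ and $e^{x_2}/e^{x_3}$ are quotients of functions analytic at $t_c$ with nonvanishing denominators, both are analytic at $t_c$ and tend to $1$ there; hence $1-e^{x_1-x_2}$ and $1-e^{x_2-x_3}$ are analytic and vanish at $t_c$, so each of $(1-e^{x_1-x_2})^2$ and $(1-e^{x_2-x_3})^2$ has a zero of order at least $2$ at $t_c$, contributing a combined zero of order at least $4$.

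I would then bound the pole orders of the masses. By Lemma \ref{lem:analm1m3}(1), $1/m_1$ and $1/m_3$ are analytic on $A$, and by parts (4)–(5) their zeros at $t_c$, if present, are simple; thus $m_1$ and $m_3$ have at most simple poles at $t_c$. Since $m_2=M_1-m_1-m_3$ with $M_1$ a constant, $m_2$ likewise has at most a simple pole, so the product $m_1m_2m_3$ has a pole of order at most $3$ at $t_c$.

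Finally, combining: in $M_3=m_1m_2m_3\,(1-e^{x_1-x_2})^2(1-e^{x_2-x_3})^2$ the zeros of the exponential factors (total order $\geq 4$) outweigh the pole of the mass product (order $\leq 3$), so $M_3$ must vanish to order at least $1$ at $t_c$; in particular $M_3\to 0$ as $t\to t_c$. This contradicts $M_3$ being a nonzero constant of motion, the nonvanishing following from $m_i(0)\neq 0$ together with $x_1(0)<x_2(0)<x_3(0)$. Hence no triple collision can occur. I expect the only delicate point to be the justification that $e^{x_2}$ (equivalently $m_2$) is genuinely analytic rather than merely meromorphic at $t_c$, i.e.\ that the apparent pole is removable; this is precisely where the boundedness of $x_2$ between the two outer particles is used, and it is what guarantees that all three exponential factors really vanish to full order and tips the order balance against $M_3$.
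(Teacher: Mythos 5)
Your proof is correct and takes essentially the same route as the paper's: both compare the order-$\geq 4$ vanishing of $(1-e^{x_1-x_2})^2(1-e^{x_2-x_3})^2$ at $t_c$ against the order-$\leq 3$ pole of $m_1m_2m_3$, using Lemma \ref{lem:analm1m3} for the simple poles of $m_1,m_3$ and conservation of $M_1$ to control $m_2$, contradicting constancy of $M_3\neq 0$. The only differences are organizational: you replace the paper's case analysis (which pair of masses diverges) by a uniform pole count via $m_2=M_1-m_1-m_3$, and you make explicit the removable-singularity (squeeze) argument showing $e^{x_2}$ is analytic at $t_c$, a point the paper uses implicitly.
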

\begin{proof}
Suppose $x_1(t_c)=x_2(t_c)=x_3(t_c)$.  Then the leading contribution
to $M_3$ coming from the term $(1-e^{x_1-x_2})^2(1-e^{x_2-x_3})^2$
is $\Oh((t-t_c)^4)$ which forces $m_1m_2m_3$ to behave like $\Oh(\frac{1}{(t-t_c)^4})$.
If only $m_1, m_2$ diverge then $m_1$ diverges as $\frac{1}{2(t_c-t)}$ by Lemma
\ref{lem:analm1m3} and $m_2$ would have to diverge as $\Oh(\frac{1}{(t-t_c)^3})$
violating conservation of $M_1$. Similar argument excludes divergence of $m_2,m_3$.  The last case is that $m_1,m_2, m_3$ diverge, but
then in view of Lemma \ref{lem:analm1m3} $m_2$ would have to
diverge as $\Oh(\frac{1}{(t-t_c)^2})$, again violating conservation of $M_1$.
\end{proof}

\begin{corollary}\label{cor:notriplemassdiv}
At the point of a collision masses diverge in pairs and the only admissible pairs are $\{m_1,m_2\}$ and $\{m_2,m_3\}$.
\end{corollary}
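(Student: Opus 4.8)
The plan is to combine the analytic description of the masses from Lemma~\ref{lem:analm1m3} with the conservation of $M_1$ and $M_2$. The starting point is Lemma~\ref{lem:analm1m3}(3): at any collision time $t_c\in A$ at least one of $\frac{1}{m_1(t_c)}$, $\frac{1}{m_3(t_c)}$ vanishes, so at least one of $m_1$, $m_3$ blows up. Since $M_1=m_1+m_2+m_3$ is constant, a single mass cannot diverge in isolation, so at least two masses must diverge. It then remains only to pin down which two and to exclude the possibility that all three blow up.

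First I would treat the possibility that both $\frac{1}{m_1}$ and $\frac{1}{m_3}$ vanish at $t_c$. Using the local expansions of Lemma~\ref{lem:analm1m3}(4),(5), namely $\frac{1}{m_1(t)}=-2(t-t_c)+\Oh((t-t_c)^2)$ and $\frac{1}{m_3(t)}=2(t-t_c)+\Oh((t-t_c)^2)$, inversion gives $m_1(t)=\frac{-1}{2(t-t_c)}+\Oh(1)$ and $m_3(t)=\frac{1}{2(t-t_c)}+\Oh(1)$, so the simple poles cancel in $m_1+m_3$, which therefore stays bounded near $t_c$. By conservation of $M_1$ this forces $m_2=M_1-(m_1+m_3)$ to be bounded as well, so in this situation only $m_1$ and $m_3$ can diverge; this is exactly the pair I want to exclude, and it shows in passing that a genuine triple divergence cannot arise.

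To exclude it I would feed this information into $M_2$. The product expands as $m_1m_3=\frac{-1}{4(t-t_c)^2}+\Oh\!\left(\frac{1}{t-t_c}\right)$, and since there are no triple collisions (Corollary~\ref{cor:notriplecol}) we have $x_1(t_c)\neq x_3(t_c)$; by the analyticity of $e^{x_1}$ and $e^{x_3}$ on $A$ (Lemma~\ref{lem:analx1x3}) and the ordering $x_1<x_3$ the factor $(1-e^{x_1-x_3})^2$ tends to a strictly positive constant. Hence the term $m_3m_1(1-e^{x_1-x_3})^2$ in $M_2$ blows up like $\frac{1}{(t-t_c)^2}$, while the remaining two terms $m_1m_2(1-e^{x_1-x_2})^2$ and $m_2m_3(1-e^{x_2-x_3})^2$ are only $\Oh\!\left(\frac{1}{t-t_c}\right)$, because $m_2$ is bounded and $m_1,m_3$ have only simple poles. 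The leading singularity has nothing to cancel it, contradicting the constancy of $M_2$. Therefore at most one of $\frac{1}{m_1(t_c)}$, $\frac{1}{m_3(t_c)}$ can vanish.

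Combining the two steps, exactly one of them vanishes. If $\frac{1}{m_1(t_c)}=0$ then $m_3$ remains finite, $m_1$ diverges, and $M_1$ conservation forces $m_2=M_1-m_1-m_3$ to diverge with it, giving the pair $\{m_1,m_2\}$; the symmetric argument with $\frac{1}{m_3(t_c)}=0$ gives $\{m_2,m_3\}$. This establishes that the masses diverge exactly in pairs and that $\{m_1,m_3\}$ (and hence any triple divergence) never occurs. The only delicate point is the exclusion of the simultaneous vanishing: one must carry out the cancellation of the simple poles in $m_1+m_3$ in order to see that $m_2$ stays bounded, since it is precisely the resulting uncancellable $\frac{1}{(t-t_c)^2}$ blow-up of the single surviving term in $M_2$ --- guaranteed nonzero by $x_1(t_c)\neq x_3(t_c)$ --- that produces the contradiction.
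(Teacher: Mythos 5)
Your proof is correct and follows essentially the same route as the paper: the cancellation of the simple poles of $m_1$ and $m_3$ forces $m_2$ to stay bounded (via $M_1$), and the $\{m_1,m_3\}$ pair is then killed by conservation of $M_2$ together with $x_1(t_c)\neq x_3(t_c)$ from the absence of triple collisions. The only cosmetic difference is that you track pole orders in $M_2$ directly, whereas the paper phrases the same contradiction as $0=\lim_{t\to t_c}\frac{M_2}{m_1m_3}=(1-e^{x_1(t_c)-x_3(t_c)})^2\neq 0$.
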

\begin{proof}
In view of the behaviour of $m_1$ and $m_3$ at the collision, $m_2$ must be
regular  to preserve $M_1$ if $m_1$  and $m_3$ diverge. Thus $m_1,m_2,m_3$
cannot all diverge.  To eliminate the $m_1, m_3$ pair we consider $0=\lim_{t\rightarrow t_c} \frac{M_2}{m_1m_3}=(1-e^{x_1(t_c)-x_3(t_c)})^2\neq 0$ by the
absence of triple collisions, hence a contradiction.
\end{proof}

\begin{theorem}[Shockpeakon creation] \label{thm:shockcreation}
If $m_j$ collides with $m_{j+1}$ at $t_c>0$, then
\[\begin{aligned}&\lim_{t\to t_c^-}(m_j(t)\delta(x-x_j(t))+m_{j+i}(t)\delta(x-x_{j+1}(t))) \\=&\left(\lim_{t\to t_c^-}(m_j+m_{j+1})\right)\delta(x-x(t_c))+\frac12\left(\lim_{t\to t_c^-}(u(x_j(t),t)-u(x_{j+1}(t),t))\right)\delta'(x-x(t_c)),\end{aligned}\]
where the limit is in the sense of $\mathscr{D}'(\mathbb{R})$.
\end{theorem}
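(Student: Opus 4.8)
The plan is to test the distribution against an arbitrary $\phi\in\mathscr{D}(\mathbb{R})$ and read off the monopole and dipole coefficients from a Taylor expansion about the collision point. Write $a:=x_j(t_c)=x_{j+1}(t_c)$ and $\epsilon_k:=x_k(t)-a$. By Corollary~\ref{cor:notriplemassdiv} the colliding pair is either $\{m_1,m_2\}$ or $\{m_2,m_3\}$, so exactly $m_j$ and $m_{j+1}$ diverge while the third (spectator) mass stays finite and nonzero; moreover by Lemma~\ref{lem:analm1m3} the positions $x_1,x_2,x_3$ are \emph{analytic} at $t_c$ and one of the reciprocals $1/m_1,1/m_3$ vanishes linearly there, so that $m_j\sim+\tfrac{1}{2(t_c-t)}$ and $m_{j+1}\sim-\tfrac{1}{2(t_c-t)}$ in both cases. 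Pairing against $\phi$ and Taylor expanding to first order with remainder,
\begin{equation*}
m_j\phi(x_j)+m_{j+1}\phi(x_{j+1})=(m_j+m_{j+1})\phi(a)+\bigl(m_j\epsilon_j+m_{j+1}\epsilon_{j+1}\bigr)\phi'(a)+R,
\end{equation*}
with $R=\tfrac12\bigl(m_j\phi''(\xi_j)\epsilon_j^2+m_{j+1}\phi''(\xi_{j+1})\epsilon_{j+1}^2\bigr)$. It therefore suffices to compute the limits of the monopole coefficient $m_j+m_{j+1}$ and the dipole coefficient $m_j\epsilon_j+m_{j+1}\epsilon_{j+1}$, and to show $R\to0$.

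For the monopole, $m_j+m_{j+1}=M_1-m_{\mathrm{spec}}$ by conservation of $M_1$ (Lemma~\ref{lem:constants}), and the spectator mass is analytic with a finite limit at $t_c$; hence $\lim_{t\to t_c^-}(m_j+m_{j+1})$ exists and supplies the coefficient of $\delta(x-x(t_c))$. For the dipole I would first use $m_{j+1}=(m_j+m_{j+1})-m_j$ to write
\begin{equation*}
m_j\epsilon_j+m_{j+1}\epsilon_{j+1}=m_j(x_j-x_{j+1})+(m_j+m_{j+1})\epsilon_{j+1},
\end{equation*}
where the last term tends to $0$ since $m_j+m_{j+1}$ is bounded and $\epsilon_{j+1}\to0$; thus the dipole coefficient reduces to $\lim m_j(x_j-x_{j+1})$. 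The crux is to identify this with the velocity jump. Using $\dot x_k=u(x_k)$ from \eqref{eq:3peakons} one computes, for the colliding pair,
\begin{equation*}
u(x_j)-u(x_{j+1})=(m_j-m_{j+1})\bigl(1-e^{x_j-x_{j+1}}\bigr)+\rho,
\end{equation*}
where $\rho$ collects spectator contributions that vanish as $x_j-x_{j+1}\to0$. Since $m_j-m_{j+1}=2m_j-(m_j+m_{j+1})$ and $1-e^{x_j-x_{j+1}}=-(x_j-x_{j+1})+\Oh((x_j-x_{j+1})^2)$, and since analyticity of the positions gives $x_j-x_{j+1}=\Oh(t_c-t)$ while $|m_j|=\Oh((t_c-t)^{-1})$, the quadratic correction is negligible and $u(x_j)-u(x_{j+1})=-2m_j(x_j-x_{j+1})+o(1)$. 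Because the left side equals $\dot x_j-\dot x_{j+1}$ and hence has a finite limit, $m_j(x_j-x_{j+1})$ converges, with
\begin{equation*}
\lim_{t\to t_c^-}m_j(x_j-x_{j+1})=-\tfrac12\lim_{t\to t_c^-}\bigl(u(x_j)-u(x_{j+1})\bigr),
\end{equation*}
which is exactly the claimed coefficient of $\delta'(x-x(t_c))$ once one accounts for $\langle\delta',\phi\rangle=-\phi'$.

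Finally, the remainder and all higher multipoles vanish. Since the positions are analytic at $t_c$, $\epsilon_k=\Oh(t_c-t)$ while $|m_k|=\Oh((t_c-t)^{-1})$, so $|m_k|\epsilon_k^2=\Oh(t_c-t)\to0$ and hence $R\to0$; more generally the factoring
\begin{equation*}
m_j\epsilon_j^{\,n}+m_{j+1}\epsilon_{j+1}^{\,n}=\bigl(m_j(x_j-x_{j+1})\bigr)\sum_{i=0}^{n-1}\epsilon_j^{\,i}\epsilon_{j+1}^{\,n-1-i}+(m_j+m_{j+1})\epsilon_{j+1}^{\,n}
\end{equation*}
shows that no quadratic or higher pole survives for $n\ge2$. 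Collecting the three limits yields, for every $\phi$, the pairing with the asserted sum of $\delta$ and $\delta'$, which is precisely the claimed convergence in $\mathscr{D}'(\mathbb{R})$. I expect the main obstacle to be the dipole step: showing that $m_j(x_j-x_{j+1})$ genuinely converges (not merely stays bounded) and pinning its value to $-\tfrac12(u(x_j)-u(x_{j+1}))$. The analyticity of the positions at $t_c$ furnished by Lemma~\ref{lem:analm1m3}, which guarantees that the velocity jump $\dot x_j-\dot x_{j+1}$ has a limit, is exactly what makes this identification go through; the sign bookkeeping between the two admissible pairs should be recorded but does not affect the final identity.
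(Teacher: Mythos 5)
Your proof is correct, and it opens exactly as the paper's does --- pairing against a test function and feeding in the pair structure from Corollary \ref{cor:notriplemassdiv} and the rates $m_j=\tfrac{1}{2(t_c-t)}+\Oh(1)$, $m_{j+1}=-\tfrac{1}{2(t_c-t)}+\Oh(1)$ from Lemma \ref{lem:analm1m3} --- but it takes a genuinely different route at the decisive step. The paper never separates monopole and dipole coefficients: it keeps the two singular terms together as a \emph{temporal} difference quotient,
\begin{equation*}
m_j\varphi(x_j)+m_{j+1}\varphi(x_{j+1})=(C_0+\tilde C_0)\varphi(x(t_c))-\frac{\varphi(x_j(t))-\varphi(x_{j+1}(t))}{2(t-t_c)}+o(1),
\end{equation*}
and evaluates the quotient by the chain rule in $t$ (the function $t\mapsto\varphi(x_j(t))-\varphi(x_{j+1}(t))$ vanishes at $t_c$), so that its limit is $\tfrac12(\dot x_j-\dot x_{j+1})\varphi'(x(t_c))$; the substitution $\dot x_k=u(x_k)$ then finishes the proof immediately. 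You instead expand $\varphi$ \emph{spatially} about the collision point, which obliges you to prove separately that the dipole strength $m_j(x_j-x_{j+1})$ converges and to identify its limit; you do this by expanding $u(x_j)-u(x_{j+1})$ through the explicit right-hand sides of \eqref{eq:3peakons}, discarding the spectator contribution and the correction $m_j\,\Oh((x_j-x_{j+1})^2)=\Oh(t_c-t)$. The two arguments rest on the same analytic input (positions analytic at $t_c$, so that $\dot x_j-\dot x_{j+1}$ has a limit and $x_j-x_{j+1}=\Oh(t_c-t)$), and your dipole identification is in effect the paper's L'H\^opital step carried out by hand through the ODEs. What the paper's grouping buys is brevity: no manipulation of $u$ beyond $\dot x_k=u(x_k)$, and no remainder or higher-multipole estimates are ever needed. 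What your version buys is transparency: the $\delta'$ strength is exhibited concretely as $\lim m_j(x_j-x_{j+1})$, the monopole coefficient comes from conservation of $M_1$ rather than being read off the Laurent expansions, and the vanishing of all multipoles of order $n\geq 2$ is checked explicitly. One small point worth recording in your write-up (the paper is equally terse here): Lemma \ref{lem:analm1m3} literally gives $e^{x_2}$ only as meromorphic on $A$, so the analyticity of the middle position at $t_c$ --- needed by both proofs, since $x_2$ always belongs to the colliding pair --- requires the one-line observation that $e^{x_1}\leq e^{x_2}\leq e^{x_3}$ keeps $e^{x_2}$ bounded and bounded away from zero near $t_c$, hence analytic and nonzero there.
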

\begin{proof}
For arbitrary $\varphi(x)\in\mathscr{D}(\mathbb{R})$,
\[\langle m_j(t)\delta(x-x_j(t))+m_{j+i}(t)\delta(x-x_{j+1}(t)),\varphi(x)\rangle=m_j(t)\varphi(x_j(t))+m_{j+i}(t)\varphi(x_{j+1}(t)).\]
Whenever $j=1$ or $2$, we can always write
\[m_j=-\frac1{2(t-t_c)}+C_0+O(t-t_c),\quad m_{j+1}=\frac1{2(t-t_c)}+\tilde C_0+O(t-t_c) \] around $t_c$. Hence,
\[\begin{aligned}&\lim_{t\to t_c^-}\langle m_j(t)\delta(x-x_j(t))+m_{j+i}(t)\delta(x-x_{j+1}(t)),\varphi(x)\rangle \\
=&(C_0+\tilde C_0)\varphi(x(t_c))-\lim_{t\to t_c}\frac{\varphi(x_j(t))-\varphi(x_{j+1}(t))}{2(t-t_c)} \\
=&\left(\lim_{t\to t_c^-}(m_j+m_{j+1})\right)\varphi(x(t_c))-\frac12\left(\lim_{t\to t_c^-}(\dot{x}_j-\dot{x}_{j+1})\right)\varphi'(x(t_c)) \\
=&\left(\lim_{t\to t_c^-}(m_j+m_{j+1})\right)\varphi(x(t_c))-\frac12\left(\lim_{t\to t_c^-}(u(x_j(t),t)-u(x_{j+1}(t),t))\right)\varphi'(x(t_c)),
\end{aligned}\] where in the last step we have used equation \eqref{eq:b-ode-short} for $b=3$.  The claim follows now easily from the definitions of distributions $\delta$ and $\delta '$.
\end{proof}
\begin{remark}
Shockpeakon creation described by Theorem \ref{thm:shockcreation} confirms
the scenario that at the collision the colliding peakon-antipeakon pair creates
the shock (the $\delta'$ contribution above) and the peakon or antipeakon contribution  (the $\delta$ contribution) thus
giving the overall collision data of two peakons/antipeakons and a shock.  This has been previously verified for the case $n=2$ in \cite{lundmark-shockpeakons}.
\end{remark}
\section{Three multipeakons; spectral properties} \label{sec:3peakspectral}
This section addresses basic questions related to the spectral characterization of the
peakon dynamics \eqref{eq:3peakons}.
\begin{lemma}\label{lem:signeigen}
Let $N^+$ denote the number of positive masses and $n^+$ be the number of
eigenvalues of the spectral problem $A(z)=1-M_1z+M_2z^2-M_3z^3=0$
which have strictly positive real parts.  Then
\begin{equation*}
N^+=n^+.
\end{equation*}
\end{lemma}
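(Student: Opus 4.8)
The plan is to pass to the reciprocal polynomial, settle the statement in the limit of infinitely separated peakons where it is transparent, and then transport it to every configuration by a continuity argument whose only genuine content is that no eigenvalue can ever land on the imaginary axis.

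First I would reduce to a monic cubic with the \emph{same} right half-plane count. Since $A(0)=1$, no eigenvalue $\lambda_i$ vanishes, so I may set $\mu_i = 1/\lambda_i$; substituting $z=1/\mu$ into $A(z)=1-M_1 z+M_2 z^2-M_3 z^3$ and clearing denominators shows that the $\mu_i$ are precisely the roots of
\[ P(\mu)=\mu^3-M_1\mu^2+M_2\mu-M_3, \]
with $M_1,M_2,M_3$ the constants of motion of Lemma \ref{lem:constants}. Because $\operatorname{Re}(1/\lambda)=\operatorname{Re}(\lambda)/\abs{\lambda}^2$, the number of $\mu_i$ with strictly positive real part is exactly $n^+$. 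Writing $a=(1-e^{x_1-x_2})^2$, $b=(1-e^{x_2-x_3})^2$, $c=(1-e^{x_1-x_3})^2$, the coefficients $M_i$ depend real-analytically on $(\alpha,\beta)=(e^{x_1-x_2},e^{x_2-x_3})\in(0,1)^2$ with the masses held fixed, and as $\alpha,\beta\to0^+$ one has $a,b,c\to1$, so $P(\mu)\to(\mu-m_1)(\mu-m_2)(\mu-m_3)$ and $\mu_i\to m_i$. This is the base case: at infinite separation the reciprocal eigenvalues are literally the masses, so there $n^+=N^+$.

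The heart of the proof, and the step I expect to be the main obstacle, is to show that $P$ has no root on the imaginary axis for \emph{any} admissible configuration. The value $\mu=0$ is excluded since $P(0)=-M_3=-m_1m_2m_3\,ab\neq0$. For a nonzero root $i\omega$, separating the real and imaginary parts of $P(i\omega)=0$ forces $\omega^2=M_2$ and $M_1\omega^2=M_3$, so a crossing can occur only if $M_2>0$ and $M_1M_2=M_3$; in particular $M_1$ and $M_3$ must share a sign. I would rule this out in two cases. If all masses have the same sign then every product $m_im_j$ is positive, and expanding $M_1M_2=(m_1+m_2+m_3)(m_1m_2a+m_2m_3b+m_1m_3c)$ isolates the term $m_1m_2m_3(a+b+c)$ together with further nonnegative terms; since $a+b>ab$ for $a,b\in(0,1)$, this already gives $M_1M_2-M_3\geq m_1m_2m_3(a+b+c-ab)>0$, so the crossing identity fails (this is the pure peakon sector, where reality of the spectrum is in any case known from total positivity \cite{ls-cubicstring}, and the antipeakon sector follows from the flip $m\mapsto-m$). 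If the masses have mixed signs, then $\operatorname{sgn}M_3=\operatorname{sgn}(m_1m_2m_3)$ forces $M_3<0$, so sharing a sign means $M_1<0$, hence $|m_2|>m_1+m_3$ or $|m_3|>m_1+m_2$; feeding this into $M_2$ and using the elementary inequalities coming from $x_1<x_2<x_3$,
\[ \sqrt{c}=1-\alpha\beta>1-\alpha=\sqrt{a}, \qquad \sqrt{c}=1-\alpha\beta<(1-\alpha)+(1-\beta)=\sqrt{a}+\sqrt{b}, \]
together with AM--GM, yields $M_2<0$ in every mixed pattern (by the symmetries of $M_1,M_2,M_3$ it suffices to treat $(+,+,-)$, where $a<c$ gives it at once, and $(+,-,+)$, where $(\sqrt a+\sqrt b)^2>c$ gives it). This contradicts $M_2>0$, so no eigenvalue ever reaches the imaginary axis.

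Finally I would assemble the pieces. With the masses fixed, $N^+$ is constant and the roots $\mu_i$ vary continuously over the connected parameter set $(\alpha,\beta)\in(0,1)^2$; since none crosses the imaginary axis, the integer $n^+$ is constant on it. Letting $(\alpha,\beta)\to(0^+,0^+)$ and invoking the base case identifies that constant with $N^+$, and transporting back through $\mu_i=1/\lambda_i$ gives $N^+=n^+$. Everything outside the no-crossing step is pure continuity plus the explicit form of $M_1,M_2,M_3$; the delicate point is entirely the sign analysis of the previous paragraph.
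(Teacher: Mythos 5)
Your proposal is correct, but it follows a genuinely different route from the paper's. The paper argues statically, at a fixed configuration: it settles $N^+=3$ by citing the cubic-string/total-positivity results, uses the flip $m_i\mapsto -m_i$, $\lambda_i\mapsto-\lambda_i$ to reduce everything to $N^+=2$, and then notes that since $M_3=1/(\lambda_1\lambda_2\lambda_3)<0$, the only root patterns competing with $n^+=2$ (three roots with negative real parts, or one negative root plus a purely imaginary pair) all force $M_1<0$ and $M_2>0$; a case analysis on which mass is negative shows that $M_1<0$ in fact forces $M_2<0$, a contradiction. You instead prove an imaginary-axis-avoidance statement first (a root $i\omega$ of the reciprocal cubic forces $\omega^2=M_2$ and $M_1M_2=M_3$, which your sign analysis rules out for every admissible configuration) and then transport the half-plane count by continuity over the connected parameter set $(\alpha,\beta)\in(0,1)^2$ from the infinite-separation limit, where the reciprocal roots are literally the masses. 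This inverts the paper's logical order: there, the absence of purely imaginary eigenvalues is Corollary \ref{cor:nopureimag}, \emph{deduced from} the lemma, whereas for you it is the engine of the proof. The shared technical core is the same inequality (two positive masses, one negative, $M_1<0\Rightarrow M_2<0$); your treatment of the $(+,-,+)$ pattern via $\sqrt{c}<\sqrt{a}+\sqrt{b}$ and AM--GM is cleaner than the paper's quadratic-discriminant computation, and your handling of the same-sign case via $M_1M_2-M_3\geq m_1m_2m_3(a+b+c-ab)>0$ makes the proof self-contained where the paper appeals to \cite{ls-cubicstring}. What the paper's route buys in exchange is the absence of any deformation argument: no continuity of roots, no connectedness, no boundary limit. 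One small ordering slip in your write-up: the claim that mixed signs force $M_3<0$ is only true for the two-positive-one-negative patterns, so the flip $m\mapsto-m$ and the reflection symmetry must be invoked \emph{before} that sign claim, not after; since you do state both symmetries, this is cosmetic rather than a gap.
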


\begin{proof} The statement holds true if $N^+=3$ by results in
\cite{ls-cubicstring}; in that case the spectrum is positive and simple.
Since $m_i\mapsto-m_i,\lambda_i\mapsto-\lambda_i$ is a symmetry of the eigenvalue problem it suffices to analyze only the case with two positive masses, that is $N^+=2$.


Then \[M_3=\frac1{\lambda_1\lambda_2\lambda_3}<0. \] To prove the claim we
have to exclude that three eigenvalues have strictly negative real parts (recalling that complex roots must occur
in conjugate pairs) or that there is one negative and two purely imaginary conjugate roots.  In either case $M_1=\frac1{\lambda_1}+\frac1{\lambda_2}+\frac1{\lambda_3}<0,\;\text{and}\; M_2=\frac1{\lambda_1\lambda_2}+\frac1{\lambda_2\lambda_3}+\frac1{\lambda_3\lambda_1}>0.$ So we assume that $M_1<0$ and $M_2>0$ in order to derive a contradiction.
\vspace{0.5cm}

\noindent Case 1. If $m_1<0$, then $m_2,m_3>0$ and $0<m_2+m_3<-m_1$. Therefore \[-m_1m_3(1-e^{x_1-x_3})^2>m_2m_3(1-e^{x_2-x_3})^2, \] which implies \[M_2=m_1m_2(1-e^{x_1-x_2})^2+m_2m_3(1-e^{x_2-x_3})^2+m_1m_3(1-e^{x_1-x_3})^2<0 \] and thus leads to a contradiction.

\noindent Case 2. If $m_3<0$, it is similar to Case 1.

\noindent Case 3. If $m_2<0$, then $m_1,m_3>0$ and $0<m_1+m_3<-m_2$.

Denote $\hat m=m_1+m_3$, then  \[m_1m_2(1-e^{x_1-x_2})^2+m_2m_3(1-e^{x_2-x_3})^2<-\hat m[m_1(1-e^{x_1-x_2})^2+m_3(1-e^{x_2-x_3})^2].\]
Set $\alpha=e^{x_1-x_2},\beta=e^{x_2-x_3},m_1=\theta \hat m,m_3=(1-\theta)\hat m$, and
\[f(\theta)=\theta(1-\alpha)^2+(1-\theta)(1-\beta)^2-\theta(1-\theta)(1-\alpha\beta)^2.  \] $f(\theta)$ is a quadratic function with respect to $\theta$ with the discriminant \[\begin{aligned}&\Delta=((1-\alpha)^2-(1-\beta)^2-(1-\alpha\beta)^2)^2-4(1-\alpha\beta)^2(1-\beta)^2\\ &=-(1-\alpha)^2(1-\beta)^2(1+\alpha)(1+\beta)(3-\alpha-\beta-\alpha\beta)<0, \end{aligned}\] which leads to $f(\theta)>0$. Therefore \[\begin{aligned}&M_2<-\hat m[m_1(1-e^{x_1-x_2})^2+m_3(1-e^{x_2-x_3})^2]+m_3m_1(1-e^{x_1-x_3})^2 \\&=-\hat m^2(\theta(1-\alpha)^2+(1-\theta)(1-\beta)^2-\theta(1-\theta)(1-\alpha\beta)^2)=-\hat m^2f(\theta)<0, \end{aligned}\]
hence a contradiction.
\end{proof}

Clearly, by reflection symmetry, we obtain
\begin{corollary}
Let $N^-$ denote the number of negative masses and $n^-$ be the number of
eigenvalues of the spectral problem $A(z)=1-M_1z+M_2z^2-M_3z^3=0$
which have strictly negative  real parts.  Then
\begin{equation*}
N^-=n^-.
\end{equation*}
\end{corollary}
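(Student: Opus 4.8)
The plan is to deduce this corollary directly from Lemma \ref{lem:signeigen} by exploiting the mass-reversing involution already invoked in its proof, namely $\iota:(m_1,m_2,m_3)\mapsto(-m_1,-m_2,-m_3)$ with the positions $x_1<x_2<x_3$ held fixed. The first step is to record precisely how this involution acts on the spectral data. Inspecting the constants of motion from Lemma \ref{lem:constants}, one sees that $M_1$ is odd, $M_2$ is even, and $M_3$ is odd in the masses, so under $\iota$ the characteristic polynomial transforms as
\begin{equation*}
A(z)=1-M_1z+M_2z^2-M_3z^3\;\longmapsto\;1+M_1z+M_2z^2+M_3z^3=A(-z).
\end{equation*}
Consequently the eigenvalues of the image configuration are exactly $\{-\lambda_j\}$, the negatives of the eigenvalues $\{\lambda_j\}$ of the original problem.

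Next I would track the two counts through $\iota$. A mass $m_i$ is negative precisely when its image $-m_i$ is positive, so the number of positive masses of the image configuration equals $N^-$, the number of negative masses of the original. Likewise, since $\lambda_j\mapsto-\lambda_j$ is a bijection sending the open right half-plane to the open left half-plane (and fixing the imaginary axis), an eigenvalue of the image configuration has strictly positive real part if and only if the corresponding original eigenvalue has strictly negative real part; hence the number of eigenvalues of the image with strictly positive real part equals $n^-$.

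The final step is simply to apply Lemma \ref{lem:signeigen} to the image configuration $(-m_1,-m_2,-m_3)$, which is itself an admissible configuration in the sector $X$ with all masses nonzero. That lemma asserts the equality of its positive-mass count and its positive-real-part eigenvalue count; by the identifications of the previous paragraph this reads $N^-=n^-$, as claimed. No genuine obstacle arises here: the only point requiring care is the bookkeeping of signs under the involution, in particular the observation that purely imaginary eigenvalues are fixed by $\lambda\mapsto-\lambda$ and therefore contribute to neither the $n^+$ nor the $n^-$ count, so that the bijection between the strict-sign classes is clean and the two equalities $N^+=n^+$ and $N^-=n^-$ coexist without contradiction.
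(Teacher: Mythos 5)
Your proposal is correct and follows essentially the same route as the paper: the paper's entire proof is the phrase ``by reflection symmetry,'' and your argument is precisely that symmetry made explicit --- the involution $(m_i)\mapsto(-m_i)$ with fixed positions sends $M_1\mapsto -M_1$, $M_2\mapsto M_2$, $M_3\mapsto -M_3$, hence $A(z)\mapsto A(-z)$, so the eigenvalues are negated, and applying Lemma \ref{lem:signeigen} to the image configuration gives $N^-=n^-$. The only cosmetic quibble is that a purely imaginary eigenvalue is not \emph{fixed} by $\lambda\mapsto-\lambda$ (only the imaginary axis is preserved setwise), but your count is unaffected since such eigenvalues enter neither $n^+$ nor $n^-$.
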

  Another useful corollary is that there are no eigenvalues on the line $\textrm{Re} \, z=0$.
\begin{corollary} \label{cor:nopureimag}
None of the eigenvalues of the spectral problem $A(z)=1-M_1z+M_2z^2-M_3z^3=0$ is
purely imaginary.
\end{corollary}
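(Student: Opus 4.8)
The plan is to prove this by a short counting argument that combines Lemma~\ref{lem:signeigen} with its reflection Corollary ($N^-=n^-$), exploiting the standing assumption that the masses are nonzero. Since $m_1,m_2,m_3\neq 0$ and $x_1<x_2<x_3$, the leading coefficient $M_3=m_1m_2m_3(1-e^{x_1-x_2})^2(1-e^{x_2-x_3})^2$ is nonzero, so $A(z)=1-M_1z+M_2z^2-M_3z^3$ is a genuine cubic with exactly three roots counted with multiplicity. Moreover $A(0)=1\neq 0$, so $z=0$ is never an eigenvalue.

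First I would partition the three eigenvalues (with multiplicity) according to the sign of their real parts, writing $n^++n^-+n^0=3$, where $n^0$ counts eigenvalues with vanishing real part. Since $A(z)$ is a real polynomial, a purely imaginary root $i\beta$ forces its conjugate $-i\beta$ to be a root as well, so any purely imaginary contribution is genuinely captured by $n^0$; but for the logic all that matters is that these three counts exhaust the three roots. Next, because every mass is nonzero, the number of positive masses and the number of negative masses satisfy $N^++N^-=3$.

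Then I would simply invoke the two sign results: Lemma~\ref{lem:signeigen} gives $n^+=N^+$, and the reflection Corollary immediately above gives $n^-=N^-$. Adding these yields $n^++n^-=N^++N^-=3$, and comparing with $n^++n^-+n^0=3$ forces $n^0=0$. Hence no eigenvalue has zero real part, i.e.\ none is purely imaginary.

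The argument is essentially immediate, so the only point requiring care—and the mild obstacle—is the bookkeeping of multiplicities: one must be certain that $n^+$ and $n^-$ in Lemma~\ref{lem:signeigen} and its Corollary are counted with algebraic multiplicity, so that $n^++n^-+n^0$ truly accounts for all three roots of the cubic. Given $M_3\neq 0$ (so that there are exactly three finite roots) and $A(0)\neq 0$ (so that $0$ is excluded), this is automatic, and the conclusion follows.
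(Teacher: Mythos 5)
Your counting argument is correct and is precisely the reasoning the paper leaves implicit: the corollary is stated without proof immediately after Lemma~\ref{lem:signeigen} and its reflection ($N^-=n^-$), and the intended derivation is exactly $n^++n^-=N^++N^-=3$, which leaves no room for roots on the imaginary axis (with $A(0)=1$ excluding zero). Your remark about counting with algebraic multiplicity is the right point of care, and it is consistent with the paper's usage, since the lemma must accommodate the double roots that the paper elsewhere admits (e.g.\ the quadratic-root case in Section~\ref{sec:Fundamentals} and the $\leq$ signs in Table~\ref{table:1}).
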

\begin{corollary} \label{cor:tripleroots}
The spectral problem for $n=3$ can never have triple roots.
\end{corollary}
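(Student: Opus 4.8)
The plan is to argue by contradiction, exploiting the signature result just established (Lemma \ref{lem:signeigen} and its corollary) to show that a triple root would force the whole configuration into the pure peakon or pure anti-peakon sector, where the spectrum is already known to be simple.

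First I would note that since $m_i$ and $x_i$ are real, the coefficients $M_1,M_2,M_3$ are real, so $A(z)=1-M_1z+M_2z^2-M_3z^3$ is a real cubic. A non-real root must therefore occur together with its complex conjugate; but a triple root already accounts for all three zeros, so it cannot be non-real. Hence any hypothetical triple root $\lambda$ is real, and since $A(0)=1\neq 0$ we have $\lambda\neq 0$ (consistently with Corollary \ref{cor:nopureimag}, which already excludes purely imaginary eigenvalues).

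Next comes the decisive observation: a triple root means $\lambda_1=\lambda_2=\lambda_3=\lambda$ with $\lambda$ real and nonzero, so all three eigenvalues lie in the same open half-plane, either $\mathrm{Re}\,z>0$ or $\mathrm{Re}\,z<0$, according to the sign of $\lambda$. If $\lambda>0$ then $n^+=3$, and Lemma \ref{lem:signeigen} gives $N^+=3$, i.e.\ all three masses are positive. This is exactly the pure peakon case, where the results of \cite{ls-cubicstring} guarantee that the spectrum is positive and \emph{simple}---directly contradicting the assumed triple root. If instead $\lambda<0$, I would invoke the reflection symmetry $m_i\mapsto -m_i,\ \lambda_i\mapsto-\lambda_i$ (equivalently the corollary giving $N^-=n^-=3$) to land in the pure anti-peakon case, which this symmetry maps to the pure peakon case, again yielding a simple spectrum. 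Either way the existence of a triple root is contradicted, so none can occur.

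I expect the only subtle step to be the dichotomy argument itself: recognizing that reality of the coefficients forces the triple root to be real and nonzero, which in turn pins all three eigenvalues into a single open half-plane and thereby triggers Lemma \ref{lem:signeigen} to produce pure-sign masses. After that the simplicity of the spectrum in the pure (anti)peakon sector from \cite{ls-cubicstring} supplies the contradiction with no computation whatsoever; the whole difficulty lies in chaining the already-established structural results in the right order rather than in any estimate or algebra.
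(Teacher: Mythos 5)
Your proof is correct and takes essentially the same route as the paper: reduce to a positive triple root, apply Lemma \ref{lem:signeigen} to conclude all three masses are positive, and then contradict the simplicity of the pure-peakon spectrum established in \cite{ls-cubicstring}. The only difference is that you explicitly justify what the paper compresses into its ``without loss of generality'' --- namely that reality of the coefficients forces a triple root to be real and nonzero, and that the reflection symmetry $m_i\mapsto -m_i$, $\lambda_i\mapsto-\lambda_i$ handles the negative case.
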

\begin{proof}
Suppose, without loss of generality, that the spectral problem has triple positive roots.  Then all the masses are positive, i.e. the peakons case. However the eigenvalues for the peakons are simple, hence a contradiction.
\end{proof}
\begin{remark} Figure \ref{fig:1} on page \pageref{fig:1} illustrates how the eigenvalues are distributed
for the mass signature $m_1(0)>0, m_2(0)<0, m_3(0)>0$, abbreviated \\($+-+$).
The graph depicts $75\times 75$ triples of
eigenvalues for different values of masses within that configuration.
The actual input data is $m_1=1.2+0.02j, \, m_2 =-5-0.01k, \,  m_3=4, \, x_1=-0.2, \, x_2=0, x_3=0.1,\, 1\leq j,k \leq 75$.
Observe that indeed the line $\textrm{Re}\, \lambda=0$ contains no eigenvalues.
\end{remark}
\begin{center}
\begin{figure}[heretp]
\includegraphics*[viewport=60 60 600 600, width=12cm,height=12cm]{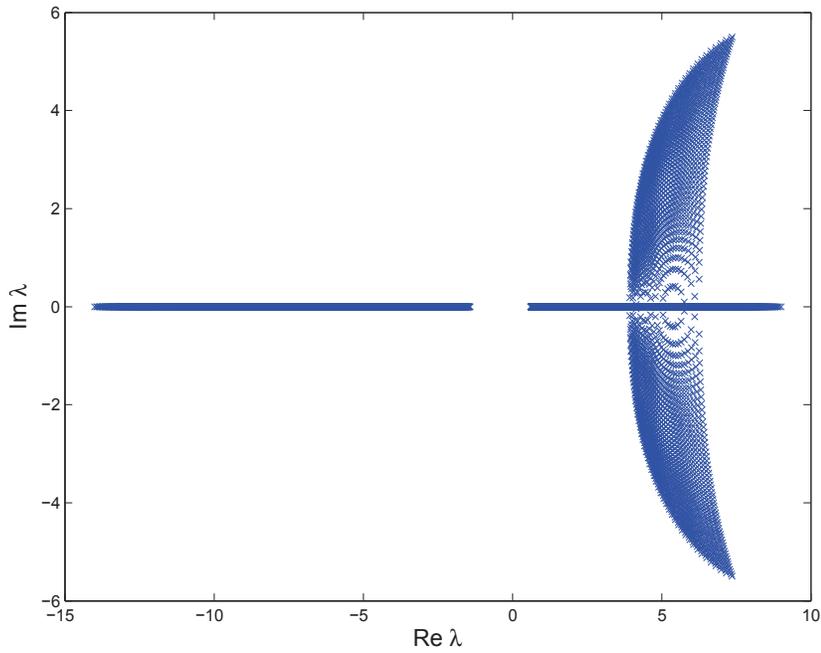}
\caption{A portrait of eigenvalue distribution for the mass signature $(+-+)$}
\label{fig:1}
\end{figure}
\end{center}
\newpage

\section{Three multipeakons; classification}\label{sec:3peakclass}

%
The goal of this section is to understand the impact of the configuration of
signs of masses on the occurrence of collisions.

One can classify the system of three multipeakons by the signs of the initial values of $m_i$s: \[\begin{aligned} &\text{(i)}\,m_1(0)>0, m_2(0)>0, m_3(0)>0;&&\text{(ii)}\,m_1(0)>0, m_2(0)>0, m_3(0)<0;\\ &\text{(iii)}\,m_1(0)>0, m_2(0)<0, m_3(0)>0;&&\text{(iv)}\,m_1(0)>0, m_2(0)<0, m_3(0)<0;\\ &\text{(v)}\,m_1(0)<0, m_2(0)>0, m_3(0)>0;&&\text{(vi)}\,m_1(0)<0, m_2(0)>0, m_3(0)<0;\\ &\text{(vii)}\,m_1(0)<0, m_2(0)<0, m_3(0)>0;&&\text{(viii)}\,m_1(0)<0, m_2(0)<0, m_3(0)<0.\\\end{aligned}\] The first and last cases are pure peakon and antipeakon, which are already well-known. In view of Lemma \ref{lem:symmetry}, the symmetry
$m_i \rightarrow -m_i, t\rightarrow -t$ reduces  the eight cases
to four cases.  To gain some clarity we will supplement a reference to
any of the cases from the list above by an ordered collection of signs, i.e.
case (i) is equivalent to $(+++)$, case (ii) to $(++-)$ etc. and we will refer to
a given mass signature as a {\sl mass signature}.

\begin{theorem}\label{thm:collisions}
If the mass signature is:
\begin{itemize}
\item[(1)] $(--+)$ or $(-++)$, then no collisions will occur for positive times,
\item[(2)] $(++-)$ or $(+--)$, then no collision will occur for negative times.
\end{itemize}
Furthermore, if the eigenvalues are not in anti-resonance, a collision will always happen at some finite time $t_c$
\begin{itemize}
\item[(1)] if the mass signature is $(--+)$ or $(-++)$, then the collision will happen at a negative time,
\item[(2)] if the mass signature is $(++-)$ or $(+--)$, then the collision will happen at a positive time.
\item[(3)] if the mass signature is $(-+-)$ or $(+-+)$, then the collision will happen at both a finite positive and a finite negative time.
\end{itemize}
\end{theorem}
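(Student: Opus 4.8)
The plan is to detect every collision through the sign of the two scalar functions $1/m_1(t)$ and $1/m_3(t)$, since by Lemma~\ref{lem:analm1m3}(3) a collision occurs precisely when one of them vanishes at some $t_c\in A$. Writing the formulas of Lemmas~\ref{lem:m1} and~\ref{lem:m3} as $1/m_1(t)=e^{2x_1(t)}g_1(t)$ and $1/m_3(t)=e^{-2x_3(t)}g_3(t)$ with
\[
g_1(t)=\frac{e^{-2x_1(0)}}{m_1(0)}-2\int_0^t e^{-2x_1(\tau)}\,\mathrm{d}\tau,\qquad g_3(t)=\frac{e^{2x_3(0)}}{m_3(0)}+2\int_0^t e^{2x_3(\tau)}\,\mathrm{d}\tau,
\]
we have $g_1'=-2e^{-2x_1}<0$ and $g_3'=2e^{2x_3}>0$, so $g_1$ is strictly decreasing and $g_3$ strictly increasing, and each carries the sign of the corresponding $1/m_i$. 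For $(--+)$ and $(-++)$ one has $m_1(0)<0$, $m_3(0)>0$, whence $g_1(0)<0$ keeps $g_1<0$ and $g_3(0)>0$ keeps $g_3>0$ throughout $[0,\infty)$: neither vanishes for $t>0$, which is item~(1) of the first list. Item~(2) follows identically from $m_1(0)>0$, $m_3(0)<0$ on $(-\infty,0]$, or at once from the reflection $t\mapsto-t,\ m_i\mapsto-m_i$ of Lemma~\ref{lem:symmetry}, which swaps $(--+)\leftrightarrow(++-)$ and $(-++)\leftrightarrow(+--)$.

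For the existence of collisions I would argue by contradiction from the asymptotics of $e^{x_3(t)}=\sum_j b_j^{(1)}(t)$ and $e^{-x_1(t)}=\sum_j\tilde b_j^{(1)}(t)$ (Lemmas~\ref{cor:xn},~\ref{cor:x1}), where $b_j^{(k)}=p_j^{(k)}(t)e^{t/\lambda_j}$ and $\tilde b_j^{(k)}=\tilde p_j^{(k)}(t)e^{-t/\lambda_j}$. Take $(--+)$ and suppose no collision for $t<0$; since the no-escape Corollaries~\ref{cor:noescaperight},~\ref{cor:noescapeleft} forbid any other breakdown, the solution exists on $(-\infty,0]$ and $e^{-x_1(t)}=\sum_j\tilde b_j^{(1)}(t)>0$ there. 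As $t\to-\infty$ the dominant contribution is the term maximising $\mathrm{Re}(1/\lambda_j)$---note this is governed by $\mathrm{Re}(1/\lambda_j)=\mathrm{Re}(\lambda_j)/|\lambda_j|^2$, whose sign agrees with that of $\mathrm{Re}(\lambda_j)$---hence by an eigenvalue with positive real part, of which there are exactly $n^+$ by Lemma~\ref{lem:signeigen}. For $(--+)$, $n^+=1$, so a single, necessarily real, positive eigenvalue $\lambda_+$ rules the limit.

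Two mechanisms then finish the argument. If the dominant eigenvalue is real (as for $(--+)$, and for $(-++)$ when its two positive-real-part eigenvalues are real), a negative leading coefficient makes $\sum_j\tilde b_j^{(1)}$ cross zero at a finite time, so Lemma~\ref{lem:analx1} gives a collision; a positive leading coefficient gives $e^{-x_1}\to+\infty$, hence $\int_t^0 e^{-2x_1}\,\mathrm{d}\tau\to+\infty$ and $g_1(t)\to+\infty$ as $t\to-\infty$, so the decreasing $g_1$, negative at $t=0$, vanishes at a finite negative time---again a collision. If the dominant pair is complex conjugate with positive real part (possible for $(-++)$), then $\sum_j\tilde b_j^{(1)}$ oscillates with unbounded amplitude, must change sign, and Lemma~\ref{lem:analx1} again yields a collision. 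In every branch the no-collision assumption is contradicted, and combining with item~(1) of the first list places the collision at a finite negative time. The nonvanishing of the relevant leading coefficient is precisely what no-anti-resonance secures, via $b_i\tilde b_i=\prod_{j\neq i}(1+\lambda_i/\lambda_j)/(1-\lambda_i/\lambda_j)^2\neq0$ from Theorem~\ref{thm:b-tildeb}.

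The remaining claims follow the same template: for $(+-+)$ one has $n^+=2$, $n^-=1$, so the unique real negative eigenvalue dominates $e^{-x_1}$ as $t\to+\infty$ (forcing a finite positive collision through $g_1$, since $g_3>0$ on $[0,\infty)$ rules out the $m_3$-channel) and dominates $e^{x_3}$ as $t\to-\infty$ (forcing a finite negative collision through $g_3$), which is item~(3); the signatures $(++-),(+--),(-+-)$ reduce to $(--+),(-++),(+-+)$ under the reflection of Lemma~\ref{lem:symmetry}, which preserves anti-resonance since $\lambda_i\mapsto-\lambda_i$. The main obstacle is the asymptotic bookkeeping of Part~II: correctly locating the dominant exponent, controlling the oscillatory complex case, and above all guaranteeing that the leading coefficient does not vanish---the one place the no-anti-resonance hypothesis is indispensable. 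A final technical point is the only spectral degeneracy left open, a double eigenvalue (triple roots being excluded by Corollary~\ref{cor:tripleroots}); there the dominant term acquires a polynomial prefactor and one must check the leading principal-part coefficient is nonzero, by the analogue of Theorem~\ref{thm:b-tildeb} or a continuity argument.
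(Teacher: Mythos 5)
Your proposal follows the paper's own strategy almost exactly: the first part (no collisions in one time direction) is the paper's argument verbatim, via the sign and monotonicity of the brackets in Lemmas \ref{lem:m1} and \ref{lem:m3}; and for the existence of collisions you, like the paper, use the no-anti-resonance hypothesis through Theorem \ref{thm:b-tildeb} to get nonvanishing residues, so that $e^{x_3}=\sum_j b_j^{(1)}$ and $e^{-x_1}=\sum_j\tilde b_j^{(1)}$ contain both growing and decaying exponentials, which forces the integrals $\int_0^t e^{2x_3}\,\mathrm{d}\tau$ and $\int_0^t e^{-2x_1}\,\mathrm{d}\tau$ to diverge in both time directions and hence the Bernoulli brackets to cross zero. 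The only cosmetic difference is bookkeeping: you split into cases according to the sign of the leading coefficient and whether the dominant eigenvalue is real or a complex conjugate pair, whereas the paper sidesteps this by observing that one is integrating the \emph{square} $e^{2x_3}$ (resp.\ $e^{-2x_1}$), whose leading coefficient is automatically positive (up to a measure-zero set of times in the oscillatory case). Both versions are sound for simple spectrum.

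There is, however, one genuine gap: the double-eigenvalue case. You correctly note that triple roots are excluded (Corollary \ref{cor:tripleroots}) and that a double root is the remaining degeneracy, but you then defer the key nonvanishing statement to ``the analogue of Theorem~\ref{thm:b-tildeb} or a continuity argument.'' No such analogue exists in the paper --- Theorem \ref{thm:b-tildeb} is stated only for simple spectrum with no anti-resonance --- and a continuity argument cannot work as stated, since nonvanishing of a coefficient is not preserved under limits (a limit of nonzero quantities can perfectly well be zero). The paper closes this case with a concrete algebraic argument that your proposal is missing: if $\lambda_1=\lambda_2$ is a double root of $A(z)$ and were also a double root of $B(z)$, then the identity of Lemma \ref{lem:FundIdentity}, $2A(z)C(-z)+2A(-z)C(z)=B(z)B(-z)$, would force $A(-z)C(z)$ to have a double root at $z=\lambda_1$; since $C(\lambda_1)\neq 0$ (otherwise the eigenfunction $\Psi$ would vanish identically), $-\lambda_1$ would have to be a double root of the cubic $A(z)$, impossible when $\lambda_1$ already is one. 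Hence at least one of $b_1^{(1)}(0)$, $b_1^{(2)}(0)$ (and likewise $\tilde b_1^{(1)}(0)$, $\tilde b_1^{(2)}(0)$) is nonzero, which restores the growing exponential your argument needs. Without this step your proof covers only the generic (simple-spectrum) situation, while the theorem as stated allows double roots.
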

\begin{proof} First, we show that in the case of item $(1)$ no collisions occur
in positive time.  Indeed, by examining the formulas \eqref{eq:m3} and
\eqref{eq:m1} we see that the respective right hand sides can not be
equal $0$ for $t\geq 0$.  The same argument works for item $(2)$ and
negative times.

Let $\lambda_1,\lambda_2,\lambda_3$ be the eigenvalues of the system which we can order as
$\mathrm{Re}\,\frac{1}{\lambda_1}\leq\mathrm{Re}\,\frac{1}{\lambda_2}\leq\mathrm{Re}\,\frac{1}{\lambda_3}$.   Since the masses have different signs, by Lemma \ref{lem:signeigen}, we have $\mathrm{Re}\,\frac{1}{\lambda_1}<0<\mathrm{Re}\,\frac{1}{\lambda_3}$.

\textit{Case 1}: The eigenvalues are simple. Since the eigenvalues are not in anti-resonance, all the residues $b_i$'s and $\tilde b_i$'s are nonzero according to \eqref{eq:b-tildeb}. Hence, according to (\ref{eq:xn}) and (\ref{eq:x1}), there exists at least one increasing and one decreasing exponential function in the expansions of $e^{x_3}$ (respectively in the expansions of $e^{-x_1}$). Moreover, since we are squaring $e^{x_3}$, $e^{-x_1}$ respectively, the
coefficient of the leading exponential will be strictly positive if the spectrum
is real, or strictly positive except for a set of measure zero if
the spectrum is degenerate or complex. That is to say, both integrals \[\int_0^te^{2x_3(\tau)}\mathrm{d}\tau\;\text{and}\;\int_0^te^{-2x_1(\tau)}\mathrm{d}\tau\] will diverge to $\pm\infty$ as $t\to\pm\infty$. Hence, there exists a positive (respectively negative) time $t_c$ such that $\frac1{m_3}=0$ or $\frac1{m_1}=0$ whenever  $m_3(0)<0$ or $m_1(0)>0$ (respectively $m_3(0)>0$ or $m_1(0)<0$). This proves the claim in view of lemma \ref{lem:analm1m3}.

\textit{Case 2}: There is a double root, i.e. $\lambda_1=\lambda_2\neq\lambda_3$, then $\lambda_1$ will not be the double root of $B(z)$ (respectively $\tilde B(z)$), therefore at least one of $b_1^{(1)}(0)$ and $b_1^{(2)}(0)$ (respectively $\tilde b_1^{(1)}(0)$ and $\tilde b_1^{(2)}(0)$) is nonzero. This also implies that there exists at least one increasing and one decreasing exponential function in the expansions of both $e^{x_3}$ and $e^{-x_1}$. To show that $\lambda_1$ will not be the double root of $B(z)$, we only need to note that $\lambda_1$ must be the double root of $A(-z)C(z)$ if $\lambda_1$ is the double root for both $A(z)$ and $B(z)$ according to Lemma \ref{lem:FundIdentity}. However, $C(\lambda_1)$ must be nonzero, otherwise $\Psi(z)$ is identically equal to $0$ which leads to a contradiction. Therefore $-\lambda_1$ must be the double root of $A(z)$, which also leads to a contradiction.
\end{proof}


%
%
%
%
%
%

It is immediate from the above theorem that
\begin{corollary}
\mbox{}
\begin{itemize}
\item[(1)] For cases (v)$(-++)$ and (vii)$(--+)$ there exists a unique solution to the ODEs \eqref{eq:3peakons} for all positive $t$.

 \item[(2)] For cases (ii)$(++-)$ and (iv)$(+--)$ there exists a unique solution to the ODEs \eqref{eq:3peakons} for all negative $t$.

\end{itemize}
\end{corollary}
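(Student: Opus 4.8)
The plan is to read this corollary as a global-existence statement that follows by combining the no-collision conclusion of Theorem \ref{thm:collisions} with the structural fact, already assembled in Section \ref{sec:3peakons}, that a collision is the \emph{only} mechanism by which a solution of \eqref{eq:3peakons} can cease to exist in finite time. First I would invoke standard ODE theory: the right-hand sides of \eqref{eq:3peakons} are real-analytic functions of $(x_1,x_2,x_3,m_1,m_2,m_3)$ on the open set $X\times\R^3$, so through any initial datum in this set there passes a unique maximal solution on some interval $(t_-,t_+)$, with the property that if $t_+<\infty$ then the solution leaves every compact subset of $X\times\R^3$ as $t\to t_+^-$. It therefore suffices to show that, for the signatures in part~(1), no such escape can happen for $t>0$.

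Next I would rule out each escape route on a finite interval $[0,t_+)$ under the hypothesis of part~(1). Theorem \ref{thm:collisions}(1) states that for the signatures $(-++)$ and $(--+)$ no collision occurs for positive times, so the solution never reaches the part of $\partial X$ where two coordinates coincide; in particular the strict ordering $x_1<x_2<x_3$ persists. Corollaries \ref{cor:noescaperight} and \ref{cor:noescapeleft} exclude $x_3\to+\infty$ and $x_1\to-\infty$ in finite time, and since $x_1<x_2<x_3$ on $[0,t_+)$ the middle coordinate $x_2$ is squeezed between two quantities that remain finite; hence all three positions stay bounded. Because no collision occurs, ``before a collision'' means ``for all of $[0,t_+)$,'' so Lemma \ref{lem:mzero} and the two corollaries immediately following it guarantee that none of the masses vanishes, changes sign, or diverges to $\pm\infty$ on that interval. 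Consequently the solution remains in a compact subset of $X\times\R^3$, which forces $t_+=+\infty$; uniqueness is inherited from the Picard--Lindel\"of uniqueness already noted for these ODEs. This establishes part~(1) for both cases (v) and (vii).

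For part~(2) I would not repeat this argument but instead appeal to the reflection symmetry of Lemma \ref{lem:symmetry}. Under the map $t\mapsto -t,\ m_i\mapsto -m_i$ the signature $(--+)$ is carried to $(++-)$ and $(-++)$ to $(+--)$, and a solution that exists for all positive times is carried to one that exists for all negative times. Thus the global negative-time existence for cases (ii) and (iv) follows formally from the global positive-time existence for cases (vii) and (v) proved in part~(1).

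The main obstacle is not any single computation but the bookkeeping of \emph{every} possible singular behaviour: one must be certain that collision is genuinely the only way the maximal solution can terminate, i.e.\ that no position runs off to infinity and no mass blows up or reaches zero on a finite forward interval. Since each of these contingencies is already eliminated by a named result of Section \ref{sec:3peakons}, the corollary is indeed immediate once these ingredients are marshalled. The only step needing a line of genuine care is the boundedness of $x_2$, which I would extract from the order relation $x_1<x_2<x_3$ together with the two escape corollaries rather than from any separate estimate.
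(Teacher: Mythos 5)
Your proposal is correct and takes essentially the same approach as the paper: the paper offers this corollary as ``immediate'' from Theorem \ref{thm:collisions} together with its earlier groundwork that collisions are the only finite-time singularity (Corollaries \ref{cor:noescaperight} and \ref{cor:noescapeleft} excluding escape of the positions, Lemma \ref{lem:mzero} and its corollaries controlling the masses), and your write-up simply makes that continuation argument explicit, with your use of the reflection symmetry of Lemma \ref{lem:symmetry} for part (2) being an inessential variant of citing Theorem \ref{thm:collisions}(2) directly. The one step your compactness framing leaves implicit---passing from ``no collision at any time in $[0,t_+)$'' to uniform separation of positions and uniform mass bounds as $t\to t_+^-$---is secured by the sign-definite formulas \eqref{eq:m1} and \eqref{eq:m3} (whose bracketed factors are bounded away from zero on bounded time intervals) together with conservation of $M_3$, which is the same level of detail the paper itself glosses over.
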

The global existence (in one time direction) has an interesting
impact on the spectrum of the boundary value problem.
\begin{theorem}The eigenvalues of the spectral problem of cases (ii,iv,v,vii) are real, simple, nonzero, and are equal to the inverses of asymptotic values of masses.  \end{theorem}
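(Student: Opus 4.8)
The plan is to reduce to the two cases with global forward existence and then read the spectral properties off the large-time asymptotics of the exact exponential-sum representations $e^{x_3}=\sum_j b_j^{(1)}$ and $e^{-x_1}=\sum_j\tilde b_j^{(1)}$. First I would invoke Lemma \ref{lem:symmetry}: the involution $m_i\mapsto-m_i$, $t\mapsto-t$ sends $(M_1,M_2,M_3)$ to $(-M_1,M_2,-M_3)$, hence $A(z)\mapsto A(-z)$ and $\lambda_i\mapsto-\lambda_i$, so reality, simplicity and non-vanishing of the spectrum are all preserved; since it carries case (ii) to (vii) and (iv) to (v) while reversing time, it suffices to treat (v) and (vii), for which the preceding Corollary guarantees a global solution for all $t\ge 0$. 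Non-vanishing is then immediate, because $A(0)=1\neq 0$ shows $0$ is never an eigenvalue.

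The heart of the argument is \emph{reality}, which I would prove by contradiction. Being a cubic, $A$ can fail to be real only by having one real root $\lambda_r$ and a simple conjugate pair $\lambda,\bar\lambda$. By Corollary \ref{cor:nopureimag} this pair is not purely imaginary, and a nonreal root is never anti-resonant: $-\lambda=\lambda_r$ is impossible (real versus nonreal), $-\lambda=\bar\lambda$ would force $\lambda$ purely imaginary, and $-\lambda=\lambda$ is absurd. Hence $A(-\lambda)\neq 0$, so by Theorem \ref{thm:symmetry}(3), which gives $2A(-\lambda)=B(\lambda)\tilde B(\lambda)$, both $B(\lambda)$ and $\tilde B(\lambda)$ are nonzero; as the residues satisfy $b\propto B(\lambda)/A'(\lambda)$ and $\tilde b\propto\tilde B(\lambda)/A'(\lambda)$, the oscillatory exponentials $e^{t/\lambda},e^{t/\bar\lambda}$ genuinely occur in $e^{x_3}$ and in $e^{-x_1}$. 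Writing $\rho=\mathrm{Re}(1/\lambda)$ and $\nu=1/\lambda_r$, Lemma \ref{lem:signeigen} forces $\rho>0>\nu$ in case (v) ($N^+=2$) and $\rho<0<\nu$ in case (vii) ($N^+=1$); the borderline $\rho=\nu$ is excluded since it would make all three real parts share one sign. In case (v) the pair dominates $e^{x_3}=\sum_jb_j^{(1)}(0)e^{t/\lambda_j}$ as $t\to+\infty$, producing a leading term $2|b|e^{\rho t}\cos(\mathrm{Im}(1/\lambda)\,t+\varphi)$ that changes sign; in case (vii) it dominates $e^{-x_1}=\sum_j\tilde b_j^{(1)}(0)e^{-t/\lambda_j}$ instead. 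Either way the left-hand side must remain positive for all $t\ge 0$ by global existence, a contradiction, so the spectrum is real.

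Granting reality and simplicity, the \emph{asymptotic-mass identity} follows by dominant balance. Order $1/\lambda_1<1/\lambda_2<1/\lambda_3$; positivity of $e^{x_3}$ and $e^{-x_1}$ forces the extreme residues $b_3^{(1)}(0),\tilde b_1^{(1)}(0)>0$. Substituting $e^{x_3}\sim b_3^{(1)}(0)e^{t/\lambda_3}$ into Lemma \ref{lem:m3} and balancing the leading exponentials gives $1/m_3(t)\to\lambda_3$; the same device in Lemma \ref{lem:m1} with $e^{-x_1}\sim\tilde b_1^{(1)}(0)e^{-t/\lambda_1}$ gives $1/m_1(t)\to\lambda_1$; and conservation of $M_1=\sum_i m_i=\sum_i 1/\lambda_i$ then forces $m_2\to 1/\lambda_2$. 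Thus the eigenvalues are exactly the inverses of the asymptotic masses.

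The step I expect to be the main obstacle is \emph{simplicity}. Triple roots are excluded by Corollary \ref{cor:tripleroots}, so the task is to rule out a real double root $\lambda$, which by Lemma \ref{lem:signeigen} must lie among the majority-sign eigenvalues. The difficulty is that, unlike reality, a double root is consistent with everything established above: it respects the sign counts when these are taken with algebraic multiplicity, and it is even compatible with the asymptotic-mass identity, the repeated eigenvalue merely appearing as a repeated limiting mass. The route I would take splits on $B(\lambda)$: using Lemma \ref{lem:FundIdentity} together with $A=\tilde A$, one shows $B(\lambda)=0$ forces the anti-resonance $-\lambda\in\mathrm{spec}$, whereas $B(\lambda)\neq 0$ produces a genuine secular term $t\,e^{t/\lambda}$ in $e^{x_3}$ or $e^{-x_1}$. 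Since neither alternative is refuted by positivity alone, I expect this step to require the oscillation/total-positivity structure of the bunched configuration (the same mechanism that yields the simple positive spectrum of pure peakons in \cite{ls-cubicstring}), invoked to exclude both the anti-resonant double root and the secular alternative; this is the genuine crux of the proof.
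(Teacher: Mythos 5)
Your proposal does not prove the theorem: simplicity of the spectrum, which you yourself flag as ``the genuine crux,'' is left as an acknowledged open step, and the closing move you anticipate --- invoking the oscillation/total-positivity structure of \cite{ls-cubicstring} --- is not available and is not what the paper does. Total positivity is precisely what is lost once $m$ is a signed measure (this is the point of the whole paper), and the paper's proof of this theorem never appeals to it. The gap is genuine within your framework: a real double root $\lambda$ of the majority sign is compatible with positivity of $e^{x_3}$ and $e^{-x_1}$ (the secular term $t\,e^{t/\lambda}$ can enter with favourable sign, cf.\ the quadratic-root formulas in the Example of Section \ref{sec:Fundamentals}), it is compatible with Lemma \ref{lem:signeigen} counted with multiplicity, and, as you note, it is compatible with your dominant-balance identification, which would merely report a repeated limiting mass. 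So the positivity bookkeeping that powers your reality argument cannot, even in principle, finish the job; an additional idea is required.

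The paper's additional idea is dynamical, and it yields all four assertions in one stroke, bypassing residues entirely. For case (ii) (global existence for $t\le 0$) the constants of motion of Lemma \ref{lem:constants} give two-sided bounds on all three masses for negative time; these bounds force $\int_{-\infty}^0 e^{x_i(s)-x_j(s)}\,\mathrm{d}s<\infty$ and hence $x_i-x_j\to-\infty$ for $i<j$ as $t\to-\infty$. Consequently the integrals in Lemmas \ref{lem:m1} and \ref{lem:m3} converge, every $m_i(t)$ has a finite limit $m_i(-\infty)$, and since each factor $(1-e^{x_i-x_j})^2\to 1$, the invariants $M_1,M_2,M_3$ become the elementary symmetric functions of the limits, i.e.\ $A(z)=\prod_i\bigl(1-m_i(-\infty)z\bigr)$. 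This single identity delivers reality, non-vanishing (since $M_3\neq0$), and the identification of eigenvalue inverses with asymptotic masses simultaneously. Simplicity is then an elementary monotonicity argument: $\dot m_2>0$ always, $\dot m_1<0$ for sufficiently negative $t$, so if $m_1(-\infty)=m_2(-\infty)$ then $m_1-m_2<0$ eventually; integrating $\dot x_1-\dot x_2$ and using that the cross terms are integrable contradicts $x_1-x_2\to-\infty$. Your reality proof (oscillating dominant exponential versus positivity) and your dominant-balance identification constitute a genuinely different, largely salvageable route, but they too have unverified points: Theorem \ref{thm:symmetry}(3) carries the hypothesis $B(-\lambda_i)\neq0$, which you never check (one needs the auxiliary fact, used in the proof of Theorem \ref{thm:collisions}, that $C(\lambda)\neq 0$ at an eigenvalue), and positivity does not force the \emph{extreme} residues $b_3^{(1)}(0)$, $\tilde b_1^{(1)}(0)$ to be nonzero --- it only constrains the sign of whichever term actually dominates. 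Those are patchable; the missing proof of simplicity is the substantive defect, and it is exactly where the paper switches from spectral representations to the asymptotics of the ODE flow.
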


\begin{proof}
We give a complete proof for the case (ii)($++-$).
First, we note that \[m_3(t)=m_3(0)\exp\left(-2\int\limits_t^0 [m_1(s)e^{x_1(s)-x_3(s)}+m_2(s)e^{x_2(s)-x_3(s)}]\mathrm{d}s\right)\neq0 .\] For negative $t$,
recalling that $m_1,m_2$ will remain positive, we obtain
$$\abs {m_3(t)}\leq \abs{m_3(0)}. $$
Since $m_1\leq m_1+m_2\leq M_1+\abs{m_3(t)}$ we derive an upper bound
on $m_1$, and thus on $m_2$, namely
$$m_i(t)\leq M_1+\abs{m_3(0)}, \qquad i=1,2.$$
We observe that $0<M_1+\abs{m_3(0)}$, otherwise $m_1=m_2=0$ for all times.

\noindent {\textrm{\bf Claim I}}
$$\frac{\abs{M_3}}{(M_1+\abs{m_3(0)})\abs{m_3(0)}}<m_i(t)<M_1+\abs{m_3(0)},
\quad t<0. $$
\begin{proof} (Claim I)
We only need to prove the lower bound.  To this end we estimate:
$$
\abs{M_3}<m_1(t)m_2(t)\abs{m_3(t)}<m_1(t)m_2(t)\abs{m_3(0)},
$$
and use the upper bound above on one of the factors $m_1$ or $m_2$.
\end{proof}
Replacing in the estimate for $\abs{M_3}$ both factors $m_1$ and $m_2$ with their upper bounds we extend the claim to the bound on $m_3$.

\noindent{\textrm{\bf Claim II}}
$$\frac{\abs{M_3}}{(M_1+\abs{m_3(0)})^2}<\abs{m_3(t)}<\abs{m_3(0)}.  $$
\noindent{\textrm{\bf Claim III}}
\[x_i(t)-x_j(t)\to-\infty,\quad (1\leq i<j\leq 3), \] when $t\to -\infty$.
\begin{proof} (Claim III)
The following estimate holds:
\begin{equation*}
\begin{split}
&\frac{\abs{M_3}}{(M_1+\abs{m_3(0)})^2\abs{m_3(0)}}\leq \\
&\exp{\big(-2 \int_{-\infty}^0\big[m_1(s)e^{x_1(s)-x_3(s)}+m_2(s)e^{x_2(s)-x_3(s)}\big]\mathrm{d}s\big)}
\end{split}
\end{equation*}
and thus
$$
\int_{-\infty}^0\big[m_1(s)e^{x_1(s)-x_3(s)}+m_2(s)e^{x_2(s)-x_3(s)}\big]\mathrm{d}s< \infty,
$$
which, in view of the boundedness of $m_1$ and $m_2$, implies \[\int\limits_{-\infty}^0e^{x_i(s)-x_3(s)}\mathrm{d}s<+\infty,\quad i=1,2. \]
In addition, direct estimates on equations \eqref{eq:3peakons} using the bounds on $m_1,m_2,m_3$, show that the velocities are bounded, which means the derivative of the integrand $e^{x_i(s)-x_3(s)}\,(i=1,2)$ is bounded. Therefore \[\lim_{s\to-\infty}e^{x_i(s)-x_3(s)}=0,\quad (i=1,2),\] which is equivalent to \[\lim_{t\to-\infty}x_i(t)-x_3(t)=-\infty,\quad (i=1,2). \]

Now we turn to the ODE for $m_1$ \[m_1(t)=m_1(0)\exp\left(-2\int\limits_t^0 [m_2(s)e^{x_1(s)-x_2(s)}+m_3(s)e^{x_1(s)-x_3(s)}]\mathrm{d}s\right)>0. \]
Since $m_1$ is, for negative $t$, bounded from above and from below away
from $0$, the integral \[\int\limits_{-\infty} ^0 m_2(s)e^{x_1(s)-x_2(s)}\mathrm{d}s<+\infty, \]
and $\int\limits_{-\infty} ^0 e^{x_1(s)-x_2(s)}\mathrm{d}s<+\infty$.
Repeating {\it verbatim} the arguments from the previous case, we get \[\lim_{t\to-\infty}x_1(t)-x_2(t)=-\infty.  \quad \]
\end{proof}

Now, since the improper integrals appearing in the formulas for $m_1$ and $m_3$ exist, we can take the limit $t\to -\infty$.  Let us denote those
  limits by $m_1(-\infty)$, $m_3(-\infty)$ respectively.  Using $M_1$ we conclude
  that $m_2$ also has a limit, say, $m_2(-\infty)$.  The characteristic polynomial
  $A(z)$ (see \eqref{eq:charpoly}) reads:
  \begin{equation*}
  \begin{split}
  &A(z)=(1-\frac{z}{\lambda_1})(1-\frac{z}{\lambda_2})(1-\frac{z}{\lambda_3})\\
  &=(1-m_1(-\infty)z)(1-m_2(-\infty)z)(1-m_3(-\infty)z).
  \end{split}
  \end{equation*}

 \noindent {\textrm{\bf Claim IV}}
 $\lim_{t\to -\infty}x_1=\lim_{t\to -\infty}x_2=-\lim_{t\to -\infty}x_3=-\infty.$
 \begin{proof} (Claim IV)
 We prove first the claim for $x_1$.  The right hand side in the equation for $\dot x_1$ (see \eqref{eq:3peakons}) reads $m_1+m_2e^{x_1-x_2}+m_3e^{x_1-x_3}$.
 When $t\to -\infty$ the second and third terms go to $0$.  The limit $m_1(-\infty)>0$ because of the lower bound on $m_1$.  Hence there exists a constant $
 \alpha >0$ and another constant $t^*<0$ such that
 \begin{equation*}
0<\alpha <\dot x_1, \qquad \text{for all } t\leq t^*.
\end{equation*}
 Integrating this inequality from (negative) $t$ to $t^*$ we obtain
 $x_1(t)\leq \alpha t +C$, where $C$ is a constant, which proves that $x_1\to -\infty $
 as $t\to -\infty$. The same argument works for $x_2$.  Since $m_3(-\infty)<0$
 we get an opposite estimate for $x_3$, namely, there exists a constant $\beta<0$
 and another constant $t^{**}<0$ such that
 \begin{equation*}
\dot x_3<\beta<0, \qquad \text{for all } t\leq t^{**}
\end{equation*}
which, upon integration, yields
$\beta t+D \leq x_3(t)$, forcing $x_3(t) \to +\infty$ when $t\to -\infty$.
 \end{proof}
  Since $m_3(-\infty)<0$, to prove simplicity, we need to show that
  the remaining two positive limits $m_1(-\infty)$ and $m_2(-\infty)$ are
  distinct.
  We adapt the proof of a similar statement in \cite{ls-cubicstring}.
  First, we observe that in view of \eqref{eq:3peakons} $\dot m_2>0$, hence
  $m_2(t)$ is increasing.  The right hand side of the equation for $\dot m_1$
  reads (after dividing by $2m_1$) $-m_2e^{x_1-x_2}-m_3e^{x_1-x_3}=
  e^{x_1-x_2}(-m_2-m_3 e^{x_2-x_3})$.  Since the second term goes to $0$ as
  $t\to -\infty$ , and $m_1$ is bounded away from $0$, we obtain that there exists $t^*<0$ such that:
  \begin{equation*}
\dot m_1<0, \qquad \text{ for all } t<t^*,
\end{equation*}
So $m_1$ is decreasing for $t$ sufficiently large and negative.
Suppose, to derive a contradiction, that $m_1(-\infty)=m_2(-\infty)$ then
\begin{equation*}
m_1(t)-m_2(t)<0, \quad \text{ for all } t<t^*.
\end{equation*}
On the other hand,
\begin{equation*}
\int_t^0 (\dot x_1-\dot x_2)\, d\, \tau=x_1(0)-x_2(0)-(x_1(t)-x_2(t))=\int_t^0(m_1-m_2)d\, \tau+B(t),
\end{equation*}
where $B(t)$ has a finite limit as $t\to -\infty$.  The left hand side
diverges to $+\infty$ as $t\to -\infty$, while the right hand side can only
diverge to $-\infty$ based on the inequality above.  This contradiction shows that $m_1(-\infty)>m_2(-\infty)$.

The proof for the other three cases is analogous.
\end{proof}

\begin{remark}
It is helpful to have an intuitive understanding of the above theorems.
The emerging picture is this: if one has a swarm of peakons colliding
with a swarm of antipeakons then the system essentially behaves as if it
were
a peakon-antipeakon pair.  Thus the system has the following characteristics:
\begin{enumerate}
\item peakons and antipeakons are asympotically (in an appropriate time direction) free with asymptotic velocities $\dot x_j=\frac{1}{\lambda_j}$ which are distinct
\item peakons, antipeakons separate, that is $x_i-x_j \to -\infty$ for $i<j$.
\end{enumerate}
\end{remark}
There are only two cases left: $(-+-)$ and $(+-+)$.  In view of the reflection symmetry
(see Lemma \ref{lem:symmetry}), it suffices to analyze only one of them.  We
choose $(-+-)$.
\begin{theorem} \label{thm:confinement} Suppose the mass signature is $(-+-)$, then there exists a positive time $t_c$ such that $x_1(t_c)<x_2(t_c)=x_3(t_c)$ and a negative time $t_c^*$
such that $x_1(t_c^*)=x_2(t_c^*)<x_3(t_c^*)$. \end{theorem}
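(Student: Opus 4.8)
The plan is to prove the two assertions---the positive-time collision $x_2(t_c)=x_3(t_c)$ and the negative-time collision $x_1(t_c^*)=x_2(t_c^*)$---by mirror-image arguments, so I describe the positive-time half in detail and indicate that the other follows by interchanging the roles of the indices $1$ and $3$ and of past and future. For the positive-time claim I must exhibit a $t_c>0$ at which exactly the pair $\{x_2,x_3\}$ merges while $x_1$ stays strictly to its left.

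First I would determine which collisions are admissible in each time direction from the explicit Bernoulli solutions \eqref{eq:m1} and \eqref{eq:m3}, using only the sign data $m_1(0)<0$, $m_3(0)<0$ of the $(-+-)$ signature. For $t>0$ both terms in the bracket of \eqref{eq:m1} are negative (the first because $m_1(0)<0$, the second because $\int_0^t e^{-2x_1}\,d\tau>0$), whence $\frac1{m_1(t)}<0$; so $m_1$ cannot diverge and, by Lemma \ref{lem:analm1m3}(3), no positive-time collision of the type $\frac1{m_1}=0$ is possible. Thus the only positive-time collision permitted is $\frac1{m_3(t_c)}=0$. The same computation with $1$ and $3$ interchanged shows that for $t<0$ the bracket in \eqref{eq:m3} is negative, so $\frac1{m_3(t)}<0$ and the only admissible negative-time collision is $\frac1{m_1(t_c^*)}=0$.

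Next I would guarantee that these collisions actually occur and then read off the colliding positions. By Theorem \ref{thm:collisions}(3) the signature $(-+-)$ produces a collision at some finite positive and some finite negative time; combined with the previous paragraph, the positive one must be the event $\frac1{m_3}=0$ and the negative one the event $\frac1{m_1}=0$. Since $\frac1{m_3(t_c)}=0$ forces $m_3$ to diverge, Corollary \ref{cor:notriplemassdiv} (masses diverge only in the pairs $\{m_1,m_2\}$ or $\{m_2,m_3\}$) identifies the diverging pair as $\{m_2,m_3\}$, so $x_2(t_c)=x_3(t_c)$, while Corollary \ref{cor:notriplecol} excludes a simultaneous triple collision and keeps $x_1(t_c)<x_2(t_c)$. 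Symmetrically, at $t_c^*$ the diverging pair is $\{m_1,m_2\}$, giving $x_1(t_c^*)=x_2(t_c^*)<x_3(t_c^*)$.

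The step I expect to be delicate is the guarantee of existence in the anti-resonant situation not covered by Theorem \ref{thm:collisions}(3). Here I would argue directly from the representation $e^{x_3(t)}=\sum_j b_j^{(1)}(0)e^{t/\lambda_j}$ of \eqref{eq:x3}. For $(-+-)$ one has $N^+=1$, so by Lemma \ref{lem:signeigen} and Corollary \ref{cor:nopureimag} exactly one eigenvalue $\lambda_3$ has positive real part; being the unique such root it is necessarily a simple root, cannot lie in a complex-conjugate pair, and is therefore real and positive. If its leading coefficient $b_3^{(1)}(0)$ is nonzero then $e^{x_3(t)}\sim b_3^{(1)}(0)e^{t/\lambda_3}\to+\infty$, so $\int_0^t e^{2x_3}\,d\tau$ diverges and the bracket in \eqref{eq:m3} must change sign at a finite $t_c>0$, forcing the collision. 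The genuinely delicate case is the anti-resonance $\lambda_j=-\lambda_3$, where \eqref{eq:bbtilde} permits $B(\lambda_3)=0$ and hence $b_3^{(1)}(0)=0$; there I would fall back on the Bernoulli integral itself, showing that global positive-time existence would force $\int_0^\infty e^{2x_3}\,d\tau<\infty$, hence $x_3\to-\infty$ and all three positions to $-\infty$, and then derive a contradiction from conservation of $M_3$ together with the absence of mass blow-up before a collision.
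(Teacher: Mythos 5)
Your core argument is correct, and it reaches the conclusion by a route that differs from the paper's in exactly one step. Both you and the paper take the existence of a positive-time and a negative-time collision from Theorem \ref{thm:collisions}(3), and both then identify the colliding pair via Corollaries \ref{cor:notriplecol} and \ref{cor:notriplemassdiv}; the difference is how the wrong pair is eliminated. The paper supposes $x_1,x_2$ collide at $t_c>0$, computes from \eqref{eq:3peakons} that $\frac{\mathrm{d}}{\mathrm{d}t}(x_1-x_2)=(1-e^{x_1-x_2})(m_1-m_2-m_3e^{x_2-x_3})<0$ near $t_c$ (sign preservation forces $m_1\to-\infty$, $m_2\to+\infty$ there), contradicting $x_1-x_2\to 0^-$, and handles the negative-time case by the analogous monotonicity of $x_2-x_3$. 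You instead read off from the signs of the Bernoulli brackets in \eqref{eq:m1} and \eqref{eq:m3} that $\frac1{m_1}<0$ for all $t>0$ and $\frac1{m_3}<0$ for all $t<0$, so that by Lemma \ref{lem:analm1m3}(3) the only admissible events are $\frac1{m_3}=0$ at positive time and $\frac1{m_1}=0$ at negative time. This is the same device the paper itself uses for the first half of Theorem \ref{thm:collisions}, and it is arguably cleaner here because it avoids all near-collision asymptotics; both eliminations are valid.

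Where you go beyond the paper — the anti-resonant case — your patch has a genuine gap. You are right that Theorem \ref{thm:collisions}(3) carries a no-anti-resonance hypothesis that Theorem \ref{thm:confinement} does not state, and that the paper's proof silently inherits this restriction; so flagging it puts you ahead, not behind. Your observations that the unique eigenvalue with positive real part must be real, positive and simple (Lemma \ref{lem:signeigen} with Corollary \ref{cor:nopureimag}), and that $b_3^{(1)}(0)\neq 0$ forces divergence of $\int_0^t e^{2x_3}\,\mathrm{d}\tau$ and hence a sign change of the bracket in \eqref{eq:m3}, are correct (a small point: $b_3^{(1)}(0)$ could a priori be negative, in which case the continuation $\sum_j b_j^{(1)}(t)$ vanishes in finite time and Lemma \ref{lem:analx3} again yields a prior collision; or simply square, as the paper does, so the leading coefficient is $\bigl(b_3^{(1)}(0)\bigr)^2>0$). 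But the fallback for $b_3^{(1)}(0)=0$ fails at its last step. From global existence and $\int_0^\infty e^{2x_3}\,\mathrm{d}\tau<\infty$ you do get $x_3\to-\infty$ along a sequence of times, hence all three positions drift left; however, this is \emph{not} in conflict with conservation of $M_3$ and boundedness of the masses: all three particles translating to $-\infty$ with the gaps $x_2-x_1$ and $x_3-x_2$ bounded away from zero and all masses bounded violates none of $M_1$, $M_2$, $M_3$. Ruling this scenario out requires an actual dynamical argument — for instance, showing that the particles must separate, so the masses converge to asymptotic velocities among the $1/\lambda_j$, and that $x_2$, squeezed between $x_1$ and $x_3$, would then need a negative asymptotic velocity, contradicting $m_2>0$ (this is the style of argument the paper uses for the signatures (ii), (iv), (v), (vii)) — or proving directly that $b_3^{(1)}(0)$ cannot vanish for a $(-+-)$ configuration. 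As sketched, the claimed contradiction does not follow.
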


\begin{proof}
We only need to prove that $m_1,m_2$ will never collide at a positive time. If not,  assume that there exists a positive $t_c$ such that $m_1\to-\infty,m_2\to+\infty$ and $m_3$ remains bounded when $t\to t_c$. Then by equation \eqref{eq:3peakons} \[\frac{\mathrm{d}}{\mathrm{d}t}(x_1-x_2)=(1-e^{x_1-x_2})(m_1-m_2-m_3e^{x_2-x_3})<0, \] when $t$ is sufficiently close to $t_c$. This contradicts $x_1-x_2\to 0$.  Likewise, for negative times, we need
to eliminate a collision of $x_2$ and $x_3$.  However,
\begin{equation*}
\frac{d}{dt}(x_2-x_3)=(1-e^{x_2-x_3})(m_2-m_3)-m_1(e^{x_1-x_3}-e^{x_1-x_2})>0
\end{equation*}
for sufficiently close to the collision time $t_c^*<0$, again contradicting
that $x_2-x_3\to 0$ as $t \to t_c^*+$.  \end{proof}
\noindent Below we put the results of our investigation in a form of Table \ref{table:1}.  Speaking of the asymptotic behaviour we denote
by AF the system which is asymptotically free in both time directions, if it is
only in one, say in the direction of positive time, then we abbreviate it as AF+, etc.

\begin{table}[ht]
\caption{Correlations of masses, spectrum and asymptotics} 
\centering 
\begin{tabular}{c c c c} 
\hline\hline 
Mass signature & Spectrum& Asymptotic behaviour & Collisions\\ [0.5ex] 
\hline 
$+++$& $+++$&\small{AF}& none \\ 
$++-$& $++-$& \small{AF-}& $0<t_c$\\
$+-+$& $\lambda_1<0
<\textrm{Re}\,\lambda_2\leq \textrm{Re}\, \lambda_3 $ & \small{confined}& $t_c^*<0<t_c$\\
$+--$& $--+$&\small{AF-}&$0<t_c$ \\
$-++$& $-++$& \small{AF+}& $t_c^*<0$\\
$-+-$&$\textrm{Re} \,\lambda_1\leq \textrm{Re} \,\lambda_2<0<\lambda_3$&\small{confined}&$t_c^*<0<t_c$\\
$--+$&$--+$&\small{AF+}&$t_c^*<0$\\
$---$&$---$&\small{AF}&none\\
[1ex] 
\hline 
\end{tabular}
\label{table:1} 
\end{table}

We would like to conclude this section by discussing briefly the question of spectral data and in which sense the formulas
obtained in \cite{ls-cubicstring} can be used to produce multipeakon solutions.
We recall, in the notation of that paper,
\begin{equation}
    \label{eq:n-peakon-solution}
    x_{k'} = \log \frac{U_k}{V_{k-1}},
    \qquad
    m_{k'} =
    \frac{(U_k)^2 \, (V_{k-1})^2}{W_k W_{k-1}}
    \qquad
    (k=1,\dots,n),
  \end{equation}
  where $k'=n+1-k$, and $U_k$ and $V_k$ are certain rational functions of
  spectral data
  \begin{equation}
  \label{eq:positive-spectraldata}
  \mathcal{R} =
  \bigl\{ (\lambda,b)\in\R^{2n} \;:\; 0<\lambda_1<\cdots<\lambda_n, \; \text{all $b_i>0$} \bigr\}.
\end{equation}
  (see formulas (2.44) and (2.45) in \cite{ls-cubicstring} for definitions), while $W_j=U_jV_j-U_{j+1}V_{j-1}$.
  Clearly, for multipeakons,
  \begin{enumerate}
 \item the spectrum is no longer positive, simple, or even real,
 \item the residues $b_j$ can be negative and, in general, complex,
 \item the anti-resonance condition $\lambda_i+\lambda_j=0$
 renders the formulas not directly applicable
 \item once cannot extend the formulas beyond a collision point
 because there is no longer guarantee that $x_i<x_j$ for $i<j$.
 \end{enumerate}
\begin{figure}[heretp]
\begin{center}
\subfigure[DP mass signature $(++-)$; $m_2$ collides with $m_3$ at $t_c=0$.
Crossing of $x_2$ with $x_3$]{\label{fig:edge-a}\includegraphics*[viewport=60 60 600 600, width=8cm,height=8cm, scale=0.75]{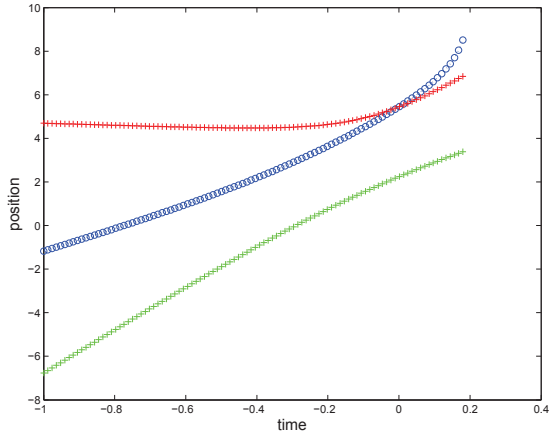}}\\
\subfigure[CH mass signature $(++-)$; $m_2$ collides with $m_3$ at $t_c=0$.
Collision of $m_2$ with $m_3$, followed by another collision of $m_2$ with $m_1$.  No crossing.  ]{\label{fig:edge-b}\includegraphics*[viewport=60 60 600 600, width=8cm,height=8cm]{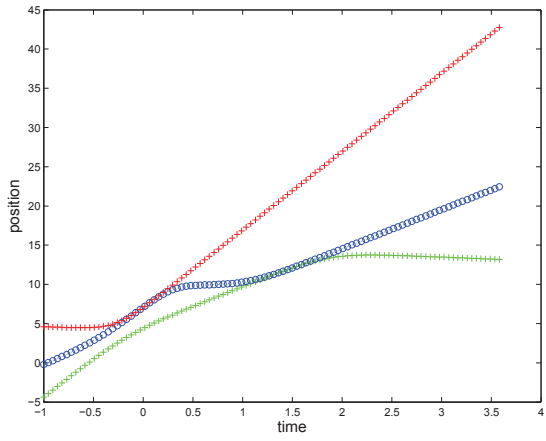}}
\end{center}
\caption{Comparison of DP and CH collisions}
\label{fig:2}
\end{figure}
\begin{example} Figure \ref{fig:edge-a} on page \pageref{fig:edge-a} illustrates how the formulas
would work for the mass signature $(++-)$.  A point from the spectral set  is chosen so that the
collision occurs at $t_c=0$.  The continuation beyond the collision point
would force new ordering $x_1<x_3<x_2$ which means that this is not the original multipeakon problem  given by equation \eqref{eq:b-peakon-ode} for $b=3, n=3$, even though
the solution still satisfies equation \eqref{eq:3peakons}, albeit in the wrong region.
This should be contrasted
with the behaviour of peakons at collision points for the CH equation as
illustrated by figure \ref{fig:edge-b} on page \pageref{fig:edge-b}.  The second particle
bounces between $m_3$ and $m_1$ and, consequently, no change of ordering is required.

\end{example}

\begin{example}
In this example we consider the mass signature $(+-+)$.
 Figure \ref{fig:4} on page \pageref{fig:4} illustrates how the formulas
\eqref{eq:n-peakon-solution} would work for the mass signature $(+-+)$.
In particular, as predicted by Theorem \ref{thm:confinement} (using the reflection
symmetry), there are two collision points, one for negative time, one for positive.
\begin{center}
\begin{figure}[here]
\includegraphics*[viewport=50 60 600 600, width=8cm,height=8cm]{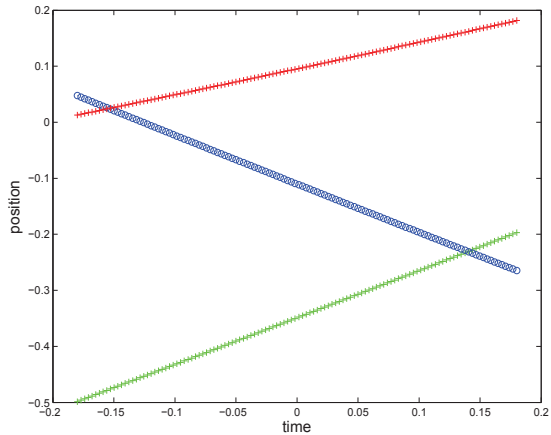}
\caption{DP mass signature $(+-+)$; $m_2$ collides with $m_3$ at $t_c^*<0$
while $m_2$ collides with $m_1$ for $t_c>0$.  Confined state.  }
\label{fig:4}
\end{figure}

\end{center}
\end{example}

\section{Acknowledgments}
The authors would like to thank H. Lundmark for numerous perceptive comments and suggestions for improvements.

This work was supported by National Natural Science
Funds of China \newline
 [NSFC10971155 to L.Z]; and National Research Science and Engineering Council of Canada [NSERC163953 to J.S].  Both authors would like to thank the Department of Mathematics and Statistics of the University of Saskatchewan for making the collaboration possible.


\begin{thebibliography}{References}

\bibitem{degasperis-holm-hone} Degasperis, Antonio and Holm, Darryl D. and Hone, Andrew N. W. A new integrable equation with peakon solutions[J]. Theoretical and Mathematical Physics. 2002, 133, 1463--1474.
\bibitem{holm-hone} Holm, Darryl D. and Hone, Andrew N. W. A class of equations with peakon and pulson solutions[J]. Journal of Nonlinear Mathematical Physics. 2005, 12(2), 46--62.
\bibitem{degasperis-procesi} Degasperis, Antonio and Procesi, Michela. Asymptotic integrability[A]. Symmetry and perturbation theory (Rome, 1998). World Scientific Publishing, River Edge, NJ, 1999, 23--37.
\bibitem{mikhailov-novikov-perturbative} Mikhailov, Alexander V. and Novikov, Vladimir S. Perturbative symmetry approach[J]. Journal of Physics A. 2002, 35(22), 4775--4790.
\bibitem{hone-wang-prolongation-algebras} Hone, Andrew N. W. and Wang, Jing Ping. Prolongation algebras and {H}amiltonian operators for peakon equations[J]. Inverse Problems. 2003,19(1), 129--145.
\bibitem{ivanov-integrability-test} Ivanov, Rossen. On the integrability of a class of nonlinear dispersive wave equations[J]. Journal of Nonlinear Mathematical Physics. 2005, 12(4), 462--468.
\bibitem{bhat-bzero} Bhat, Harish S. and Fetecau, Razvan C. A {H}amiltonian {R}egularization of the {B}urgers {E}quation[J]. Journal of Nonlinear Science. 2006, 16(6), 615--638.
\bibitem{lenells-misiolek-tiglay} Lenells, Jonatan and Misio{\l}ek, Gerard and Ti{\u{g}}lay, Feride. Integrable evolution equations on spaces of tensor densities       and their peakon solutions[J]. Communications in Mathematical Physics. 2010, 299(1), 129--161.
\bibitem{coclite-karlsen-DPwellposedness} Coclite, Giuseppe M. and Karlsen, Kenneth H. On the well-posedness of the {D}egasperis--{P}rocesi equation[J]. Journal of Functional Analysis. 2006, 233(1), 60--91.
\bibitem{coclite-karlsen-DPuniqueness} Coclite, Giuseppe M. and Karlsen, Kenneth H. On the uniqueness of discontinuous solutions to the {D}egasperis--{P}rocesi equation[J]. Journal of Differential Equations. 2007 34(1), 142--160.
\bibitem{lundmark-shockpeakons} Lundmark, Hans. Formation and dynamics of shock waves in the {D}egasperis--{P}rocesi equation[J]. Journal of Nonlinear Science. 2007,  17(3), 169--198.
\bibitem{degasperis-holm-hone2} Degasperis, A. and Holm, D. D. and Hone, A. N. W. Integrable and non-integrable equations with peakons[A]. Nonlinear physics: theory and experiment, {II} ({G}allipoli, 2002), 37--43.
\bibitem{bss-peakon-antipeakon} Beals, R. and Sattinger, D. H. and Szmigielski, J. Peakon-antipeakon interaction[J]. Journal of Nonlinear Mathematical Physics. 2001, 8(suppl.), 23--27.
\bibitem{bss-moment} Beals, R. and Sattinger, D. and Szmigielski, J. Multipeakons and the classical moment problem[J]. Advances in Mathematics, 2000, 154: 29--257.
\bibitem{mcKean-breakdown} McKean, Henry P. Breakdown of the {C}amassa-{H}olm equation[J]. Communications on Pure and Applied Mathematics. 2004, 57(3), 416--418.
\bibitem{constantin-escher} Constantin, Adrian and Escher, Joachim. Wave breaking for nonlinear nonlocal shallow water equations[J]. Acta Mathematica. 1998, 181(2), 229--243.
\bibitem{liu1} Liu, Yue. Wave breaking phenomena and stability of peakons for the {D}egasperis-{P}rocesi equation[A]. Trends in partial differential equations, Adv. Lect. Math. (ALM) 10, 265--293¡£ Int. Press, Somerville, MA, 2010.
\bibitem{liu2} Lin, Zhiwu and Liu, Yue. Stability of peakons for the {D}egasperis-{P}rocesi equation[J]. Communications on Pure and Applied Mathematics. 2009, 62(1), 125--146.
\bibitem{liu-escher} Escher, Joachim and Liu, Yue and Yin, Zhaoyang. Global weak solutions and blow-up structure for the {D}egasperis-{P}rocesi equation[J]. Journal of Functional Analysis. 2006, 241(2), 457--485.
\bibitem{bss-stieltjes} Beals, R. and Sattinger, D. H. and Szmigielski, J. Multi-peakons and a theorem of {S}tieltjes[J]. Inverse Problems, 1999, 15(1), L1--L4.
\bibitem{beals-sattinger-szmigielski-toda} Beals, Richard and Sattinger, David H. and Szmigielski, Jacek. Peakons, strings, and the finite {T}oda lattice[J]. Communications on Pure and Applied Mathematics. 2001, 54(1), 91--106.
\bibitem{lundmark-szmigielski-DPshort} Lundmark, Hans and Szmigielski, Jacek. Multi-peakon solutions of the {D}egasperis--{P}rocesi equation[J]. Inverse Problems. 2003, 19, 1241--1245.
\bibitem{ls-cubicstring} Lundmark, Hans and Szmigielski, Jacek. Degasperis--{P}rocesi peakons and the discrete cubic string[J]. IMRP. International Mathematics Research Papers. 2005, 2, 53--116.
\bibitem{gantmacher-krein} Gantmacher, Felix R. and Krein, Mark G. Oscillation matrices and kernels and small vibrations of mechanical systems[M]. AMS Chelsea Publishing, Providence, RI, Revised, 2002. Translation based on the 1941 Russian original, edited and with a preface by Alex Eremenko.
\bibitem{bss-string} Beals, Richard and Sattinger, David H. and Szmigielski, Jacek. The string density problem and the {C}amassa-{H}olm equation[J]. Philosophical Transactions of the Royal Society of London, Series A. Mathematical, Physical and Engineering Sciences. 2007, 365(1858), 2299--2312.
\end{thebibliography}
\end{document}